\definecolor{green}{rgb}{0,0.8,0} 
\newtheorem{theorem}{Theorem}[section]
\newtheorem{corollary}[theorem]{Corollary}
\newtheorem{proposition}[theorem]{Proposition}
\newtheorem{conjecture}[theorem]{Conjecture}
\theoremstyle{definition}
\theoremstyle{remark}
\newtheorem{remark}[theorem]{Remark}
\numberwithin{equation}{section}
\newcommand{\abs}[1]{\vert#1\vert}
\newcommand{\set}[1]{\{#1\}}
\newcommand{\ep}{\epsilon}
\def\beaa{\begin{eqnarray*}}
\def\eeaa{\end{eqnarray*}}
\def\bea{\begin{eqnarray}}
\def\eea{\end{eqnarray}}
\def\be{\begin{equation}}
\def\ee{\end{equation}}
\newcommand{\ud}{d}
\newcommand{\rd}{\partial}
\newcommand{\nb}{\nabla}
\newcommand{\bb}{\Big}
\newcommand{\alp}{\alpha}
\newcommand{\bt}{\beta}
\newcommand{\gmm}{\gamma}
\newcommand{\Gmm}{\Gamma}
\newcommand{\dlt}{\delta}
\newcommand{\eps}{\epsilon}
\newcommand{\kpp}{\kappa}
\newcommand{\Sgm}{\Sigma}
\newcommand{\tht}{\theta}
\newcommand{\omg}{\omega}
\newcommand{\Omg}{\Omega}
\newcommand{\bbN}{\mathbb N}
\newcommand{\bbR}{\mathbb R}
\newcommand{\bbS}{\mathbb S}
\newcommand{\calC}{\mathcal C}
\newcommand{\calD}{\mathcal D}
\newcommand{\calH}{\mathcal H}
\newcommand{\calI}{\mathcal I}
\newcommand{\uC}{\underline{C}}
\newcommand{\lapp}{{\Delta} \mkern-13mu /\,}
\newcommand{\f}{\frac}
\newcommand{\CH}{\calC \calH^{+}}		
\newcommand{\EH}{\calH^{+}}
\newcommand{\Ufree}{U_{0}}
\newcommand{\Ufixed}{\overline{U}_{0}}
\begin{document}

\title[]{Proof of linear instability of the Reissner-Nordstr\"om Cauchy horizon under scalar perturbations}
\author{Jonathan Luk}
\address{Department of Pure Mathematics and Mathematical Statistics, Cambridge University, Cambridge CB3 0WB, UK}
\email{jluk@dpmms.cam.ac.uk}

\author{Sung-Jin Oh}%
\address{Department of Mathematics, UC Berkeley, Berkeley, CA 94720, USA}%
\email{sjoh@math.berkeley.edu}%


\begin{abstract}
It has long been suggested that solutions to linear scalar wave equation
$$\Box_g\phi=0$$
on a fixed subextremal Reissner-Nordstr\"om spacetime with non-vanishing charge are generically singular at the Cauchy horizon. We prove that generic smooth and compactly supported initial data on a Cauchy hypersurface indeed give rise to solutions with infinite nondegenerate energy near the Cauchy horizon in the interior of the black hole. In particular, the solution generically does not belong to $W^{1,2}_{loc}$. This instability is related to the celebrated blue shift effect in the interior of the black hole. The problem is motivated by the strong cosmic censorship conjecture and it is expected that for the full nonlinear Einstein-Maxwell system, this instability leads to a singular Cauchy horizon for generic small perturbations of Reissner-Nordstr\"om spacetime. Moreover, in addition to the instability result, we also show as a consequence of the proof that Price's law decay is generically sharp along the event horizon.
\end{abstract}

\maketitle

\section{Introduction}
In this paper, we consider the linear wave equation 
\begin{equation}\label{wave.eqn}
\Box_g\phi=0
\end{equation}
on a subextremal Reissner-Nordstr\"om spacetime.  Here, $\Box_g$ is the Laplace-Beltrami operator associated to the metric $g$ of a subextremal Reissner-Nordstr\"om spacetime. In a local coordinate system, $g$ is given by
$$g=-(1-\f {2M} r+\f{e^2}{r^2})dt^2+(1-\f {2M} r+\f{e^2}{r^2})^{-1} dr^2+r^2 d\sigma_{\mathbb S^2},$$
where $d\sigma_{\mathbb S^2}$ denotes the standard metric on the $2$-sphere with radius $1$. Throughout this paper, we take $e$ and $M$ to be in the range
$$0<|e|<M,$$
i.e., the spacetime is subextremal with non-vanishing charge. 

It is known that the solutions to \eqref{wave.eqn} with sufficiently regular initial data decay with a polynomial rate\footnote{The polynomial decay is with respect to the $u$, $v$ coordinates defined in Section \ref{sec.geometry}.} in the exterior region of the spacetime. Moreover, it is shown recently by Franzen \cite{Fra} that the solution remains uniformly bounded everywhere in the spacetime including in the black hole region and up to the Cauchy horizon. We refer the readers to Sections \ref{sec.exterior} and \ref{SCC} for a further discussion of these results.

On the other hand, it is expected that generically the derivative of the solution $\phi$ with respect to a regular vector field transversal to the Cauchy horizon is singular. Our main result in this paper shows that this is indeed the case. Let the initial hypersurface $\Sgm_{0}$ be a complete $2$-ended asymptotically flat Cauchy hypersurface for the maximal globally hyperbolic development of a subextremal Reissner-Nordstr\"om spacetime with non-vanishing charge as depicted in the Penrose diagram in Figure~\ref{fig:main.thm} (see Section~\ref{sec.geometry} below for the notation). The following is the first version of the main theorem:
\begin{theorem}[Main theorem, first version]\label{main.thm}
Generic smooth and compactly supported initial data to \eqref{wave.eqn} on $\Sgm_{0}$ give rise to solutions that are not in $W^{1,2}_{loc}$ in a neighborhood of any point on the future Cauchy horizon $\CH$.
\end{theorem}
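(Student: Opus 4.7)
The plan is to reduce the statement to a precise lower bound for the transversal derivative of $\phi$ along null generators of $\CH$, and then show generic data produce such a lower bound via the interplay of Price's law decay on $\EH$ and the blue-shift instability in the interior. Throughout I work in double null coordinates $(u,v)$ in which the Cauchy horizon is the limit $v \to \infty$ at finite $u$; the non-degenerate regular null vector field transversal to $\CH$ is essentially $e^{\kpp_{-} v}\rd_{v}$, where $\kpp_{-}$ is the (positive) surface gravity of $\CH$. The $W^{1,2}_{loc}$-membership near $\CH$ is therefore equivalent to local $L^{2}$-integrability of $e^{\kpp_{-} v}\rd_{v}\phi$ in the regular coordinates, and hence it suffices to exhibit a lower bound $\abs{\rd_{v}\phi(u,v)} \gtrsim c(u) v^{-p}$ for $v$ large and some fixed $p<\infty$, for a dense open (or at least generic in the Baire sense) set of initial data.

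\textbf{Step 1: Reduction to the spherically symmetric part.} I project the solution onto spherical harmonics. Higher $\ell$-modes decay strictly faster along $\EH$ (with Price's law exponent strengthened by $2\ell$), while the blue-shift amplification factor is fixed. Thus the singular behavior is dominated by the $\ell=0$ mode, and I can assume $\phi$ is spherically symmetric, which turns the equation into a $1{+}1$-dimensional wave equation with a potential depending on $r(u,v)$.

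\textbf{Step 2: Sharp Price's law on $\EH$ as a generic lower bound.} Along $\EH$, known Price's law upper bounds give $\abs{\rd_{v}\phi\vert_{\EH}} \ls v^{-3}$ in the spherically symmetric case. I would establish a matching lower bound $\rd_{v}\phi\vert_{\EH}(v) = c\, v^{-3} + o(v^{-3})$ with $c \neq 0$ for a generic choice of initial data, obtained by isolating the coefficient $c$ as a continuous linear functional on initial data space and noting that the kernel of a nontrivial linear functional is a closed, nowhere-dense subspace. The nontriviality of this functional is a property of a single explicit example (e.g.\ a bump function with generic spherical average), which reduces the genericity statement to an existence statement. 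This is where the careful scattering/late-time-tails analysis of Price-type decay in the exterior enters.

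\textbf{Step 3: Propagation into the interior.} Using the transport equation for $\rd_{v}\phi$ along constant-$u$ curves, together with uniform pointwise bounds coming from Franzen's boundedness result, I would transport the lower bound on $\rd_{v}\phi\vert_{\EH}$ to every level set of $u$ in the interior. The main task is to show the correction term (an integrated expression in $\phi$ and $\rd_{u}\phi$, multiplied by geometric coefficients that are bounded across the black hole interior) is subleading compared to $v^{-3}$ for large $v$. This requires upper bounds on $\phi$ and $\rd_{u}\phi$ in the interior that decay sufficiently fast in $v$, which I would derive by integrating the wave equation from $\EH$ using the known decay of initial data on $\EH$ and energy estimates that exploit the sign of the surface gravity terms.

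\textbf{Step 4: Blow-up of the non-degenerate energy.} Once $\abs{\rd_{v}\phi(u,v)} \gtrsim v^{-3}$ is established uniformly for $u$ in any interior interval, the non-degenerate $L^{2}$-norm of $\rd_{v}\phi$ on any constant-$u$ slice near $\CH$ is bounded below by
\be
\int^{\infty} e^{2\kpp_{-} v} v^{-6}\, \ud v = \infty,
\ee
which precludes $\phi \in W^{1,2}_{loc}$ in any neighborhood of a point of $\CH$. Combining Steps 2--4 yields the theorem on the complement of a closed, nowhere-dense set of initial data, which is a valid notion of genericity.

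\textbf{Main obstacle.} Steps 1 and 4 are conceptually clean; Step 2 is essentially the scattering theory for the Reissner-Nordstr\"om exterior, for which strong recent literature exists. The \emph{hard part} is Step 3: transporting a \emph{sharp} lower bound from $\EH$ into the black hole interior, all the way to the Cauchy horizon, without losing the leading order. The danger is that the forcing terms in the transport equation (arising from $\rd_{u}\rd_{v}\phi = -\tfrac{2\rd_{u}r \rd_{v}r}{r}\phi + \cdots$) could in principle cancel the leading contribution coming from the data on $\EH$. Ruling this out requires a delicate two-sided control of $\phi$ and its first derivatives in the interior, combining pointwise bounds, energy estimates adapted to the red-shift region $r \leq r_{+}$ down to the bifurcation and then the blue-shift region near $\CH$, and a careful tracking of the constants. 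This is the technical heart of the proof.
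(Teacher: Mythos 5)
Your overall architecture (restrict to the $\ell=0$ mode, identify a continuous linear functional of the data whose non-vanishing forces blow-up, and get genericity from the fact that the kernel of a nontrivial functional has codimension one) matches the paper. But Steps 2 and 3 contain a genuine gap, and it is precisely the gap the paper is designed to avoid. In Step 2 you assert that a matching \emph{pointwise} lower bound $\rd_v\phi\restriction_{\EH}(v) = c\,v^{-3} + o(v^{-3})$ with generically $c\neq 0$ follows from "the scattering/late-time-tails analysis of Price-type decay in the exterior." The existing literature you are invoking (Dafermos--Rodnianski, Tataru, Metcalfe--Tataru--Tohaneanu) gives only \emph{upper} bounds; no pointwise lower bound of this type was known, and the paper states explicitly (in the discussion surrounding Theorem \ref{Dafermos.thm}) that it is not even known whether a single regular Cauchy data set produces a solution satisfying the two-sided bound \eqref{dafermos.horizon.assumption}. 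Your Step 3 — propagating the pointwise lower bound from $\EH$ to $\CH$ — is essentially Dafermos's conditional theorem, so the entire weight of your argument rests on Step 2, which is an open problem, not a citation.

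The paper's route is different in exactly this respect. The genericity functional $\mathfrak{L}$ is read off at \emph{null infinity} (a limit of $r^3\rd_v(r\phi)$ on an outgoing cone minus an integral of the radiation field), not from the leading coefficient of the tail on $\EH$. From $\mathfrak{L}\neq 0$ the paper derives only a pointwise lower bound on a far-away constant-$r$ curve $\set{r=R}$ (Theorem \ref{thm.decay.R}), and then runs a \emph{contradiction} argument: if the nondegenerate energy at $\CH$ were finite, a contrapositive interior estimate (Theorem \ref{final.blow.up.step}, built on energy/red-shift-type multiplier estimates in the spacelike direction toward $i^+$) would force a finite weighted $L^2$ flux $\int_{\EH} v^{7+\ep}(\rd_v\phi)^2<\infty$ on the event horizon, which in turn (Theorem \ref{horizon.lower.bd}) yields upper bounds on $\set{r=R}$ strong enough to contradict the lower bound from $\mathfrak{L}\neq 0$. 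The only lower-bound information ever obtained on $\EH$ is the $L^2$-averaged statement $\int_{\EH} v^{7+\ep}(\rd_v\phi)^2=\infty$ (Corollary \ref{cor.horizon.lower.bd}), which is much weaker than your pointwise asymptotic but, as the paper shows, already suffices for $W^{1,2}_{loc}$ blow-up. To repair your proposal you would either need to prove the precise late-time asymptotics on $\EH$ (a substantial separate result) or restructure Steps 2--3 along these averaged, contrapositive lines.
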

\begin{figure}[h]
\begin{center}
\def\svgwidth{200px}
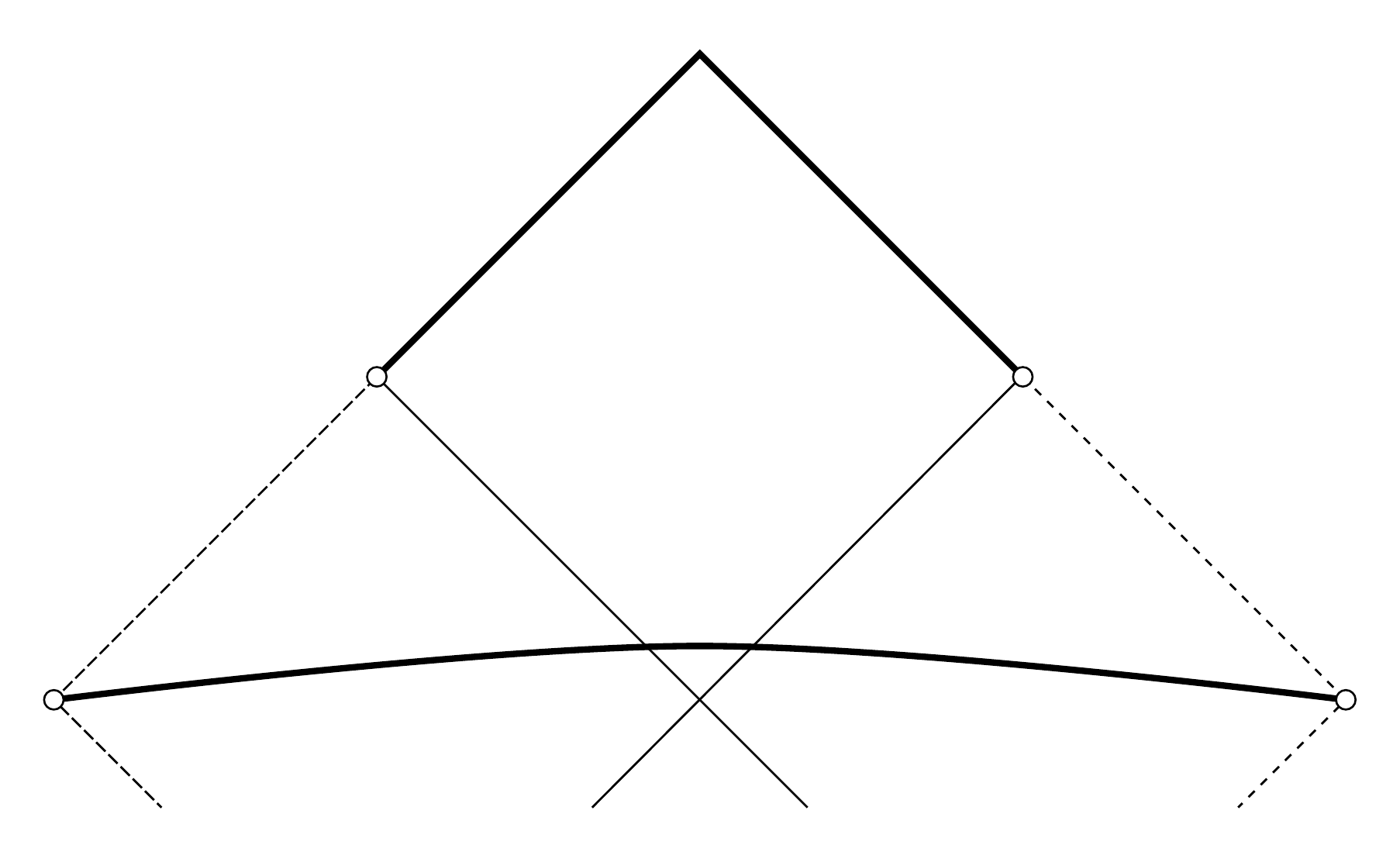 
\caption{} \label{fig:main.thm}
\end{center}
\end{figure}

This theorem is motivated by the celebrated strong cosmic censorship conjecture in general relativity (see Section \ref{SCC}). A particular consequence of this conjecture is that small perturbations to the Reissner-Nordstr\"om spacetime for the \emph{nonlinear} Einstein-Maxwell system leads to a Cauchy horizon that is singular\footnote{In fact, a priori one may even conjecture that a stronger singularity forms and the spacetime does not contain a Cauchy horizon at all. However, the recent work of Dafermos-Luk \cite{DL} suggests that this is not the case (see discussions in Section \ref{SCC}).}. A more precise version of the conjecture in this setting has been proposed by Christodoulou \cite{Chr}, suggesting that generically the metric is inextendible in $W^{1,2}_{loc}$ beyond the Cauchy horizon. While this conjecture remains open, our present paper initiates the study of the instability mechanism by proving a generic blow up result in $W^{1,2}_{loc}$ for the linear wave equation, which can be viewed as a ``poor man's'' version of the linearization for the Einstein-Maxwell system.

Theorem \ref{main.thm} is proved via considering the spherically symmetric part of the solution. We show that given any regular solution, the spherically symmetric part of the data can be perturbed to guarantee that the new solution is singular. This is sufficient to guarantee that the set of general (i.e., not necessarily spherically symmetric) data giving rise to regular solutions at the Cauchy horizon has co-dimension at least $1$.  

The instability result for the spherically symmetric part of the solution is proved via identifying a condition near null infinity (see \eqref{L.condition} below) which guarantees that the solution does not belong to $W^{1,2}_{loc}$ near the Cauchy horizon. The proof proceeds via a contradiction argument in which we show that if the solution is regular in the interior of the black hole, then we can prove upper bounds for the solution that is too strong and will contradict a lower bound for the solution that follows from the condition \eqref{L.condition}.

As a consequence of the proof of Theorem \ref{main.thm}, we also show that the Price's law bound for $\rd_t\phi$ along the event horizon is sharp. We summarize the result here and refer the readers to Corollary \ref{linear.Price.law.sharp} for a more precise formulation:
\begin{theorem}
The Price's law decay for $\rd_t\phi$ along the event horizon is generically sharp.
\end{theorem}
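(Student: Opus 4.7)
The plan is to read off the sharpness directly from the mechanism that drives Theorem~\ref{main.thm}: identify the constant $L$ in the genericity condition (L.condition), which measures the non-degeneracy of the asymptotic radiation field of the spherically symmetric part of $\phi$ at null infinity, with a nonzero multiple of the leading coefficient of the Price's law tail for $\partial_{t}\phi$ along $\EH$. The Price's law upper bound $|\partial_{t}\phi|(v)\lesssim v^{-p}$ along $\EH$ in the subextremal Reissner--Nordstr\"om exterior is input from earlier work; sharpness amounts to producing a matching lower bound for generic data.

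After reducing to the spherically symmetric part of $\phi$ (as in the proof of Theorem~\ref{main.thm}, since non-sharpness would have to persist under spherically symmetric perturbations), I would establish a lower bound of the form $|\partial_{t}\phi|(v)\gtrsim |L|\,v^{-p}$ along $\EH$. The underlying point is that the late-time asymptotics on $\EH$ and the singular behavior at $\CH$ are both governed by the same asymptotic datum on $\mathcal{I}^{+}$: in the spherically symmetric case this is captured by the leading-order behavior of the radiation field $r\phi$ along $\mathcal{I}^{+}$, which is precisely what (L.condition) asserts to be nonzero. I would make this quantitative either by a direct late-time asymptotic expansion of the spherically symmetric exterior solution near $i^{+}$, or by revisiting the exterior decay estimates already used in the proof of Theorem~\ref{main.thm} and tracking the proportionality of the $\EH$ tail coefficient to $L$. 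The genericity step is then identical to the one in Theorem~\ref{main.thm}: smooth compactly supported data with $L=0$ form a subset of codimension at least one (a suitable spherically symmetric perturbation suffices to switch $L$ on), and on the complement the upper and lower bounds match, giving sharpness.

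The main obstacle is the quantitative identification in the middle step. The proof of Theorem~\ref{main.thm} uses $L\neq 0$ to produce a lower bound near $\CH$ via a contradiction argument; it does not, a priori, give any lower bound along $\EH$. One therefore has to carry out a separate exterior late-time analysis showing that $L\neq 0$ forces a non-degenerate Price tail on $\EH$ with coefficient a nonzero universal (depending only on $M$ and $e$) multiple of $L$. Once this identification is in hand, the sharpness is an immediate corollary of the same genericity mechanism that underlies Theorem~\ref{main.thm}.
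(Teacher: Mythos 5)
You have correctly located the engine of the argument (the quantity $\mathfrak{L}$ from \eqref{L.condition} and the codimension-one genericity mechanism), but the step you yourself flag as the "main obstacle" is a genuine gap, and it is precisely the step the paper is structured to avoid. You propose to show that $\mathfrak{L} \neq 0$ forces a pointwise lower bound $|\rd_t\phi| \gtrsim |\mathfrak{L}|\, v^{-4}$ along $\EH$, identifying $\mathfrak{L}$ with the leading Price coefficient. No such pointwise identification is available: the exterior analysis of Section~\ref{sec.lower.bd.R} (Theorem~\ref{thm.decay.R}) only yields a pointwise lower bound $|\rd_v(r\phi)| \geq \frac{|\mathfrak{L}|}{8} v^{-3}$ on a curve $\{r = R\}$ with $R$ \emph{large}, and there is no mechanism for transporting a pointwise lower bound from $\{r=R\}$ inward to the event horizon. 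Indeed, the paper explicitly notes that establishing pointwise lower bounds of the type \eqref{dafermos.horizon.assumption} on $\EH$ for generic (or even for a single set of) regular Cauchy data remains open. So the "separate exterior late-time analysis" you defer to is essentially an open problem, not a routine verification.

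What the paper actually proves is a strictly weaker statement, and it gets there by a contrapositive rather than a direct asymptotic expansion. Corollary~\ref{cor.horizon.lower.bd} shows that $\mathfrak{L}\neq 0$ implies the \emph{$L^2$-averaged} divergence \eqref{thm.2.EH}, i.e. $\int_{\EH} v^{7+\ep}(\rd_v\phi)^2 = \infty$ for every $\ep>0$. This is obtained by assuming the integral finite, propagating that decay \emph{outward} from $\EH$ to $\{r=R\}$ via the red-shift and energy estimates of Theorem~\ref{horizon.lower.bd}, and contradicting the lower bound of Theorem~\ref{thm.decay.R}. The resulting sharpness statement (Corollary~\ref{linear.Price.law.sharp}) is correspondingly modest: for each $\ep>0$ there is a sequence $v_n\to\infty$ with $v_n^{4+\ep}|\rd_t\phi(\infty,v_n)|\to\infty$; it does not assert a matching pointwise lower bound, let alone identify a leading coefficient proportional to $\mathfrak{L}$. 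If you weaken your target to this sequential, $\ep$-loss form of sharpness and replace your middle step by the contrapositive $L^2$ argument, your outline becomes the paper's proof; as written, the proposal rests on an unproved (and apparently open) pointwise asymptotic.
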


We will give a more precise version of the main theorem (see Corollary \ref{main.cor}) and will further discuss the method of the proof in Section \ref{sec.main.thm}. Before that, we first give a brief introduction to the geometry of the Reissner-Nordstr\"om spacetime.

\subsection{Geometry of Reissner-Nordstr\"om}\label{sec.geometry}
Reissner-Nordstr\"om spacetimes are the unique $2$-parameter family of static, spherically symmetric solutions to the Einstein-Maxwell system. This family is parametrized by the mass $M$ and the charge $e$ of the black hole. In the parameter range $0<|e|<M$, the geometry of the maximal globally hyperbolic development of Reissner-Nordstr\"om data on a complete Cauchy hypersurface $\Sgm_{0}$ with two asymptotically flat ends (denoted $i^{0}$) is depicted in the Penrose diagram below in Figure~\ref{fig:RN}.

\begin{figure}[h]
\begin{center}
\def\svgwidth{200px}
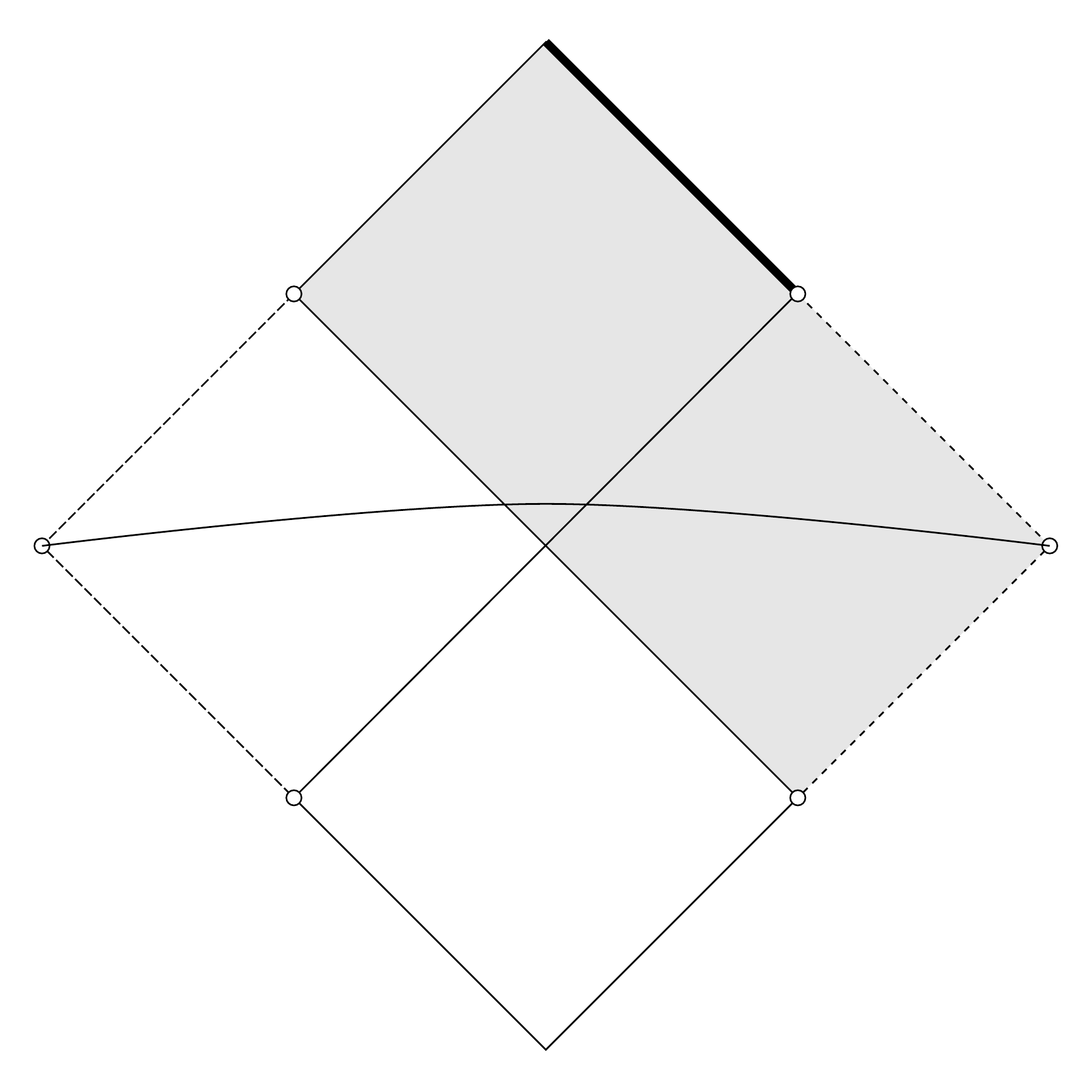 
\caption{} \label{fig:RN}
\end{center}
\end{figure}

As mentioned before, the metric of Reissner-Nordstr\"om can be given in a local coordinate chart by
\begin{equation} \label{eq:RNmet}
g=-(1-\f {2M} r+\f{e^2}{r^2})dt^2+(1-\f {2M} r+\f{e^2}{r^2})^{-1} dr^2+r^2 d\sigma_{\mathbb S^2}.
\end{equation}
The Reissner-Nordstr\"om spacetime has a black hole region, labeled as $II$ in Figure~\ref{fig:RN}, which is separated from the exterior regions $I$ and $I'$ by the event horizon $\{r=r_+\}$, where $r_+$ is the larger root of $r^2-2Mr+e^2$. This black hole region is characterized by the feature that every point in it is not contained in the past of future null infinity $\calI^{+}$, i.e., no signal can be sent from the black hole region to null infinity. The bifurcate null hypersurface $\EH$ dividing the black hole and the two exterior regions is called the \emph{future event horizon}. The other interior region $II'$ (sometimes called a white hole region), past null infinity $\calI^{-}$ and past event horizon $\calH^{-}$ are defined similarly by reversing time. The expression \eqref{eq:RNmet} for the Reissner-Nordstr\"om metric exhibits a coordinate singularity along $\EH$ and $\calH^{-}$, but is valid everywhere else, i.e., on each of the regions $I$, $II$, $I'$, $II'$.

For the purpose of this paper, the most relevant feature of the Reissner-Nordstr\"om spacetime is that it has a smooth Cauchy horizon $\{r=r_-\}$, where $r_-$ is the smaller root of $r^2-2Mr+e^2$. The component of $\set{r = r_{-}}$ to the future of $II$, denoted $\CH$ in Figure~\ref{fig:RN}, is called the \emph{future Cauchy horizon}; the past Cauchy horizon $\mathcal{CH}^{-}$ is defined similarly by time reversal. The presence of the smooth Cauchy horizon in particular allows the maximal globally hyperbolic development to be extended smoothly and non-uniquely (!) as solutions to the Einstein-Maxwell system. It is precisely this feature that is expected to be non-generic according to the strong cosmic censorship conjecture (see Section \ref{SCC}).

In view of the symmetry between the two asymptotically flat ends, we henceforth consider only a subset of the maximal globally hyperbolic development of Reissner-Nordstr\"om, namely, the shaded region in Figure~\ref{fig:RN}. We moreover restrict our attention to the ``incoming'' portion of the bifurcate future Cauchy horizon, which is the part to the right of the shaded region in Figure~\ref{fig:RN} (drawn with a bold line). Once we obtain the generic blow up result in this region, we can also derive an analogous generic blow up result near every point on the whole future Cauchy horizon for generic initial data on a complete $2$-ended asymptotically flat initial Cauchy hypersurface $\Sgm_{0}$ by repeating the same argument. 
We will frequently restrict our attention to an even smaller subset, namely, the shaded region in Figure~\ref{fig:data} (see Section~\ref{sec.main.thm} below). It can be easily shown that the blow up result on the intersection of $\CH$ with this region implies a blow up result on the whole ``incoming'' portion of $\CH$ by a simple symmetry argument (see the proof of Corollary~\ref{main.cor} and Remark \ref{rem:wholeCH}).


The shaded region in Figure~\ref{fig:RN} contains both an exterior region $I$ and an interior region $II$ of the black hole. In the next two sections, we will define the coordinates that we use in each of these regions.

\subsubsection{Coordinates in the exterior region}
We define null coordinates in the black hole exterior (region $I$ in Figure~\ref{fig:RN}) as follows.
Let
$$r^*=r+(M+\frac{2M^2-e^2}{2\sqrt{M^2-e^2}})\log (r-r_+) +(M-\frac{2M^2-e^2}{2\sqrt{M^2-e^2}})\log (r-r_-).$$
Let also 
$$v=\frac 12(t+r^*),\quad u=\frac 12(t-r^*).$$
As a consequence
$$\frac{\rd}{\rd v}= \frac{\rd}{\rd r^*}+\frac{\rd}{\rd t},\quad\frac{\rd}{\rd u}= \frac{\rd}{\rd t}-\frac{\rd}{\rd r^*}.$$
In the coordinate system $(u, v, \tht, \varphi)$, where $(\tht, \varphi)$ is a spherical coordinate system on $\bbS^{2}$, the Reissner-Nordstr\"om metric in the exterior region takes the form 
$$g=-4\Omg^2 du\,dv+ r^2 d\sigma_{\mathbb S^2}.$$
where $\Omg^2=(1-\f{2M}{r}+\f{e^2}{r^2})$. Also
\begin{equation}\label{ext.r}
\rd_v r=-\rd_u r=1-\f{2M}{r}+\f{e^2}{r^2}.
\end{equation}
In this coordinate system, the future null infinity $\calI^{+}$ corresponds to the limit $\set{v = \infty}$, whereas the future event horizon $\EH$ is represented by $\set{u = \infty}$.

\subsubsection{Coordinates in the interior region}
We now turn to the black hole interior, which is labeled as $II$ in Figure~\ref{fig:RN}. Let 
$$r^*=r+(M+\frac{2M^2-e^2}{2\sqrt{M^2-e^2}})\log (r-r_+) +(M-\frac{2M^2-e^2}{2\sqrt{M^2-e^2}})\log (r-r_-).$$
Define the null coordinates
$$v=\frac 12(r^*+t),\quad u=\frac 12(r^*-t),$$
which implies
$$\frac{\rd}{\rd v}= \frac{\rd}{\rd r^*}+\frac{\rd}{\rd t},\quad\frac{\rd}{\rd u}= \frac{\rd}{\rd r^*}-\frac{\rd}{\rd t}.$$
In the coordinate system $(u, v, \tht, \varphi)$, the Reissner-Nordstr\"{o}m metric again takes the form
$$g=-4\Omg^2 du\,dv+ r^2 d\sigma_{\mathbb S^2}$$
except that we now have $\Omg^2=-(1-\frac{2M}{r}+\frac{e^2}{r^2})$ instead. Moreover, in this region
\begin{equation}\label{int.r}
\rd_v r=\rd_u r=1-\f{2M}{r}+\f{e^2}{r^2}.
\end{equation}
In this coordinate system, the future event horizon $\EH$ corresponds to the limit $\set{u = -\infty}$, whereas the ``incoming'' portion of the future Cauchy horizon $\CH$ is represented by $\set{v = \infty}$.

\subsubsection{The wave equation}
In both the exterior and the interior region, the wave equation takes the form
$$\rd_u\rd_v\phi=-\f{\rd_v r \rd_u\phi}{r}-\f{\rd_u r \rd_v\phi}{r}+\f{\Omg^2 \lapp\phi}{r^2},$$
where $\lapp$ denotes that Laplace-Betrami operator on the standard $2$-sphere with radius $1$. 
In most of this paper, we will be particularly concerned with this equation in spherical symmetry and the equation takes the form
$$\rd_u\rd_v\phi=-\f{\rd_v r \rd_u\phi}{r}-\f{\rd_u r \rd_v\phi}{r}.$$
In other words, in view of \eqref{ext.r} and \eqref{int.r}, the wave equation in spherical symmetry can be expressed as
\begin{equation}\label{wave.eqn.ext}
\rd_u\rd_v\phi=\f{\Omg^2}{r}(\rd_v\phi-\rd_u\phi)\quad\mbox{in the exterior region},
\end{equation}
and
\begin{equation}\label{wave.eqn.int}
\rd_u\rd_v\phi=\f{\Omg^2}{r}(\rd_v\phi+\rd_u\phi)\quad\mbox{in the interior region}.
\end{equation}
Moreover, we frequently find it convenient to write the equation in the following equivalent form
\begin{equation}\label{re.wave.eqn.ext}
\rd_u\rd_v (r\phi)=-\f{2(M-\f{e^2}{r})\Omg^2\phi}{r^2}
\end{equation}
in the exterior region.

\subsubsection{Nondegenerate energy}\label{sec.nondegenerate}
We define the nondegenerate energy associate to the solution to the wave equation. To this end, we define a globally regular timelike vector field $N$ as follows: Let
\begin{equation}
N=(1+\f{\chi_{N,1}(r)}{\Omg^2})\rd_u+(1+\f{\chi_{N,2}(r)}{\Omg^2})\rd_v
\end{equation}
where $\chi_{N,1}(r)$ and $\chi_{N,2}(r)$ are smooth cutoff functions such that 
\begin{equation*}
\chi_{N,1}(r)=
\begin{cases}
1 &\mbox{for }r_+-\f{r_+-r_-}{8}\leq r\leq r_++\f{r_+-r_-}{8}\\
0 &\mbox{for }r\leq r_+-\f{r_+-r_-}{4}\mbox{ or }r\geq r_++\f{r_+-r_-}{4}
\end{cases}
\end{equation*}
and
\begin{equation*}
\chi_{N,2}(r) =
\begin{cases}
1 &\mbox{for }r_-\leq r\leq r_-+\f{r_+-r_-}{8}\\
0 &\mbox{for }r\geq r_-+\f{r_+-r_-}{4},
\end{cases}
\end{equation*}
subject to the condition $0\leq \chi_{N,1}(r), \chi_{N,2}(r)\leq 1$ everywhere.

It is easy to check that $N$ is timelike and moreover is regular near both the event horizon and the Cauchy horizon. Using the vector field $N$, we can define the nondegenerate energy as follows:
Let 
$$T_{\mu\nu}=\rd_\mu\phi\rd_\nu\phi - \f12 g_{\mu\nu}(g^{-1})^{\alp\beta}\rd_\alp\phi\rd_{\beta}\phi$$
and given any spacelike or null hypersurface $\Sigma$, we define the nondegenerate energy to be
$$E_{\Sigma}(\phi):=\int_{\Sigma} T_{\mu\nu} N^\mu n^{\nu}_\Sigma.$$
When $\Sigma$ is spacelike, $n^{\nu}_\Sigma$ in the above expression is the unit normal to $\Sigma$ and the integration is with respect to the volume form associated to the induced metric on $\Sigma$. In the case where $\Sigma$ is a null hypersurface, there is no canonical volume form, but we can fix the normalization by requiring that we integrate over the $3$-form $\omega$ such that 
$n_{\Sigma}^{\flat} \wedge \omega$ is the space-time volume form, where $n_{\Sgm}^{\flat}$ is the metric dual of $n_{\Sgm}$.

In proving our main theorem (Theorem \ref{main.thm}), we will in fact show that for generic initial data, the nondegenerate energy on a null hypersurface transversal to the Cauchy horizon is infinite. This will in particular show that the solution is not in $W^{1,2}_{loc}$, as claimed in Theorem \ref{main.thm}.

\subsubsection{Notation}\label{sec.notation}
We end this section with some notation for various subsets of the Reissner-Nordstr\"om spacetime. 

We use $C_u$ to denote a constant $u$ hypersurface and $\underline C_v$ to denote a constant $v$ hypersurface. We will frequently consider a constant $v$ hypersurface that penetrates the event horizon and we use $\underline C^{int}_v$ and $\underline C^{ext}_v$ to denote the parts of the hypersurface in the interior and exterior region of the black hole respectively. Whenever there is no danger of confusion, we will drop the superscript in $\underline C$.

Our argument will involve constant $r$-hypersurfaces, which we denote by $\gamma_R:=\{r=R\}$. We will also abuse notation to denote the same set as $\gamma_{R^{\ast}}:=\{r^{\ast}=R^{\ast}\}$. Here, $R^{\ast}$ denotes the $r^{\ast}$ value corresponding to $R$. We will also use $u_{r^{\ast}}(v)$ to denote the unique value of $u$ such that $r^{\ast}(u_{r^{\ast}},v)=r^{\ast}$ and similarly for $v_{r^{\ast}}(u)$.

We introduce the following convention for integration. On $C_u$ (resp. $\underline C_v$), we use the convention that the integration is always with respect to $dv$ (resp. $du$).
On the constant $r$-hypersurface $\gamma_r$, unless otherwise specified, we parametrize the curve by the spacetime coordinate $v$ and integrate with respect to $dv$. 
On the other hand, in a spacetime region, the integration is with respect to $du\, dv$. (Notice that this is not equal to the integration with respect to the volume form induced by the spacetime metric!)

Finally, we note that in Section \ref{sec.interior}, we find it convenient to relabel the hypersurfaces and use a different set of notation. We refer the readers to the beginning of Section \ref{sec.interior}.

\subsection{Statement of main theorem and outline of the proof}\label{sec.main.thm}
As mentioned earlier, to achieve the main theorem, it suffices to restrict to spherically symmetric solutions. This is because the Reissner-Nordstr\"om spacetime is spherically symmetric and we can decompose the solution into spherical harmonics. The blow up of the spherically symmetric mode in $W^{1,2}_{loc}$ then implies the blow up of the full solution by the orthogonality of the spherical harmonics.

We now describe the class of data that we consider in the paper. The initial data will be given on two transversely intersecting null hypersurfaces $\underline C_1$ and $C_{-U_0}$ as shown in Figure~\ref{fig:data}, where $\underline C_1$ is a horizon penetrating null hypersurface composed of $\uC_{1}^{ext} \cup \uC_{1}^{int}$. In the interior region, we only consider $\uC_{1}^{int}$ up to $u \leq -1$.
\begin{figure}[h]
\begin{center}
\def\svgwidth{200px}
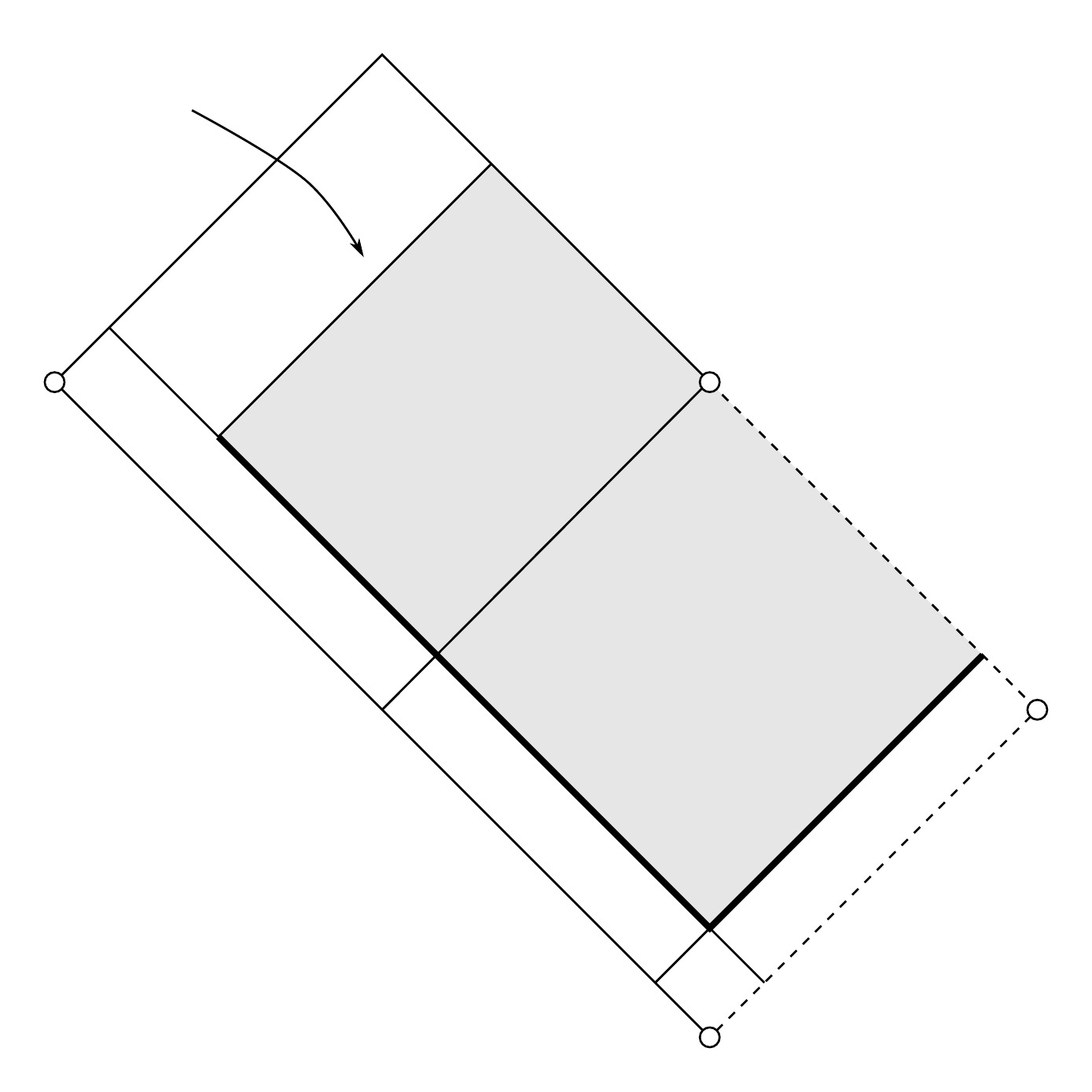 
\caption{} \label{fig:data}
\end{center}
\end{figure}

The shaded region in Figure~\ref{fig:data} corresponds to the domain of dependence of these hypersurfaces, which equals $\set{(u, v) : u \leq -1, v \leq 1}$ in the interior and $\set{(u, v) : u \geq -U_{0}, v \leq 1}$ in the exterior.

We require that for some $D>0$, the data on $C_{-U_0}\cap \set{v\geq 1}$ verify
\begin{equation}\label{data.1}
\sup_{C_{-U_0} \cap \set{v \geq 1}} r^{2} \abs{\frac{\rd_{v} \phi}{\rd_{v} r}} \leq D, \quad
	\sup_{C_{-U_0} \cap \set{v \geq 1}} r^3 \abs{\frac{\rd_{v}(r \phi)}{\rd_{v} r}} \leq D
\end{equation}
and the data on ${(\uC_{1}^{ext} \cap \set{u \geq -U_{0}})\cup(\uC_{1}^{int} \cap \set{u \leq -1})}$ obey
\begin{equation} \label{data.2}
	\sup_{(\uC_{1}^{ext} \cap \set{u \geq -U_{0}})\cup(\uC_{1}^{int} \cap \set{u \leq -1})} \abs{\frac{\rd_{u} \phi}{\rd_{u} r}} \leq D, \quad
	\sup_{(\uC_{1}^{ext} \cap \set{u \geq -U_{0}})\cup(\uC_{1}^{int} \cap \set{u \leq -1})} \abs{\frac{\rd_{u}(r \phi)}{\rd_{u} r}} \leq D.
\end{equation}
Assume moreover that $r^3\rd_v(r\phi)(-U_0,v)$ has a limiting value, i.e.,
\begin{equation}\label{data.3}
\lim_{v\to\infty}r^3\rd_v(r\phi)(-U_0,v)\mbox{ exists}.
\end{equation}
Notice that a particular consequence of the assumptions \eqref{data.1} and \eqref{data.2} is that the initial data have finite nondegenerate energy. In part of the paper, we will also restrict to compactly supported initial data\footnote{Notice that by the finite speed of propagation, we can indeed think of such solutions as arising from compactly supported initial data on a Cauchy hypersurface (see the proof of Corollary \ref{main.cor}).}, i.e., instead of \eqref{data.1}, we require $\phi$ and $\rd_v\phi$ to vanish identically on $C_{-U_0}\cap \set{v\geq 1}$. We will not make this stronger assumption for the main theorem.

We prove the following result for the wave equation in spherical symmetry, which shows that the derivative of $\phi$ is singular as long as a certain inequality holds along null infinity:
\begin{theorem}\label{decay.thm}
Let $\phi$ be a solution to \eqref{wave.eqn} with spherically symmetric initial data satisfying \eqref{data.1}-\eqref{data.3}. Assume that
\begin{equation}\label{L.condition}
	\mathfrak{L} := \lim_{v \to \infty} r^{3} \rd_{v} (r \phi)(-U_0, v) - \int_{-U_0}^{\infty} 2 M \Phi(u)  \, \ud u \neq 0,
\end{equation}
where\footnote{The fact that this limit exists is an easy consequence of the results in \cite{DRPL}.} $\Phi(u) := \lim_{v \to \infty} r \phi(u,v)$. Then, near the Cauchy horizon in the interior of the black hole, we have
\begin{equation}\label{blow.up.bd}
\int_{1}^\infty \log^{\alpha_0}(\f 1{\Omega}) (\rd_v\phi)^2(u,v)\, dv=\infty
\end{equation}
for every $u\in (-\infty,\infty)$ and every integer $\alpha_0>7$.
\end{theorem}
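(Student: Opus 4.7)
The strategy is a contradiction argument tying a sharp polynomial lower bound on $\rd_v(r\phi)$ along $\EH$ (forced by $\mathfrak{L} \neq 0$) to an improved upper bound for $\rd_v\phi$ near $\CH$ that would follow from the assumed finiteness of \eqref{blow.up.bd}. The proof splits naturally into three steps: (i) identify $\mathfrak{L}$ as the leading coefficient in a Price's-law asymptotic of $\rd_v(r\phi)$ along $\EH$, yielding a sharp lower bound of the form $|\rd_v(r\phi)| \gtrsim |\mathfrak{L}|\, v^{-3}$ on $\EH$; (ii) assume for contradiction that \eqref{blow.up.bd} is finite at some $u = u_{\ast}$, and extract an upper bound on $\rd_v\phi(u_{\ast}, v)$ with polynomial decay strictly faster than $v^{-3}$; (iii) transport this upper bound along the interior wave equation back to $\EH$, contradicting step (i).

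For step (i) I would first invoke the exterior Price's-law estimates of Dafermos--Rodnianski (and later refinements) to ensure that $\Phi(u) = \lim_{v\to\infty} r\phi(u,v)$ exists and decays fast enough that the integral defining $\mathfrak{L}$ converges absolutely. Applying $\rd_u$ to $r^{3}\rd_v(r\phi)$ in the exterior and using \eqref{re.wave.eqn.ext} together with $\rd_u r = -\Omega^{2}$ gives
\begin{equation*}
\rd_u \bigl( r^{3} \rd_v (r\phi) \bigr) = -3 r^{2} \Omega^{2} \rd_v(r\phi) - 2\bigl( M - \tfrac{e^{2}}{r} \bigr) \Omega^{2}\, r\phi.
\end{equation*}
Passing to the limit $v \to \infty$ along outgoing null cones (where $r\phi \to \Phi(u)$, $\Omega^{2} \to 1$, and $\rd_v(r\phi) = o(r^{-2})$) and integrating from $-U_{0}$ to $u$ identifies $\lim_{v\to\infty} r^{3}\rd_v(r\phi)(u,v)$ with $\mathfrak{L} + 2M\int_{u}^{\infty}\Phi(u')\,\ud u'$. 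A refined Price's-law propagation from $\mathcal{I}^{+}$ down to $\EH$ then upgrades this identity to the sharp asymptotic $\rd_v(r\phi) \sim \mathfrak{L}\, v^{-3}$ along $\EH$, hence the desired lower bound.

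For steps (ii)--(iii), I would work in the interior with $\psi = r\phi$, which satisfies the clean equation $\rd_u\rd_v\psi = \tfrac{2(M - e^{2}/r)\Omega^{2}}{r^{2}}\psi$. Under the hypothesis that \eqref{blow.up.bd} is finite at $u = u_{\ast}$, Cauchy--Schwarz gives, for $v$ large,
\begin{equation*}
\int_{v}^{\infty} |\rd_v \phi|(u_{\ast}, v')\, \ud v' \leq \Bigl( \int_{v}^{\infty} \log^{-\alpha_0}(1/\Omega)\, \ud v' \Bigr)^{1/2} \Bigl( \int_{v}^{\infty} \log^{\alpha_0}(1/\Omega)(\rd_v \phi)^{2}\, \ud v' \Bigr)^{1/2}.
\end{equation*}
Since $\log(1/\Omega) \sim v$ near $\CH$, the first factor on the right is $O(v^{-(\alpha_0 - 1)/2})$ and the second is $o(1)$, producing an averaged pointwise bound on $\rd_v\phi$ at $u = u_{\ast}$ with rate beating $v^{-3}$ whenever $\alpha_0 > 7$. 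The exponential smallness of $\Omega^{2}$ near $\CH$ (at the rate set by the surface gravity $\kappa_{-}$ of $r_{-}$) makes the source term $\Omega^{2}\psi/r^{2}$ integrable in $u$ along incoming cones $\uC_v$, so this improved decay can be transported from $\set{u = u_{\ast}}$ back to $\EH$ with only controlled logarithmic losses, yielding $|\rd_v(r\phi)|(\EH, v) = o(v^{-3})$ and contradicting step (i).

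The main obstacle will be controlling the logarithmic losses that accumulate through the Cauchy--Schwarz, the pointwise interpolation, and the transport along $\uC_v$. The threshold $\alpha_0 > 7$ in the weight $\log^{\alpha_0}(1/\Omega)$ is expected to be exactly the margin needed to guarantee that the upper bound on $\EH$ is strictly faster than $v^{-3}$ even after these losses; smaller values of $\alpha_0$ would not suffice to separate the two rates. Organizing the weighted energy estimates, the $L^{\infty}$--$L^{2}$ interpolation, and the transport back through the blue-shift region so that the improved decay at $u = u_{\ast}$ propagates faithfully to $\EH$ without being degraded to $v^{-3}$ will form the technical heart of the proof.
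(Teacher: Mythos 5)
Your high-level strategy---play a lower bound forced by $\mathfrak{L} \neq 0$ against an upper bound extracted from the assumed finiteness of \eqref{blow.up.bd}---is indeed the skeleton of the paper's argument. But two of your three steps contain genuine gaps, and in both cases the paper is forced into a structurally different (and weaker, $L^2$-averaged) formulation precisely because the pointwise statements you propose are not available.

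First, step (i). You claim a pointwise asymptotic $\rd_v(r\phi) \sim \mathfrak{L}\, v^{-3}$ \emph{on the event horizon}, obtained by ``refined Price's-law propagation from $\mathcal{I}^+$ down to $\EH$.'' No such propagation is known, and the paper explicitly treats the pointwise lower bound on $\EH$ (the hypothesis \eqref{dafermos.horizon.assumption} of Dafermos's conditional theorem) as an open problem: it is not even known that a single regular Cauchy datum produces a solution satisfying it. The obstruction is that integrating $\rd_u\bigl(r^3\rd_v(r\phi)\bigr)$ from a far-away curve down to $u = \infty$ accumulates error terms $\int \frac{2(M - e^2/r)\Omega^2}{r^2}\,r\phi\, du$ that are no longer small once $r$ is bounded, so the lower bound cannot be pushed past a curve $\{r = R\}$ with $R$ large (Theorem \ref{thm.decay.R}). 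Moreover, even that conditional lower bound requires the upper bound $\sup u^3|\phi| \leq A'$ on $\{r = R_1\}$, which is \emph{not} supplied by the unconditional Dafermos--Rodnianski estimates (those give only $u^{-2}$ for $r\phi$); it is derived in the paper only \emph{under the contradiction hypothesis}, via the $L^2$ horizon bound $\int_{\EH} v^{7+\ep}(\rd_v\phi)^2 < \infty$ and a red-shift propagation outward (Theorem \ref{horizon.lower.bd}). The contradiction is then between the pointwise lower bound $|\rd_v(r\phi)| \gtrsim v^{-3}$ and the integrated upper bound $\int_{\gamma_R} v^5 (\rd_v(r\phi))^2 \leq C$, both living on $\{r=R\}$, yielding only the averaged conclusion $\int_{\EH} v^{7+\ep}(\rd_v\phi)^2 = \infty$ (Corollary \ref{cor.horizon.lower.bd}). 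Your identification of the threshold $\alpha_0 > 7$ is also attributed to the wrong place: it comes from needing $\int_{\EH} v^{7+\ep}(\rd_v\phi)^2 < \infty$ to yield $|\phi| \lesssim v^{-3-\ep/2}$ via Cauchy--Schwarz in the exterior step, not from losses in the interior transport.

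Second, step (iii). The pointwise transport of the improved decay from $\{u = u_\ast\}$ back to $\EH$ along $\uC_v$ does not close. Integrating $\rd_u\rd_v(r\phi) = \frac{2(M - e^2/r)\Omega^2}{r^2}\, r\phi$ in $u$ over $(-\infty, u_\ast]$ produces a source of size $\sup_u |r\phi(u,v)| \cdot \int_{-\infty}^{u_\ast} \Omega^2\, du = \sup_u|r\phi|\cdot(r_+ - r(u_\ast, v))$, and the factor $r_+ - r(u_\ast,v) \to r_+ - r_- \neq 0$ as $v \to \infty$; so the error is controlled only by the sup of $\phi$ uniformly in $u$, which is exactly the decay you are trying to establish on $\EH$ --- the argument is circular. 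This is why the paper replaces pointwise transport by weighted $L^2$ energy estimates propagated spacelike toward $i^+$: an energy identity (Proposition \ref{prop:intr:energy}), an integrated local energy decay estimate (Proposition \ref{prop:intr:ILED}), red-shift multipliers localized near \emph{both} horizons (Propositions \ref{prop:intr:red-shift} and \ref{prop:intr:red-shift:EH}), and finally a dyadic pigeonhole combined with an induction on the exponent (Proposition \ref{prop:intr:decay}) to convert the log-weighted flux on $\{u = u_\ast\}$ into the polynomial decay $\int_{\EH(\tau,\infty)}(\rd_v\phi)^2 \lesssim \tau^{-\alpha_0}$ of the $L^2$ flux. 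To repair your proposal you would need to abandon the pointwise formulations in steps (i) and (iii) and recast both the lower and upper bounds in integrated form.
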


\begin{remark}
Notice that the coordinate system $(u,v)$ is non-regular near the Cauchy horizon. The nondegenerate energy on a constant $u$ null hypersurface is equivalent to
$$\int_1^\infty \Omg^{-2} (\rd_v\phi)^2 (u,v)\, dv.$$
Therefore, \eqref{blow.up.bd} indeed implies that the solution has infinite nondegenerate energy. Moreover, in the $(u,v)$ coordinate system, the $W^{1,2}_{loc}$ norm is given by
$$\int_{\mathcal U} \Omega^{-2}(\rd_v\phi)^2+\Omega^2(\rd_u\phi)^2 \,du\,dv,$$
where $\mathcal U$ is a small neighborhood of a point on the Cauchy horizon. Therefore, since the nondegenerate energy blows up on all constant $u$ null hypersurfaces, the $W^{1,2}_{loc}$ norm is also infinite.
\end{remark}

In order to apply Theorem \ref{decay.thm}, we also construct solutions satisfying the assumption of Theorem \ref{decay.thm}:
\begin{theorem}\label{construction.existence.thm}
For $U_{0} > 1$ sufficiently large, 
there exists a spherically symmetric solution $\phi_{sing}$ to \eqref{wave.eqn} with smooth and compactly supported initial data on $\uC_{1}$ and zero data on $C_{-U_{0}}$ such that 
$$\mathfrak{L}\neq 0.$$
In fact, the support of the initial data $\phi_{sing} \restriction_{\uC_{1}}$ is contained in $\uC_{1}^{ext} \cap \set{-U_{0} \leq u \leq -U_{0}+1}$.
\end{theorem}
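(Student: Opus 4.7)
The first term in $\mathfrak{L}$ vanishes automatically under the hypothesis $\phi \restriction_{C_{-U_{0}}} \equiv 0$, since then $r\phi(-U_{0}, v) \equiv 0$ and hence $\rd_{v}(r\phi)(-U_{0}, v) \equiv 0$ for all $v \geq 1$. My plan is therefore to exhibit data for which $\int_{-U_{0}}^{\infty} 2M \Phi(u) \, \ud u \neq 0$, by a direct computation in which the leading contribution (of order $U_{0}$) comes from the ``free transport'' part of the solution and the wave-equation correction is shown to be of lower order in $U_{0}$. Concretely, fix $\chi \in C_{c}^{\infty}((0,1))$ with $\chi \geq 0$ and $\int \chi > 0$, set $f(u) := \chi(u+U_{0})$, and prescribe $\phi \restriction_{\uC_{1}^{ext}} = f$ and $\phi \restriction_{C_{-U_{0}}} = 0$; the smoothness, compact support and support-localization conditions in the theorem are then automatic.

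Let $\psi := r \phi$ and decompose $\psi = \psi_{0} + \psi_{1}$, where $\psi_{0}$ solves the transport equation $\rd_{u} \rd_{v} \psi_{0} = 0$ with the same characteristic data. Writing the general solution $\psi_{0}(u,v) = g(u) + h(v)$ and imposing $\psi_{0}(-U_{0}, v) = 0$ together with $f(-U_{0}) = 0$ (as $\mathrm{supp}\, \chi \subset (0,1)$) yields the explicit formula
\begin{equation*}
	\psi_{0}(u,v) = r(u, 1) f(u), \qquad \Phi_{0}(u) := \lim_{v \to \infty} \psi_{0}(u,v) = r(u,1) f(u).
\end{equation*}
Since on $\uC_{1}^{ext}$ one has $r^{\ast}(u,1) = 1 - u$ and $r = r^{\ast} + O(\log r^{\ast})$ as $r^{\ast} \to \infty$, we have $r(u,1) = U_{0}(1 + o(1))$ uniformly on $\mathrm{supp}\, f$ as $U_{0} \to \infty$, whence
\begin{equation*}
	\int_{-U_{0}}^{\infty} 2M \Phi_{0}(u) \, \ud u = 2M U_{0} \int \chi \, (1 + o(1)).
\end{equation*}

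The correction $\psi_{1}$ satisfies $\rd_{u} \rd_{v} \psi_{1} = -c(r)(\psi_{0} + \psi_{1})$ with $c(r) := 2(M - e^{2}/r) \Omg^{2} / r^{3}$ and zero characteristic data, so
\begin{equation*}
	\psi_{1}(u,v) = -\int_{-U_{0}}^{u} \int_{1}^{v} c(r(u',v')) (\psi_{0} + \psi_{1})(u', v') \, \ud v' \ud u'.
\end{equation*}
The key estimate is that $\rd_{v} r = \Omg^{2} \to 1$ in the exterior, so $\int_{1}^{\infty} c(r(u', v'))\, \ud v' \lesssim r(u', 1)^{-2}$, and since $r(u', 1) \gtrsim U_{0}$ on $\mathrm{supp}\, \psi_{0}$, the source-driven part of $\psi_{1}$ is $O(U_{0} \cdot U_{0}^{-2}) = O(U_{0}^{-1})$ pointwise. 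A standard Gr\"onwall/contraction argument in the integral formulation then gives $\|\psi_{1}\|_{\infty} \lesssim U_{0}^{-1}$ throughout the shaded region for $U_{0}$ large. Combining this pointwise bound on $u \in [-U_{0}, -U_{0}+1]$ with a Price's-law type $u$-decay of $\phi$ on the complementary range $u > -U_{0}+1$, which follows from the decay theory of \cite{DRPL} applied to the data for $\psi_{1}$ on $\set{u = -U_{0}+1}$, one obtains
\begin{equation*}
	\int_{-U_{0}}^{\infty} 2M |\Phi_{1}(u)| \, \ud u = O(U_{0}^{-1}), \qquad \Phi_{1}(u) := \lim_{v \to \infty} \psi_{1}(u,v).
\end{equation*}
For $U_{0}$ sufficiently large, therefore,
\begin{equation*}
	-\mathfrak{L} = \int_{-U_{0}}^{\infty} 2M \Phi(u) \, \ud u = 2M U_{0} \int \chi \, (1 + o(1)) \neq 0,
\end{equation*}
as required. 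The main obstacle I anticipate is the bound on $\int |\Phi_{1}| \, \ud u$: pointwise smallness $|\psi_{1}| \lesssim U_{0}^{-1}$ is an elementary consequence of the $r^{-3}$ weight in the potential, but obtaining enough $u$-decay to make $\Phi_{1}$ integrable, and to ensure its contribution cannot cancel the leading $\sim U_{0}$ term uniformly in $U_{0}$, genuinely requires invoking the existing decay theory for linear waves on the Reissner-Nordstr\"om exterior.
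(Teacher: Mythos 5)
Your overall architecture is genuinely different from the paper's and the main-term computation is correct: you prescribe $\phi\restriction_{\uC_1}=\chi(\cdot+U_0)$ with $\chi\ge 0$, so $r\phi\sim U_0$ on the support and $\int 2M\Phi_0\,\ud u\sim 2MU_0\int\chi$ grows linearly in $U_0$, and you then need only $o(U_0)$ control of the error. (The paper instead prescribes $\rd_u(r\phi)\restriction_{\uC_1}=\chi(\cdot+U_0)$ with $\chi$ of \emph{mean zero}, which keeps $r\phi=O(1)$, makes the main term the fixed quantity $\int_0^1\int_0^u\chi\ge \tfrac18$, and drives all errors to zero like $U_0^{-1}$ or $U_0^{-1/2}$.) However, your error analysis has genuine gaps. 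First, the claim $\nrm{\psi_1}_\infty\lesssim U_0^{-1}$ ``throughout the shaded region'' does not follow from the stated Gr\"onwall argument: the kernel $\int_{-U_0}^{u}\int_1^{\infty}c(r(u',v'))\,\ud v'\,\ud u'\lesssim\int_{-U_0}^{u}M\,r(u',1)^{-2}\,\ud u'$ is small only while $r(u',1)$ is large; near the event horizon $r(u',1)\to r_+$ and the kernel grows linearly in $u$, so the contraction fails there. This is exactly why the paper introduces a \emph{fixed} intermediate cut $C_{-\Ufixed}$ (with $\Ufixed\ge 100M$ chosen independently of $U_0$) and runs the direct bootstrap only on $\set{-U_0+1\le u\le -\Ufixed}$, handing the rest to Theorem~\ref{thm:DRPL}.

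Second, the quantitative claim $\int_{-U_0}^{\infty}2M|\Phi_1|\,\ud u=O(U_0^{-1})$ is unjustified and almost certainly false. Even granting $|\psi_1|\lesssim U_0^{-1}$ pointwise on the intermediate region, that region has $u$-length $\sim U_0$, so its contribution to $\int|\Phi_1|$ is only $O(1)$. Worse, your plan to apply the decay theory of \cite{DRPL} ``to the data for $\psi_1$ on $\set{u=-U_0+1}$'' runs into the fact that the constant in Theorem~\ref{thm:DRPL} depends on the location of the initial outgoing cone, hence on $U_0$, which is fatal in an argument that is a competition between terms as $U_0\to\infty$; and even after restarting from a fixed $C_{-\Ufixed}$, the relevant data norm in your normalization is $B\sim r^{\omg}|\rd_v(r\phi)/\rd_v r|\sim MU_0\cdot U_0^{\omg-3}=MU_0^{\omg-2}$, which for any useful $\omg>2$ \emph{grows} with $U_0$, giving an error contribution of size $O(U_0^{\omg-2})$ rather than $O(U_0^{-1})$. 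Fortunately all of these contributions are still $o(U_0)$, so your strategy can be repaired — but as written the decisive inequality ``error $\ll$ main term'' rests on a bound that is off by a factor of roughly $U_0^{3/2}$ and on an application of the decay theorem with an uncontrolled constant. You should restructure the error estimate as: a direct bootstrap on the unit strip and on $\set{-U_0+1\le u\le-\Ufixed}$ with $\Ufixed$ fixed, followed by Theorem~\ref{thm:DRPL} from $C_{-\Ufixed}$, and aim only for $o(U_0)$.
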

Theorems \ref{decay.thm} and \ref{construction.existence.thm} easily imply that there exist spherically symmetric solutions to \eqref{wave.eqn} which are not in $W^{1,2}_{loc}$ near the Cauchy horizon. Moreover, one can conclude the following more precisely formulated version of Theorem \ref{main.thm}:
\begin{corollary}[Main theorem, second version]\label{main.cor}
Let $\Sgm_{0}$ be a complete $2$-ended asymptotically flat Cauchy hypersurface for a subextremal Reissner-Nordstr\"om spacetime with non-vanishing charge.
The set of smooth and compactly supported initial data on $\Sgm_{0}$ which lead to solutions with finite nondegenerate energy near the future Cauchy horizon $\CH$ has co-dimension at least $1$.
\end{corollary}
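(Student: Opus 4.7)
The plan is to reduce to spherical symmetry and then exhibit a single nonzero linear functional on the Cauchy data whose vanishing is necessary for finite nondegenerate energy. Given $\Psi \in C_c^\infty(\Sgm_{0})$, decompose the resulting solution into spherical harmonics, $\phi = \sum_{\ell \geq 0} \phi_{\ell}$; by the $\mathrm{SO}(3)$-invariance of the metric each $\phi_{\ell}$ solves \eqref{wave.eqn}, and $L^{2}(\bbS^{2})$-orthogonality gives $E_{\Sigma}(\phi) \geq E_{\Sigma}(\phi_{0})$ on any $\mathrm{SO}(3)$-invariant hypersurface $\Sigma$, since $T_{\mu\nu} N^{\mu} n^{\nu}_{\Sigma} \geq 0$ pointwise by the timelike character of $N$. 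Thus blow-up of the spherical mean $\phi_{0}$ near $\CH$ forces blow-up of $\phi$, and since the projection $\Psi \mapsto \Psi_{0}$ onto $\ell = 0$ data is a continuous linear surjection, any codimension-$1$ bound within the spherically symmetric class lifts to $C_{c}^{\infty}(\Sgm_{0})$.

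Fix $U_{0} > 1$ large enough that the shaded region in Figure~\ref{fig:data} lies in the future of $\Sgm_{0}$. For spherically symmetric $\Psi \in C_c^\infty(\Sgm_{0})$, restrict the resulting $\phi$ to $\uC_{1} \cup C_{-U_{0}}$ to obtain characteristic data; the hypotheses \eqref{data.1}--\eqref{data.3} hold by the Price's-law type decay bounds for compactly supported Cauchy data in the exterior (Section~\ref{sec.exterior}, see also \cite{DRPL}), so Theorem~\ref{decay.thm} is applicable. Both terms in \eqref{L.condition} depend linearly on $\Psi$, yielding a linear functional $L(\Psi) := \mathfrak{L}$ on the spherically symmetric subspace. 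To show $L \not\equiv 0$, invoke Theorem~\ref{construction.existence.thm}: its output $\phi_{sing}$ has characteristic data supported in $\uC_{1}^{ext} \cap \set{-U_{0} \leq u \leq -U_{0}+1}$ with vanishing data on $C_{-U_{0}}$. Since this support is a compact subset of the exterior bounded away from $\EH$, its backward domain of dependence meets $\Sgm_{0}$ in a precompact set, and solving the wave equation backwards produces smooth, spherically symmetric, compactly supported Cauchy data $\Psi_{sing}$ whose forward evolution reproduces $\phi_{sing}$ in the characteristic slab. In particular $L(\Psi_{sing}) = \mathfrak{L} \neq 0$.

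Since $L$ is then a nonzero linear functional on the spherically symmetric $C_c^\infty(\Sgm_{0})$, its kernel has codimension $1$; by Theorem~\ref{decay.thm}, every $\Psi \notin \ker L$ gives rise to $\phi_{0}$ with infinite nondegenerate energy on every null hypersurface $C_{u}$ with $u\in(-\infty,\infty)$ transverse to $\CH$, and hence to $\phi$ with infinite nondegenerate energy by the reduction step. The finite-energy set therefore sits inside a codimension-$1$ subspace of $C_c^\infty(\Sgm_{0})$. A symmetry argument as in Remark~\ref{rem:wholeCH}, reflecting across the bifurcation sphere and relabeling, then extends the conclusion from the incoming portion of $\CH$ to a neighborhood of every point of $\CH$.

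The main anticipated obstacle is the verification of the decay hypotheses \eqref{data.1}--\eqref{data.3} on the characteristic data induced on $\uC_{1} \cup C_{-U_{0}}$ from arbitrary compactly supported Cauchy data on $\Sgm_{0}$: although Price's-law decay in the exterior is standard, matching the precise pointwise form required by Theorem~\ref{decay.thm} and ensuring the existence of the limit in \eqref{data.3} requires care. The remaining steps are essentially formal: spherical harmonic orthogonality, linearity of $\mathfrak{L}$ in the initial data, and finite speed of propagation used to reinterpret characteristic data as Cauchy data.
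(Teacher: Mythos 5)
Your overall strategy is the same as the paper's: reduce to the spherical mean via $L^{2}(\bbS^{2})$-orthogonality, regard $\mathfrak L$ as a linear functional on the data, use the solution of Theorem~\ref{construction.existence.thm} to show the functional is nonzero, and invoke Theorem~\ref{decay.thm} to force blow-up off its kernel. The one step that does not work as you state it is the passage from the characteristic data of $\phi_{sing}$ to compactly supported Cauchy data $\Psi_{sing}$ by ``solving the wave equation backwards.'' The bifurcate hypersurface $\uC_{1} \cup C_{-U_{0}}$ opens toward the future, so its past domain of dependence does not contain the relevant portion of $\Sgm_{0}$: from a point of $\Sgm_{0}$ near spatial infinity one has $u < -U_{0}$ and $v > 1$, and the outgoing null ray from such a point runs to $\calI^{+}$ without ever meeting $\uC_{1} \cup C_{-U_{0}}$. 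The backward characteristic problem is therefore underdetermined there and does not single out any $\Psi_{sing}$. The paper instead goes forward: it takes $U_{0}$ so large that the support segment $\uC_{1}^{ext} \cap \set{-U_{0} \leq u \leq -U_{0}+1}$ lies in the \emph{past} of $\Sgm_{0}$, restricts the forward solution $\phi_{sing}$ to $\Sgm_{0} \cap D^{+}(\uC_{1} \cup C_{-U_{0}})$, and extends by zero; the point that must then be checked is smoothness of the glued data, which holds because $\phi_{sing}$ vanishes to infinite order along $C_{-U_{0}}$ (zero data there, together with the vanishing of $\chi$ at the corner, propagated by the wave equation). You should replace the backward-solving step by this restrict-and-glue argument.

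Two smaller points. First, your closing sentence conflates Remark~\ref{rem:wholeCH}, which uses the time translation $t \mapsto t + t_{0}$ to pass from $\CH \cap \set{u \leq -1}$ to the whole \emph{incoming} portion of $\CH$, with the treatment of the \emph{outgoing} portion. Your $\phi_{sing}$ vanishes identically in the other exterior region $I'$, so it says nothing about the analogous functional $\mathfrak L'$ there; the paper handles the outgoing portion with a second perturbation $\phi_{sing}'$ having $\mathfrak L' \neq 0$ and $\mathfrak L = 0$. For the codimension-one statement as literally phrased your single functional does suffice (finite energy near all of $\CH$ already forces $\mathfrak L = 0$), but blow-up near \emph{every} point of $\CH$ for the perturbed data requires the second functional. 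Second, regarding your ``main anticipated obstacle'': for compactly supported Cauchy data the hypotheses \eqref{data.1} and \eqref{data.3} need no Price's law at all --- by finite speed of propagation one may choose $U_{0}$ so large that $C_{-U_{0}} \cap \set{v \geq 1}$ misses the causal future of the support, whence $\phi$ vanishes identically there; only \eqref{data.2} genuinely uses the exterior boundedness and decay estimates.
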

\begin{proof}
First, we claim that for sufficiently large $U_{0}$, the solution $\phi_{sing}$ given by Theorem~\ref{construction.existence.thm}, which is initially defined\footnote{To be pedantic, $\phi_{sing}$ is only defined in the shaded region in Figure~\ref{fig:data} according to Theorem~\ref{construction.existence.thm}. We can however prescribe the zero initial data on the whole $\uC_{-1}^{int}$, which trivially extends $\phi_{sing}$ to the domain of dependence of $\uC_{1} \cup C_{-U_{0}}$.} only on the domain of dependence of $\uC_{1} \cup C_{-U_{0}}$, extends to a smooth solution to the linear wave equation in the whole spacetime with compactly supported data on $\Sgm_{0}$. By finite speed of propagation, it suffices to prove this property for a particular Cauchy hypersurface. For convenience, we choose $\Sgm_{0}$ to be spherically symmetric and asymptotic to the $\set{t=0}$ hypersurface near each end as in Figure~\ref{fig:RN}. We take $U_{0}$ sufficiently large so that the segment $\uC_{1}^{ext} \cap \set{-U_{0} \leq u \leq -U_{0}+1}$ lies in the past of $\Sgm_{0}$. Consider the data $(f, g)$ on $\Sgm_{0}$ defined as
\begin{equation*}
	(f, g) (p) = \left\{
	\begin{array}{ll}
	(\phi_{sing}, n_{\Sgm_{0}} \phi_{sing})(p) & \hbox{ if $p$ is in the domain of dependence of $\uC_{1} \cup C_{-U_{0}}$},  \\
	(0, 0) & \hbox{ otherwise.} 
	\end{array}
	\right.
\end{equation*}
Note that $(f, g)$ is compactly supported in $\Sgm_{0}$, since the Cauchy hypersurface $\Sgm_{0}$ necessarily exits the domain of dependence of $\uC_{1} \cup C_{-U_{0}}$ near each end. It is also straightforward to see that $(f, g)$ is smooth.

Let $\phi'$ be the solution to the linear wave equation with $(\phi', n_{\Sgm_{0}} \phi') \restriction_{\Sgm_{0}} = (f, g)$. Using finite speed of propagation and the support property of the initial data for $\phi_{sing}$ in Theorem~\ref{construction.existence.thm}, we see that $\phi'$ agrees with $\phi_{sing}$ in the domain of dependence of $\uC_{1} \cup C_{-U_{0}}$. Therefore, $\phi'$ is the desired extension of $\phi_{sing}$. Note furthermore that $\phi'$ is identically zero in the other exterior region $I'$ (see Figure~\ref{fig:RN}).
Henceforth, we will denote the extension $\phi'$ again by $\phi_{sing}$ for simplicity.

We are now ready to conclude the proof. Suppose $\phi_0$ is a solution with smooth compactly supported initial data that has finite nondegenerate energy near the whole bifurcate future Cauchy horizon. By Theorem \ref{decay.thm}, $\mathfrak L=0$. 
Then  
$$\phi=\phi_0+\beta\phi_{sing}$$
is a solution with smooth and compactly supported initial data such that $\mathfrak L\neq 0$ for every $\beta\in\mathbb R\setminus \{0\}$. In particular, using Theorem \ref{decay.thm} again, the spherically symmetric part of the solution has infinite nondegenerate energy near $\CH \cap \set{u \leq -1}$. This result can be extended to the whole ``incoming'' portion of $\CH$ in a straightforward manner; see Remark~\ref{rem:wholeCH} below. By orthogonality of the spherical modes in $L^2$ on the spheres, we thus obtain that for all $\beta\in \mathbb \bbR \setminus  \{0\}$ the solution itself also has infinite nondegenerate energy near the ``incoming'' portion of $\CH$. 

Finally, for the ``outgoing'' portion of $\CH$, it follows again from Theorem~\ref{decay.thm} that the quantity $\mathfrak L'$ defined\footnote{in an analogous manner as $\mathfrak L$} in the other exterior region $I'$ must be zero for $\phi_{0}$. Note that $\mathfrak L' = 0$ for $\phi_{sing}$ constructed above, since it vanishes identically in $I'$. Repeating the same argument as before on the region $I' \cup II$, we obtain a solution $\phi_{sing}'$ with $\mathfrak L' \neq 0$ and $\mathfrak{L} = 0$. Then $\phi = \phi_{0} + \bt (\phi_{sing} + \phi_{sing}')$ for $\bt \in \bbR \setminus \set{0}$ is a solution with infinite nondegenerate energy near the whole bifurcate Cauchy horizon, as desired.
\end{proof}

\subsubsection{Outline of the proof}\label{sec.outline}

We now describe the main steps of the proof of Theorem \ref{decay.thm}. First, we use the ideas\footnote{The analogue of $\mathfrak{L}$ was first introduced in \cite{LO1}, where we studied the sharp decay rates for the scalar field in the \emph{nonlinear} setting of dispersive solutions to the Einstein-scalar field system in spherical symmetry. In that setting, the non-vanishing of (the analogue of) $\mathfrak{L}$ implies a pointwise lower bound for the decay rate of the scalar field. } in \cite{LO1} to show that if we assume both an upper bound \eqref{R.upper.bd.ass} and $\mathfrak L\neq 0$, then we get the following lower bound:
\begin{theorem}\label{thm.decay.R}
There exists a large constant $R_1=R_1(M)>2M$ such that for any solution to \eqref{wave.eqn} with spherically symmetric initial data satisfying \eqref{data.1}-\eqref{data.3}, the following holds:
Assume that
\begin{equation}\label{R.upper.bd.ass}
\sup_{\{r= R_1\}\cap\{u\geq 1\}} u^3|\phi|\leq A'
\end{equation}
for some $A'>0$ and assume moreover that 
$$\mathfrak L\neq 0.$$
Then there exist $R=R(\mathfrak L,A',D,U_0,R_1)>R_1$ and $U=U(\mathfrak L,A',D,U_0,\rd_v(r\phi)\restriction_{C_{-U_{0}}})$ sufficiently large such that the following lower bound holds pointwise on $\gamma_R$ for $u \geq U$:
\begin{equation}\label{thm.1.lower.bd}
|\rd_v(r\phi)|(u,v)\restriction_{r=R}\geq \f{|\mathfrak L|}{8}v^{-3}
\end{equation}
\end{theorem}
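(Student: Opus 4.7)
The strategy is to integrate \eqref{re.wave.eqn.ext} along constant-$v$ slices from $C_{-U_0}$ to $\gamma_R$, yielding
\begin{equation*}
v^3 \rd_v(r\phi)(u_R(v), v) = v^3 \rd_v(r\phi)(-U_0, v) - v^3 \int_{-U_0}^{u_R(v)} \frac{2(M - e^2/r) \Omega^2}{r^2} \phi(u', v) \, du',
\end{equation*}
and to show that as $v \to \infty$ each side admits a limit. By \eqref{data.3} together with $r/v \to 1$ on $C_{-U_0}$ (since $r^* = v + U_0$ there and $r = r^* + O(\log r^*)$ for large $r$), the first term on the right converges to $L_0 := \mathfrak{L} + \int_{-U_0}^\infty 2M \Phi(u)\,du$. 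The heart of the proof is to show the integral converges to $\int_{-U_0}^\infty 2M \Phi(u)\,du$, so that the combined limit is $L_0 - \int_{-U_0}^\infty 2M \Phi(u)\,du = \mathfrak L$; the pointwise lower bound then follows by continuity once $u = u_R(v) \geq U$ with $U$ sufficiently large (the dependence on $\rd_v(r \phi)\restriction_{C_{-U_0}}$ coming from the rate of convergence of $v^3 \rd_v(r\phi)(-U_0,v) \to L_0$).

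The preliminary step is to propagate the hypothesis \eqref{R.upper.bd.ass} outward from $\gamma_{R_1}$ via the wave equation to establish, in the exterior region $\set{u \geq -U_0,\ r \geq R_1}$, the Price-type pointwise bounds
\[ |r \phi(u,v)| \lesssim (1 + |u|)^{-3}, \quad |\Phi(u)| \lesssim (1 + |u|)^{-3}, \quad |\rd_v(r \phi)(u,v)| \lesssim v^{-3}, \]
using standard decay results \`a la \cite{DRPL} and the techniques of \cite{LO1}. Integrating the last bound along outgoing null rays further gives $|r\phi(u,v) - \Phi(u)| \lesssim v^{-2}$, uniformly in $u$.

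To compute the limit of the integral, write $\phi = \Phi(u')/r + (r\phi - \Phi(u'))/r$ and expand $\frac{2(M - e^2/r)\Omega^2}{r^2} = \frac{2M}{r^3} + O(r^{-4})$, so that the leading contribution is $v^3 \int \frac{2M \Phi(u')}{r^3}\, du'$. Since $r^* = v - u'$ and $r = r^* + O(\log r^*)$, the factor $v^3/r^3 \to 1$ uniformly on any bounded range of $u'$. Splitting at a large fixed $V$: the bulk $u' \in [-U_0, V]$ yields $\int_{-U_0}^V 2M \Phi(u')\,du'$ by dominated convergence. For the tail $u' \in [V, u_R(v)]$, split further at $u' = v/2$: on $u' \in [V, v/2]$ we use $r \geq v/2$ and $|\Phi(u')| \lesssim u'^{-3}$ to get $O(V^{-2})$; on $u' \in [v/2, u_R(v)]$ we use $|\Phi(u')| \leq u'^{-3} \leq 8 v^{-3}$ together with the change of variable $du' = -dr/\Omega^2$, which reduces the estimate to $\int_R^\infty dr/(\Omega^2 r^3) \lesssim R^{-2}$ since $\Omega^2$ is bounded below on $r \geq R$. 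The contribution of $(r\phi - \Phi(u'))/r$ is treated analogously: $|r\phi - \Phi| \lesssim v^{-2}$ gives $O(v^{-1})$ on $u' \leq v/2$, while the triangle inequality $|r\phi - \Phi| \lesssim u'^{-3}$ gives $O(R^{-2})$ on $u' \geq v/2$. The $O(r^{-4})$ coefficient correction contributes $O(R^{-1})$ times the main integral. Choosing $R$ large, then $V$ large, then $v$ large enough, all errors can be made smaller than $|\mathfrak L|/100$, establishing the claimed limit.

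The main obstacle is the tail analysis in the near region $u' \in [v/2, u_R(v)]$: the naive decay $|r \phi - \Phi| \lesssim v^{-2}$ alone would produce an error growing linearly in $v$, so one must exploit the sharper $u'^{-3}$ decay of $r\phi$ and $\Phi$ in this regime. This is why \eqref{R.upper.bd.ass} must be propagated from $\gamma_{R_1}$ to the full exterior, and why $R$ must be chosen larger than $R_1$ (so that the $\int_R^\infty dr/r^3$-type integrals, and hence the $O(R^{-1})$, $O(R^{-2})$ errors, are small compared to $|\mathfrak L|$).
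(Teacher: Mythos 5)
Your overall strategy --- integrating \eqref{re.wave.eqn.ext} in $u$ from $C_{-U_0}$ and showing that $v^{3}\rd_{v}(r\phi)$ tends to $\mathfrak L$ on $\gamma_R$ --- is the same as the paper's, and your treatment of the far region $u'\lesssim v/2$ essentially reproduces Proposition~\ref{decay.far}. There is, however, a genuine gap in the near-region tail $u'\in[v/2,u_{R^{\ast}}(v)]$: the bounds $|r\phi(u,v)|\lesssim (1+|u|)^{-3}$ and $|\Phi(u)|\lesssim (1+|u|)^{-3}$ that you propose to propagate from \eqref{R.upper.bd.ass} are false. The hypothesis controls $u^{3}|\phi|$ on $\gamma_{R_1}$, and what actually propagates (this is Proposition~\ref{imp.decay}) is $|\phi|\lesssim u^{-3}$, i.e.\ $|r\phi|\lesssim r\,u^{-3}$; letting $r\to\infty$ along constant $u$ this gives no decay for $\Phi$ beyond the $|\Phi(u)|\lesssim u^{-2}$ of Theorem~\ref{thm:DRPL}. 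Worse, the conclusion you are trying to prove rules out any improvement: once $\rd_{v}(r\phi)\approx \mathfrak L v^{-3}$ holds on $\set{r\geq R,\ u\geq U}$, integrating in $v$ from $\gamma_R$ to null infinity gives $\Phi(u)=\tfrac{\mathfrak L}{2}u^{-2}(1+o(1))$, so $|\Phi(u)|\lesssim u^{-3}$ fails precisely when $\mathfrak L\neq 0$. With the true $u'^{-2}$ decay, each of your two near-region pieces, $v^{3}\int_{v/2}^{u_{R^{\ast}}(v)}\f{\Omg^{2}}{r^{3}}|\Phi(u')|\,du'$ and $v^{3}\int_{v/2}^{u_{R^{\ast}}(v)}\f{\Omg^{2}}{r^{3}}|r\phi-\Phi|\,du'$, is of size $vR^{-2}$ and diverges as $v\to\infty$; the divergences cancel only in the sum $\phi=\Phi/r+(r\phi-\Phi)/r$, and your term-by-term estimate with absolute values discards exactly that cancellation.

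The repair is what forces the paper to split the argument into Propositions~\ref{decay.far} and~\ref{lower.bd.R}: in the near region one must \emph{not} decompose $\phi$ into $\Phi/r$ plus a remainder, but instead use the propagated pointwise bound $|\phi|\lesssim u'^{-3}\lesssim \eta^{-3}v^{-3}$ directly, paired with the weight $\int_{\eta v}^{u_{R^{\ast}}(v)}\f{\Omg^{2}}{r^{2}}\,du'=\int \f{(-\rd_u r)}{r^{2}}\,du'\leq R^{-1}$ (note $r^{-2}$, not $r^{-3}$). This produces an error $\lesssim \eta^{-3}MR^{-1}v^{-3}$, which is small relative to $|\mathfrak L|v^{-3}$ for $R$ large. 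The limit computation via $\Phi$ is then needed only in the far region $u'\leq \eta v$, where $r\gtrsim v$ and your dominated-convergence argument is sound.
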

In the second step, we show that if the solution $\phi$ decays sufficiently fast along the event horizon in $L^2$, then the upper bound \eqref{R.upper.bd.ass} holds (allowing us to apply Theorem \ref{thm.decay.R}), but the lower bound \eqref{thm.1.lower.bd} fails. More precisely, we have
\begin{theorem}\label{horizon.lower.bd}
Assume that $\phi$ is a solution to \eqref{wave.eqn} with spherically symmetric initial data satisfying \eqref{data.1} and \eqref{data.2}. If $\phi$ satisfies the following $L^2$ upper bound on the event horizon $\mathcal H^+$:
$$\int_{\mathcal H^+\cap \{v\geq 1\}} v^{7+\ep}(\rd_v\phi)^2 =A<\infty$$
for some $A>0$ and $\ep>0$, then for every $R\geq r_+$, we have the upper bounds
$$\sup_{\{r_+\leq r\leq R\}\cap\{v\geq 1\}} v^{3+\f \ep 2}|\phi|\leq C$$
and 
$$\int_{\{r=R,\, v\geq 1\}} v^{5}(\rd_v(r\phi))^2\leq C $$
for some $C=C(A,R,D,\ep)>0$.
\end{theorem}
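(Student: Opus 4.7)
My plan has three main stages, starting from pointwise horizon decay and propagating it to the bounded-$r$ region and to $\gamma_R$.

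\emph{Stage 1 (Horizon pointwise decay).} From the $L^2$ assumption and Cauchy--Schwarz, for $v \geq 1$:
\begin{equation*}
\int_v^\infty |\rd_v\phi|(\infty, v')\, dv' \leq A^{1/2} \biggl(\int_v^\infty (v')^{-7-\ep}\, dv'\biggr)^{1/2} \leq C(A, \ep)\, v^{-3-\ep/2}.
\end{equation*}
Combined with $\lim_{v\to\infty} \phi(\infty, v) = 0$ (which follows from the Price's-law-type decay established in \cite{DRPL}), this yields $|\phi|(\infty, v) \leq C v^{-3-\ep/2}$ on $\mathcal H^+ \cap \{v \geq 1\}$.

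\emph{Stage 2 (Propagation to $\{r_+ \leq r \leq R\}$).} I propagate the decay via the wave equation in the form $\rd_u \rd_v(r\phi) = -\frac{2(M-e^2/r)\Omg^2}{r^2}\phi$. Integrating in $u$ from the event horizon yields
\begin{equation*}
\rd_v(r\phi)(u, v) = r_+ \rd_v\phi(\infty, v) + \int_u^\infty \frac{2(M-e^2/r)\Omg^2}{r^2}\phi(u', v)\, du',
\end{equation*}
and a further $v$-integration produces a representation formula for $r\phi(u, v)$. I then close the pointwise bound $|\phi| \leq C v^{-3-\ep/2}$ in $\{r_+ \leq r \leq R\}$ by a bootstrap, exploiting the identity $\int_u^\infty \Omg^2\, du' = r(u,v) - r_+ \leq R - r_+$ (the infinite $u$-range collapses to the finite $r$-interval because $\Omg^2$ decays exponentially towards the horizon). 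Equivalently, this Stage can be carried out via a weighted energy estimate with the red-shift vector field $N$ and $v^q$-weights; the red-shift bulk term absorbs the loss coming from the lower-order potential $V(r) = 2(M-e^2/r)\Omg^2/r^3$.

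\emph{Stage 3 ($L^2$ bound at $\gamma_R$).} Evaluating the $u$-integral representation from Stage 2 at $u = u_R(v)$ gives
\begin{equation*}
|\rd_v(r\phi)(u_R(v), v)|^2 \leq 2r_+^2 (\rd_v\phi)^2(\infty, v) + C\biggl(\int_{u_R(v)}^\infty \frac{\Omg^2 |\phi|}{r^2}\, du'\biggr)^2.
\end{equation*}
Multiplying by $v^5$ and integrating in $v$: the first term contributes $\leq 2r_+^2 A$ (using $v^5 \leq v^{7+\ep}$ for $v \geq 1$); the pointwise bound from Stage 2 together with $\int_{u_R(v)}^\infty \Omg^2\, du' = R - r_+$ bounds the inner integral by $C v^{-3-\ep/2}$, so its square against $v^5$ integrates to $\leq C \int_1^\infty v^{-1-\ep}\, dv < \infty$.

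\emph{Main obstacle.} The delicate step is closing Stage 2 at the correct decay rate without losing powers of $v$. A naive dyadic iteration of the representation formula produces a term of order $\int_{v/2}^v (v')^{-3-\ep/2}\, dv' \sim v^{-2-\ep/2}$, a factor of $v$ too weak. Overcoming this requires careful use of the finite $u$-extent of the bounded-$r$ region in conjunction with the red-shift effect (so that a potentially destabilizing bulk term is absorbed by the coercive redshift contribution), together with a continuity argument justifying the $v \to \infty$ boundary behavior needed in any integration by parts.
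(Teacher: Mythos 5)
Your Stage 1 coincides with the paper's Proposition~\ref{prop:EH:phi}, and your Stage 3 is a legitimate (indeed slightly more direct than the paper's Proposition~\ref{prop:r=R}) way to deduce $\int_{\gamma_R}v^5(\rd_v(r\phi))^2\leq C$ \emph{once} the pointwise bound $\sup_{\{r_+\leq r\leq R\}}v^{3+\ep/2}|\phi|\leq C$ is in hand: integrating the wave equation in $u$ from $\EH$ and using $\int_u^\infty\Omega^2\,du'=r-r_+\leq R-r_+$ controls the bulk term by $Cv^{-3-\ep/2}$, while $v^5\leq v^{7+\ep}$ handles the horizon term. The genuine gap is Stage 2, which is not an auxiliary step but the first of the two asserted conclusions, and your argument for it does not close. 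Your representation formula recovers $r\phi(u,v)$ by integrating $\rd_v(r\phi)(u,\cdot)$ in $v$, and this integration must start at the data surface $\{v=1\}$ (integrating from $v=\infty$ along constant $u$ exits $\{r\leq R\}$ and lands at null infinity, where Theorem~\ref{thm:DRPL} only gives $|r\phi|\lesssim u^{-2}$). The horizon contribution to that integral is $r_+\int_1^v|\rd_v\phi|(\infty,v')\,dv'$, which by Cauchy--Schwarz is bounded but has \emph{no decay whatsoever} in $v$; the inhomogeneous term in your bootstrap is therefore already $O(1)$, and no bootstrap on $|\phi|\leq Cv^{-3-\ep/2}$ can get off the ground. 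This is strictly worse than the $v^{-2-\ep/2}$ loss you flag in your ``main obstacle'' paragraph.

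The alternative you gesture at (a $v^q$-weighted red-shift estimate) is the paper's actual route, but the obstruction it must overcome is not the lower-order potential: it is that the transport estimate for $v^{(7+\ep)/2}|\rd_u\phi/\Omega^2|$ along constant-$u$ rays requires $\int v^{7+\ep}(\rd_v\phi)^2$ along those rays, which off the horizon is exactly what one does not yet have, while the estimate for the latter (multiplying $\rd_u(r\rd_v\phi)=-\rd_v r\,\rd_u\phi$ by $v^\alpha r\rd_v\phi$) produces an error of size $(R-r_+)^2\sup v^{7+\ep}(\rd_u\phi/\Omega^2)^2$. The paper closes this coupled system only on a thin slab $\{r\leq R_2\}$ with $R_2-r_+$ small enough for the factor $(R_2-r_+)^2$ to be absorbed (Proposition~\ref{prop:red-shift}), obtains the pointwise decay of $\phi$ there by integrating $\rd_u\phi$ in $u$ from $\EH$, and then crosses the remaining region $\{R_2\leq r\leq R\}$, where $\Omega^2$ is bounded above and below, by a separate dyadic-in-$v$ spacelike energy estimate with multiplier $(\rd_v-\rd_u)\phi$ and a Gronwall argument in $r^*$, accepting a loss of one power of $v$ and of $\log^2(1+v)$ (harmless since only the weight $v^5$ is needed at $\gamma_R$). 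Your proposal contains neither the smallness/absorption mechanism nor any substitute argument for the intermediate region, so as written the first conclusion of the theorem, and hence the input to your Stage 3, remains unproven.
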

In particular, this proves that under the assumption of Theorem \ref{horizon.lower.bd}, the upper bound \eqref{R.upper.bd.ass} in Theorem \ref{thm.decay.R} holds\footnote{More precisely, this is because for every fixed $R_1$, we have $u\leq C v$ for some $C=C(R_1)$.} but the lower bound \eqref{thm.1.lower.bd} fails. Thus, Theorem \ref{thm.decay.R} and Theorem \ref{horizon.lower.bd} together imply the following lower bound on the event horizon when $\mathfrak L\neq 0$:
\begin{corollary}\label{cor.horizon.lower.bd}
Let $\phi$ be a solution to \eqref{wave.eqn} with spherically symmetric initial data satisfying \eqref{data.1}-\eqref{data.3} such that
$$\mathfrak L\neq 0.$$
Then for every $\ep>0$, the following holds along the event horizon:
\begin{equation}\label{thm.2.EH}
\int_{\mathcal H^+\cap \{v\geq 1\}} v^{7+\ep}(\rd_v\phi)^2 =\infty.
\end{equation}
\end{corollary}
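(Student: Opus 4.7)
The plan is to argue by contradiction, chaining Theorems~\ref{horizon.lower.bd} and~\ref{thm.decay.R}. Suppose that for some $\ep > 0$,
$$\int_{\mathcal H^+ \cap \{v \geq 1\}} v^{7+\ep}(\rd_v\phi)^2 = A < \infty.$$
The goal is to show that this assumption, combined with $\mathfrak L \neq 0$, forces the $v^{-3}$ pointwise lower bound on some $\gamma_R$ furnished by Theorem~\ref{thm.decay.R} to coexist with a finite weighted $L^{2}$ upper bound with weight $v^5$ on that same $\gamma_R$ coming from Theorem~\ref{horizon.lower.bd}; these two conclusions are manifestly incompatible.

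First I would apply Theorem~\ref{horizon.lower.bd} with $R$ equal to the constant $R_{1} = R_{1}(M)$ furnished by Theorem~\ref{thm.decay.R}. This yields the pointwise decay $v^{3+\ep/2}|\phi| \leq C$ throughout $\{r_+ \leq r \leq R_1\}\cap\{v \geq 1\}$. Along the hypersurface $\gamma_{R_1}$ in the exterior region, we have $v - u = r^{\ast}(R_1)$, a fixed constant; hence $u$ and $v$ are comparable on $\gamma_{R_1} \cap \{u \geq 1\}$, up to a bounded range of $u$ handled by the smoothness of $\phi$. The $v$-decay thus transfers to
$$\sup_{\gamma_{R_1} \cap \{u \geq 1\}} u^3 |\phi| \leq A'$$
for some $A' = A'(A, \ep, M, D)$, which is precisely the hypothesis~\eqref{R.upper.bd.ass} of Theorem~\ref{thm.decay.R}.

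Next, with $\mathfrak L \neq 0$ and~\eqref{R.upper.bd.ass} in hand, Theorem~\ref{thm.decay.R} delivers constants $R > R_1$ and $U$ such that $|\rd_v(r\phi)|(u,v) \geq \frac{|\mathfrak L|}{8} v^{-3}$ on $\gamma_R \cap \{u \geq U\}$. Parametrizing $\gamma_R$ by $v$ and using $v = u + r^{\ast}(R)$, integration against the weight $v^5$ yields
$$\int_{\gamma_R \cap \{v \geq 1\}} v^5 (\rd_v(r\phi))^2 \, dv \gtrsim |\mathfrak L|^2 \int_{U + r^{\ast}(R)}^{\infty} \frac{dv}{v} = \infty.$$
On the other hand, a second application of Theorem~\ref{horizon.lower.bd} at this specific $R$ provides the \emph{finite} upper bound $\int_{\gamma_R \cap \{v \geq 1\}} v^5 (\rd_v(r\phi))^2 \leq C(A, R, D, \ep)$, in direct contradiction with the logarithmic divergence above. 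Hence the initial assumption cannot hold for any $\ep > 0$, and~\eqref{thm.2.EH} follows.

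The entire argument is essentially bookkeeping; no new analytical estimate is needed beyond the two theorems. The only point requiring some care is the order in which the two theorems are invoked: Theorem~\ref{horizon.lower.bd} must be applied first at $R = R_1$ to produce the $u^3$-decay needed to feed Theorem~\ref{thm.decay.R}, and then a second time at the $R > R_1$ returned by Theorem~\ref{thm.decay.R} to obtain the finite weighted $L^2$ bound that the pointwise lower bound subsequently contradicts.
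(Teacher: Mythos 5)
Your argument is correct and is essentially the paper's own proof: the paper likewise deduces Corollary \ref{cor.horizon.lower.bd} by noting that the assumed finiteness of the flux would, via Theorem \ref{horizon.lower.bd}, both supply the hypothesis \eqref{R.upper.bd.ass} of Theorem \ref{thm.decay.R} (using $u \leq Cv$ on $\gamma_{R_1}$) and yield a finite $v^{5}$-weighted bound on $\gamma_R$ that contradicts the resulting pointwise lower bound \eqref{thm.1.lower.bd}. No gaps.
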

Finally, in the third step, we consider the region in the interior of the black hole and show that if \eqref{thm.2.EH} holds on the event horizon, then the solution has infinite nondegenerate energy near the Cauchy horizon - more precisely, \eqref{blow.up.bd} holds. We state the contrapositive as follows:
\begin{theorem}\label{final.blow.up.step}
Let $\phi$ be a solution to \eqref{wave.eqn} with spherically symmetric initial data satisfying \eqref{data.1}-\eqref{data.3}. If for some $u\leq -1$ and for some integer $\alpha_0>0$, we have
$$\int_{1}^\infty \log^{\alpha_0}(\f 1{\Omega}) (\rd_v\phi)^2(u,v)\, dv<\infty,$$
then 
$$\int_{\mathcal H^+\cap \{v\geq 1\}} \f{v^{\alpha_0}}{\log^2(1+v)}(\rd_v\phi)^2 <\infty.$$
\end{theorem}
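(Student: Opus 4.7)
The approach is to integrate the spherically symmetric wave equation in the interior from the hypothesis hypersurface $\set{u = u_*}$ back to the event horizon, and then control the resulting expression in $L^{2}$ against the weight $w(v) := v^{\alpha_0}/\log^{2}(1+v)$. Using the interior relation \eqref{int.r}, the wave equation \eqref{wave.eqn.int} can be rewritten as the transport identity $\rd_u(r \rd_v \phi) = \Omega^{2} \rd_u \phi$. Integration in $u$ from $-\infty$ to $u_*$ at fixed $v \geq 1$ yields the pointwise formula
\[ r_+ \rd_v \phi\big|_{\EH}(v) = r(u_*, v) \rd_v \phi(u_*, v) - \int_{-\infty}^{u_*} \Omega^{2}(u, v) \rd_u \phi(u, v) \, du. \]
Squaring and multiplying by $w(v)$, the first piece is controlled directly by the hypothesis. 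On $\set{u = u_*}$ one has $\log(1/\Omega)(u_*, v) \sim \kappa_- v$ for large $v$, where $\kappa_-$ is the surface gravity of the Cauchy horizon (since $\Omega^{2} \sim e^{-2\kappa_-(u_*+v)}$), so that $w(v) \leq C \log^{\alpha_0}(1/\Omega)(u_*, v)$, giving $\int_1^\infty w(v) |r(u_*, v) \rd_v \phi(u_*, v)|^{2} \, dv \leq C \int_1^\infty \log^{\alpha_0}(1/\Omega) |\rd_v \phi|^{2}(u_*, v)\, dv < \infty$.

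For the nonlocal remainder I would apply Cauchy--Schwarz in $u$, using the crucial identity $\int_{-\infty}^{u_*} \Omega^{2} \, du = r_+ - r(u_*, v) \leq r_+ - r_-$ (which follows from $\Omega^{2} du = -dr$ along constant $v$ in the interior), to obtain
\[ \left| \int_{-\infty}^{u_*} \Omega^{2} \rd_u \phi \, du \right|^{2} \leq (r_+ - r_-) \int_{-\infty}^{u_*} \Omega^{2} |\rd_u \phi|^{2}\, du. \]
To control $\rd_u \phi$ inside, I would invoke the conjugate transport identity $\rd_v(r \rd_u \phi) = \Omega^{2} \rd_v \phi$, integrated from $v = 1$, to write
\[ r \rd_u \phi(u, v) = r(u, 1) \rd_u \phi(u, 1) + \int_1^v \Omega^{2} \rd_v \phi(u, v') \, dv'. \]
The initial-data contribution becomes exponentially small in $|u|$ by the bound \eqref{data.2} combined with $\Omega^{2}(u, 1) \sim e^{2\kappa_+(u+1)}$ near the event horizon (where $\kappa_+$ is the surface gravity of the event horizon), and is therefore integrable over $u \leq u_*$. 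The remaining piece expresses $\rd_u \phi$ via $\rd_v \phi$ integrated over the region $\set{u \leq u_*, \, 1 \leq v' \leq v}$, and must be bootstrapped against the weighted energy $F(u) := \int_1^\infty w(v) |\rd_v \phi|^{2}(u, v)\, dv$ on constant-$u$ hypersurfaces.

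The main obstacle, and the reason for the $\log^{2}(1+v)$ loss in the weight, lies in closing this bootstrap uniformly in $u \in (-\infty, u_*]$. A crude Cauchy--Schwarz yields a Gr\"onwall inequality with constant-order coupling between $F(u_1)$ and $\int_{u_1}^{u_*} F(u) du$, which would produce disastrous exponential growth as $u_1 \to -\infty$. The resolution must exploit the precise scaling of $\Omega^{2}$ on each interior slice: for $u \ll 0$, the potential $\Omega^{2}(u, \cdot)$ peaks at $v_c \sim |u|$ with bounded amplitude, and $w(v_c) \sim |u|^{\alpha_0}/\log^{2}|u|$, so that the product $K(u) \cdot \sup_v(\Omega^{2}(u,v)/w(v))$ with $K(u) := \int_1^\infty w(v) \Omega^{2}/r^{2}\, dv$ exhibits a precise cancellation between a polynomial-in-$|u|$ factor and an inverse-polynomial one, with the mismatch absorbed by the $\log^{2}(1+v)$ factor in $w$. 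This is exactly the logarithmic loss between the hypothesis weight $\log^{\alpha_0}(1/\Omega) \sim v^{\alpha_0}$ and the conclusion weight $v^{\alpha_0}/\log^{2}(1+v)$. Once $F(u)$ is bounded uniformly on $(-\infty, u_*]$, passing to the limit $u_1 \to -\infty$ in the integral representation yields the desired estimate on $\EH$.
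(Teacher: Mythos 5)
Your setup is sound: the transport identities $\rd_u(r\rd_v\phi)=\Omega^2\rd_u\phi$ and $\rd_v(r\rd_u\phi)=\Omega^2\rd_v\phi$ are correct consequences of \eqref{wave.eqn.int} and \eqref{int.r}, the identification $\log(1/\Omega)(u_*,v)\sim|\kappa_-|v$ on a fixed interior slice is exactly the reduction the paper makes in passing from \eqref{eq:contra.2.pre} to \eqref{contra.2}, and $\int_{-\infty}^{u_*}\Omega^2\,du=r_+-r(u_*,v)$ is a legitimate Cauchy--Schwarz weight. But the proof has a genuine gap at precisely the point you flag yourself: after Cauchy--Schwarz you must bound the weighted spacetime integral $\iint w(v)\,\Omega^2(\rd_u\phi)^2\,du\,dv$ over the entire region $\set{u\leq u_*,\ v\geq 1}$, and your proposed mechanism for this --- a ``precise cancellation'' between $K(u)$ and $\sup_v(\Omega^2(u,v)/w(v))$ closing a Gr\"onwall inequality --- is asserted, not established. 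The coupling coefficient in the naive Gr\"onwall is not summable in $u$ (as you note), and there is no reason the claimed cancellation comes for free from the scaling of $\Omega^2$: a polynomially growing weight in $v$ cannot be propagated from a single late slice down to the event horizon by Cauchy--Schwarz alone, because nothing in that argument sees the sign of the surface-gravity terms. Your attribution of the $\log^2(1+v)$ loss to a pointwise mismatch on interior slices is also not how the loss actually arises.

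What the paper does instead, and what your argument is missing, is a quantitative decay mechanism toward $i^+$. The paper foliates the interior by the broken null hypersurfaces $\Gmm_\tau$, proves an energy identity and an integrated local energy decay estimate in the spacelike direction (Propositions~\ref{prop:intr:energy} and~\ref{prop:intr:ILED}), and then exploits the \emph{signs} of $-\rd_u\Omega^2$ near $\CH$ and $\rd_v\Omega^2$ near $\EH$ (the red-shift effect for this direction of propagation) with the multipliers $\chi_1(r)\log^\alpha(1/\Omega)\,\rd_v\phi$ and $\chi_2(r)\Omega^{-2}\rd_u\phi$ to produce sign-definite bulk terms (Propositions~\ref{prop:intr:red-shift} and~\ref{prop:intr:red-shift:EH}). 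The bulk term $\alp\iint\chi_1\log^{\alp-1}(1/\Omega)(\rd_v\phi)^2$ then feeds a dyadic pigeonhole iteration (Proposition~\ref{prop:intr:decay}) that trades one power of the $\log(1/\Omega)$ weight on $\Gmm^{(1)}_1$ for one power of $\tau^{-1}$ decay of the flux, $\alpha_0$ times, yielding $\int_{\EH(\tau,\infty)}(\rd_v\phi)^2\lesssim\tau^{-\alpha_0}$; the $\log^2(1+v)$ loss enters only at the very end, from summing $\int_{\EH(\tau_k,\tau_{k+1})}\tau_k^{\alpha_0}(\rd_v\phi)^2\leq C$ over dyadic $\tau_k=2^k$ against the factor $(k+1)^{-2}$. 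To repair your argument you would need to supply an equivalent of this entire hierarchy; the representation formula by itself does not substitute for it.
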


\begin{remark} \label{rem:wholeCH}
The restriction $\set{u \leq -1}$ is purely for technical convenience, and can be dropped easily. One way is to directly use the estimates used in the proof of Theorem~\ref{final.blow.up.step} (see Section~\ref{sec.interior}), which allow us to consider other values of $u$ by a simple \emph{local} argument. Another way is to observe that both the hypothesis and the conclusion of Theorem~\ref{final.blow.up.step} are invariant under the isometry $t \mapsto t + t_{0}$, according to which $u \mapsto u - \frac{1}{2} t_{0}$ and $v \mapsto v + \frac{1}{2} t_{0}$.
\end{remark}

Combining Corollary \ref{cor.horizon.lower.bd} and Theorem \ref{final.blow.up.step} and choosing $\alp_0$ to be an integer such that $\alp_0>7$, we thus obtain Theorem \ref{decay.thm}. We remind the readers again that Theorem \ref{decay.thm} together with the construction of a solution satisfying $\mathfrak L\neq 0$ (which is achieved in Theorem \ref{construction.existence.thm}) conclude the proof of the main theorem (Theorem \ref{main.thm}, Corollary \ref{main.cor}).

\subsection{Wave equation in the exterior of the black hole}\label{sec.exterior}

We now turn to some discussions on previous works. We begin with the wave equation in the exterior region, which has been better understood and tremendous progress has been made in the past decade. The wave equation in the exterior region is relevant for our paper in two ways: First, we need as input to our main theorem some estimates that are obtained for the solutions to the wave equation in the exterior region. Second, as a consequence of the proof of the main theorem, we also show that the so-called Price's law is in a certain sense sharp along the event horizon. We state this result in Corollary \ref{linear.Price.law.sharp} below.

To summarize the known boundedness and decay results for the wave equation in the exterior of Reissner-Nordstr\"om, we introduce the notation that $\Sigma_0$ is an asymptotically flat spacelike hypersurface that penetrates the event horizon $\mathcal H^+$ and $\Sigma_\tau$ is the image of $\Sigma_0$ under the $1$-parameter family of diffeomorphisms generated by $\rd_t$. We have 
\begin{theorem}[Civin \cite{Civin}]\label{thm:Civin}
Given initial data on $\Sigma_0$ which decay towards spatial infinity and have finite nondegenerate energy, i.e., for some $D>0$,
$$E_{\Sigma_0}(\phi)\leq D,$$ 
the following bounds hold:
\begin{enumerate}
\item (Boundedness of energy) For some $C>0$, we have
$$\sup_{\tau\in [0,\infty)} E_{\Sigma_\tau}(\phi) \leq CD.$$
\item (Integrated local energy decay) For every $\delta>0$, there exists $C=C(\delta)>0$ such that

\begin{equation*}
	\int_{0}^{\infty} \bb( \int_{\Sgm_{\tau}} \frac{\chi_{PS}(r)}{r^{1+\dlt}}(\abs{n_{\Sgm_{\tau}} \phi}^{2} +\abs{\nb \phi}_{\Sgm_{\tau}}^{2}) + \frac{1}{r^{3+\dlt}} \abs{\phi}^{2} \bb) \, \ud \tau \leq C D,
\end{equation*}
where $\chi_{PS}(r)$ is a smooth cutoff which vanishes at the photon sphere $r=\f{3M+\sqrt{9M^2-8e^2}}{2}$ and $\abs{\nb \phi}_{\Sgm_{\tau}}^{2}$ is defined using the induced metric on $\Sgm_{\tau}$.
\end{enumerate}
\end{theorem}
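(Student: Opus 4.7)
The approach is the vector field multiplier method of Dafermos--Rodnianski, adapted to the subextremal Reissner--Nordstr\"om exterior (the two-horizon structure is handled in essentially the same way as Schwarzschild since both horizon surface gravities are strictly positive in the range $0<|e|<M$). Concretely, set $T_{\mu\nu}[\phi]=\rd_\mu\phi\rd_\nu\phi-\tfrac12 g_{\mu\nu}(g^{-1})^{\alp\bt}\rd_\alp\phi\rd_\bt\phi$ and use the divergence identity $\nb^\mu(T_{\mu\nu}X^\nu)=\tfrac12 T_{\mu\nu}\pi_X^{\mu\nu}$ for well-chosen multipliers $X$ and, where needed, a modified current $J^X_\mu=T_{\mu\nu}X^\nu+\tfrac12 w\,\phi\rd_\mu\phi-\tfrac14(\rd_\mu w)\phi^2$ to absorb zeroth order errors.

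The key steps, in order, are as follows. First, use the stationary Killing field $T=\rd_t$ to get a conserved (degenerate) energy on $\Sgm_\tau$; this is positive in the exterior since $T$ is timelike there, but it degenerates at $\mathcal H^+$. Second, introduce the Dafermos--Rodnianski redshift vector field $N$, which coincides with $T$ for $r\geq r_++\eta$, is regular and timelike across $\mathcal H^+$, and whose deformation tensor has a definite positive sign in a collar $\{r_+\leq r\leq r_++\eta\}$ because the surface gravity $\kpp_+=\tfrac{r_+-r_-}{2r_+^2}>0$; combining the $T$-energy estimate with the $N$-divergence identity and a Gr\"onwall argument yields the nondegenerate energy boundedness in (i). Third, construct a Morawetz multiplier of the form $X=f(r^*)\rd_{r^*}$, with $f$ chosen so that $f(r_{ps}^*)=0$, $f'\geq0$, and the spacetime current controls $\tfrac{\chi_{PS}(r)}{r^{1+\dlt}}\bigl((\rd_{r^*}\phi)^2+\Omg^2|\nb\mkern-13mu/\,\phi|^2\bigr)$; an additional lower order modification and a Hardy inequality of the form $\int_{\Sgm_\tau}\tfrac{|\phi|^2}{r^{3+\dlt}}\lesssim\int_{\Sgm_\tau}\tfrac{|\rd\phi|^2}{r^{1+\dlt}}$ provide control over the zeroth order piece $\tfrac{|\phi|^2}{r^{3+\dlt}}$ in (ii). Fourth, combine the redshift and Morawetz currents (the redshift absorbs the degeneration of $T$ at the horizon, while the asymptotically flat Morawetz piece handles large $r$) and integrate in $\tau$ to obtain the ILED bound with a right-hand side controlled by $E_{\Sgm_0}(\phi)$.

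The principal obstacle is the photon-sphere trapping at $r=\tfrac{3M+\sqrt{9M^2-8e^2}}{2}$: any multiplier of the form $f(r^*)\rd_{r^*}$ gives a bulk that necessarily vanishes at $r_{ps}$ on high-frequency wave packets that geodesically trap there, which is exactly why the cutoff $\chi_{PS}(r)$ is admissible (and necessary) in the statement. Handling this requires decomposing $\phi$ into spherical harmonics $\phi=\sum\phi_\ell$ and choosing the multiplier $f=f_\ell$ adapted to each $\ell$ so that the critical point of the effective potential $V_\ell(r)=\Omg^2\bigl(\tfrac{\ell(\ell+1)}{r^2}+\tfrac{2M}{r^3}-\tfrac{2e^2}{r^4}\bigr)$ is placed precisely where $f_\ell$ vanishes, then summing in $\ell$ with a uniform (in $\ell$) loss captured by $\chi_{PS}$; the spherical mean $\phi_0$ requires a separate, elementary argument since it sees no trapping. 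Secondary technical points — matching the redshift and Morawetz currents across the collar $r=r_++\eta$, proving the Hardy inequality on $\Sgm_\tau$ up to $\mathcal H^+$, and showing that the zeroth-order modification term has the right sign — are routine but must be executed carefully; once these are in place, the conclusions (i) and (ii) follow upon integrating the combined divergence identity over the spacetime region bounded by $\Sgm_0$ and $\Sgm_\tau$.
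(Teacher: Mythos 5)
The paper does not actually prove this statement: Theorem~\ref{thm:Civin} is quoted as a black box from Civin's thesis, which in turn specializes the Dafermos--Rodnianski--Shlapentokh-Rothman framework for subextremal Kerr(--Newman); that proof proceeds by frequency localization (Fourier transform in $t$ and the azimuthal variable together with (oblate spheroidal) harmonics), because in the rotating case trapping is frequency-dependent and superradiance must be handled. Your sketch instead runs the older, purely physical-space vector field argument ($T$-energy, redshift multiplier $N$ exploiting $\kpp_+>0$, a Morawetz current $f(r^*)\rd_{r^*}$ degenerating at the photon sphere, modified currents and a Hardy inequality for the zeroth-order term, with an $\ell$-dependent choice of $f_\ell$ tied to the critical point of $V_\ell$). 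For the static, spherically symmetric Reissner--Nordstr\"om exterior this is a correct and in fact more elementary route --- it is essentially the Blue--Soffer / early Dafermos--Rodnianski treatment --- and all the structural inputs you invoke are available here: both horizons have positive surface gravity, the photon sphere sits at the larger root of $r^2-3Mr+2e^2=0$, and the critical points of $V_\ell$ accumulate there as $\ell\to\infty$, which is exactly why the uniform-in-$\ell$ summation forces the degeneration encoded by $\chi_{PS}$. The trade-off is generality: your argument does not extend to $a\neq 0$, whereas the cited frequency-localized proof covers the whole subextremal Kerr--Newman family. Two points you should make explicit if you write this out in full: the boundary terms of the Morawetz identity on $\Sgm_\tau$ and on $\EH$ must be controlled by the (redshift-enhanced) energy from step~(i), and the constant in the Hardy inequality must be checked to be uniform up to the horizon, where the induced metric on $\Sgm_\tau$ is non-degenerate only because $\Sgm_\tau$ penetrates $\EH$ transversally.
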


This result is in fact a particular case of a more general theorem that holds for general subextremal Kerr-Newman spacetimes. This latter theorem is in turn based on the methods in the recent seminal work of Dafermos-Rodnianski-Shlapentokh--Rothman \cite{DRSR}, which achieved both the boundedness of energy and the integrated local energy decay estimate for the full range of subextremal Kerr spacetimes. We note that this result has its roots in the remarkable development in the past decade in understanding the decay of solutions to the linear wave equation on the exterior of black hole spacetimes. We refer the readers to \cite{AB, BSt, DRK, DRL, DRS, DRSub1, DRSub2, KW, MMTT, TT} and the references therein for a sample of such developments.

Given the result of Theorem \ref{thm:Civin} together with asymptotic flatness of the spacetime, it is known that if the initial data also have bounded higher order energies, then in fact pointwise estimates hold for the solution $\phi$. There are several approaches to such ``black box'' results, including a vector field method approach by Dafermos-Rodnianski \cite{DRNM} which has applications for nonlinear problems (see \cite{Yang}). On the other hand, the works of Tataru \cite{Ta} and Metcalfe-Tataru-Tohaneanu \cite{MTT} showed a sharper decay rate for the solution under slightly stronger assumptions on the spacetime geometry. It is easy to check that the Reissner-Nordstr\"om spacetime satisfies the assumptions required for these theorems and therefore we have the following result:
\begin{theorem}[Tataru \cite{Ta}, Metcalfe-Tataru-Tohaneanu \cite{MTT}] \label{linear.Price.law}
For sufficiently regular initial data decaying sufficient fast towards spatial infinity, the following pointwise decay estimates hold in the exterior of the black hole:
$$|\phi|\leq \begin{cases}\f{C}{(1+|v|)(1+|u|)^2} & \mbox{if } r\geq 2r_+\\ \f{C}{(1+|v|)^3} & \mbox{if } r_+\leq r< 2r_+  \end{cases},\quad |\rd_t\phi|\leq \begin{cases}\f{C}{(1+|v|)(1+|u|)^3} & \mbox{if } r\geq 2r_+\\ \f{C}{(1+|v|)^4} & \mbox{if } r_+\leq r< 2r_+  \end{cases}$$
\end{theorem}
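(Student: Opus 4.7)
The plan is to implement the Dafermos--Rodnianski ``black box'' philosophy for Price law decay: start from the integrated local energy decay (ILED) and boundedness statements of Theorem~\ref{thm:Civin}, upgrade them to $r^{p}$-weighted energy estimates with polynomial time decay, commute with the available Killing vector fields to control higher derivatives, and finally pass to pointwise bounds via local Sobolev embedding on hyperboloidal slices $\Sgm_{\tau}$. This yields the $v^{-3}$ bound in the near-horizon region; the improved $u^{-2} v^{-1}$ bound in the far region then follows by integrating the wave equation as a transport equation along $u$-lines.

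First I would prove the $r^{p}$-weighted multiplier estimates on $\set{r \geq R_{0}}$ for $R_{0}$ large: applying the multiplier $r^{p} \rd_{v}(r\phi)$ to the rewritten equation \eqref{re.wave.eqn.ext} gives, for $p \in (0, 2]$,
\begin{equation*}
	\int_{\Sgm_{\tau_{2}} \cap \set{r \geq R_{0}}} r^{p} (\rd_{v}(r\phi))^{2} + \int_{\tau_{1}}^{\tau_{2}} \!\! \int_{\Sgm_{\tau} \cap \set{r \geq R_{0}}} p\, r^{p-1} (\rd_{v}(r\phi))^{2} \, \ud \tau \lesssim \int_{\Sgm_{\tau_{1}}} r^{p} (\rd_{v}(r\phi))^{2} + (\mbox{ILED error}).
\end{equation*}
A pigeonhole/mean-value argument combining the $p=2$ statement with the ILED bound produces the basic energy decay $E_{\Sgm_{\tau}}(\phi) \lesssim \tau^{-2}$. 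Commuting with the Killing field $\rd_{t}$ and with the angular momentum Killing fields (generators of $SO(3)$), I rerun the hierarchy for higher-order energies and apply a local Sobolev embedding on cylinders $\set{r_{+} \leq r \leq 2r_{+}} \cap \set{\tau \leq t \leq \tau + 1}$ to extract the pointwise bound $|\phi| \lesssim \tau^{-1} \sim v^{-1}$. Iterating --- using the improved pointwise decay to improve outgoing flux bounds near $\calI^{+}$, which in turn feed back into the hierarchy with larger $p$ --- yields the sharp $|\phi| \lesssim v^{-3}$ in the near-horizon region, and a further $\rd_{t}$-commutation produces $|\rd_{t} \phi| \lesssim v^{-4}$.

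For the far region $r \geq 2r_{+}$, I would integrate \eqref{re.wave.eqn.ext} written as $\rd_{u}\rd_{v}(r\phi) = -2 M r^{-2} \Omg^{2} \phi + O(r^{-3}) \phi$, using the $v^{-3}$ decay of $\phi$ at $r = 2r_{+}$ and the data decay~\eqref{data.1} as input. The $r^{-2}$ weight, integrated in $v$ against this source, produces a radiation field $\Phi(u) := \lim_{v \to \infty} r\phi(u,v)$ satisfying $|\Phi(u)| \lesssim u^{-2}$; reconstructing $r\phi$ at finite $v$ via
\begin{equation*}
	r\phi(u,v) = \Phi(u) - \int_{v}^{\infty} \rd_{v}(r\phi)(u, v')\, \ud v'
\end{equation*}
and dividing by $r$ gives the claimed $|\phi| \lesssim u^{-2} v^{-1}$. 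The $\rd_{t}\phi$ estimate follows by repeating the argument on $\rd_{t}\phi$, which also solves \eqref{wave.eqn}, with one extra power of $u$-decay produced at each stage.

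The principal obstacle is the degeneration of the ILED estimate at the photon sphere $r = \frac{3M + \sqrt{9M^{2} - 8 e^{2}}}{2}$, signaled by the cutoff $\chi_{PS}$ in Theorem~\ref{thm:Civin}. In the spherically symmetric Reissner--Nordstr\"om geometry this is cleanly resolved by commuting with the angular momentum Killing fields and using the fact that $\lapp$ controls the tangential derivatives missing at trapping --- an operation without direct analogue in Kerr, where pseudodifferential commutators are needed. A secondary concern is uniform control up to $\EH$, which is handled by the red-shift vector field $N$ of Section~\ref{sec.nondegenerate}, regular and timelike throughout a neighborhood of the event horizon.
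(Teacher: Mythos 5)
This theorem is not proved in the paper: it is quoted as a black box from Tataru \cite{Ta} and Metcalfe--Tataru--Tohaneanu \cite{MTT}, whose arguments run through \emph{stationary local energy decay and a low-frequency resolvent expansion} (a spectral/Fourier-in-time method), together with the structure of the metric coefficients near spatial infinity. Your proposal to re-derive the result by the physical-space $r^{p}$-weighted vector field method is therefore a genuinely different route, and the overall architecture (ILED $\to$ $r^{p}$-hierarchy $\to$ energy decay $\to$ commutation and Sobolev $\to$ pointwise bounds) is a sensible and well-known one. The remarks about trapping being handled by angular commutation in spherical symmetry, and about the red-shift vector field near $\EH$, are correct.

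However, there is a genuine gap at the central step. The hierarchy with $p \in (0,2]$ combined with ILED yields $E_{\Sgm_{\tau}}(\phi) \lesssim \tau^{-2}$ and, after commutation and Sobolev, only $|\phi| \lesssim \tau^{-3/2}$ in regions of bounded $r$ --- far short of the claimed $v^{-3}$. Your proposed fix, ``iterating'' by feeding improved pointwise decay back into the hierarchy ``with larger $p$,'' does not address the actual obstruction: for $p>2$ the bulk term in the $r^{p}$-identity generated by the potential $-2(M-e^{2}/r)\Omg^{2} r^{-2}\phi$ in \eqref{re.wave.eqn.ext} (schematically $\int r^{p-3}(r\phi)^{2}$) can no longer be absorbed, and overcoming this requires the machinery of Newman--Penrose constants, the faster decay of $\int_{v}^{\infty}\rd_{v}(r\phi)$, and commutation with $r^{2}\rd_{v}$ --- i.e., the extended hierarchy up to $p$ near $5$, which is a substantial body of analysis (carried out only later, by Angelopoulos--Aretakis--Gajic) and not a routine bootstrap. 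Better pointwise decay of $\phi$ by itself does not unlock larger $p$. As written, your argument establishes at best $|\phi|\lesssim v^{-3/2}$ near the horizon, so the claimed $v^{-3}$ and the $v^{-4}$ bound for $\rd_{t}\phi$ (and consequently the $u^{-2}v^{-1}$ and $u^{-3}v^{-1}$ bounds in the far region, which you derive by transporting the near-horizon rates outward) are not reached. If you wish to avoid the spectral route of \cite{Ta, MTT}, you must either invoke the extended $r^{p}$-hierarchy explicitly or follow the spherically symmetric characteristic-integration scheme of Dafermos--Rodnianski \cite{DRPL}, which delivers the slightly weaker $v^{-3+\eps}$ rates that in fact suffice for every application in this paper.
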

Such decay rates are also known under the name ``Price's law'' as they were first suggested by the heuristic study of Price \cite{Price}. We remark that the rigorous proof of the Price's law decay in \emph{spherical symmetry} was achieved previously by Dafermos-Rodnianski \cite{DRPL}, who obtained slightly weaker bounds than that in Theorem \ref{linear.Price.law} but remarkably also in a \emph{nonlinear} setting. In particular, the result in \cite{DRPL} will be useful in the present work (see Theorem \ref{thm:DRPL} below).

On the other hand, one particular consequence of our approach in proving Theorem \ref{main.thm} is that we also obtain an $L^2$ lower bound for generic solutions (see Corollary \ref{cor.horizon.lower.bd}). In particular, we show that no stronger pointwise bounds than that in Theorem \ref{linear.Price.law} for $\rd_v\phi=\rd_t\phi$ on the event horizon can hold. We summarize this in the following corollary:
\begin{corollary}\label{linear.Price.law.sharp}
The decay rate for $|\rd_t\phi|$ in Theorem \ref{linear.Price.law} is sharp on the event horizon. More precisely, except for a possible co-dimension $1$ set of initial data, spherically symmetric data give rise to solutions $\phi$ to \eqref{wave.eqn} which have the property that for every $\ep>0$, there exists a sequence $v_n\to \infty$ (depending on $\ep$ and $\phi$) such that on the event horizon, we have
$$v_n^{4+\ep}|\rd_t\phi(\infty,v_n)|\to \infty.$$
\end{corollary}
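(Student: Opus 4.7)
The plan is to derive Corollary \ref{linear.Price.law.sharp} as a direct contrapositive of Corollary \ref{cor.horizon.lower.bd}, combined with the co-dimension $1$ genericity already established in the proof of Corollary \ref{main.cor}.

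First, I would observe that the functional $\phi\mapsto \mathfrak L$ defined in \eqref{L.condition} is \emph{linear} in the initial data, since both $\rd_v(r\phi)(-U_0,v)$ and the radiation field $\Phi(u)=\lim_{v\to\infty}r\phi(u,v)$ depend linearly on the data in the exterior. Combined with Theorem \ref{construction.existence.thm}, which exhibits a datum with $\mathfrak L\neq 0$, this shows that the exceptional set $\set{\mathfrak L=0}$ is a proper linear subspace of the space of spherically symmetric smooth compactly supported data on $\Sgm_0$, and hence has co-dimension at least $1$. It therefore suffices to produce the claimed sequence $v_n$ for any spherically symmetric datum satisfying $\mathfrak L\neq 0$.

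Second, I would fix such a datum and an arbitrary $\ep>0$, and argue by contradiction. If the conclusion fails for this $\ep$, then $v^{4+\ep}|\rd_t\phi(\infty,v)|$ is bounded as $v\to\infty$, so there exist $C>0$ and $V\geq 1$ with $|\rd_t\phi(\infty,v)|\leq C v^{-4-\ep}$ for all $v\geq V$. On the event horizon $\EH$ the Killing field $\rd_t$ agrees up to a positive multiplicative constant with the coordinate null generator $\rd_v$, so the same pointwise bound holds for $\rd_v\phi(\infty,v)$, and therefore
\begin{equation*}
\int_V^{\infty} v^{7+\ep}(\rd_v\phi)^2(\infty,v)\,dv \;\lesssim\; \int_V^\infty v^{7+\ep}\cdot v^{-8-2\ep}\,dv \;=\; \int_V^\infty v^{-1-\ep}\,dv \;<\;\infty.
\end{equation*}
Together with the trivial bound on the bounded segment $v\in[1,V]$, coming for instance from Theorem \ref{thm:Civin}, this yields $\int_{\EH\cap\{v\geq 1\}} v^{7+\ep}(\rd_v\phi)^2<\infty$, directly contradicting Corollary \ref{cor.horizon.lower.bd} applied with the same $\ep$. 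The required sequence $v_n\to\infty$ must therefore exist.

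There is essentially no further obstacle, since the entire analytic content of the statement has already been packaged into Corollary \ref{cor.horizon.lower.bd}, which itself rests on Theorems \ref{thm.decay.R} and \ref{horizon.lower.bd}. The only mild point worth flagging is that Corollary \ref{linear.Price.law.sharp} is phrased in terms of data on the Cauchy hypersurface $\Sgm_0$, whereas Corollary \ref{cor.horizon.lower.bd} uses characteristic data on $\uC_1 \cup C_{-U_0}$; this is handled exactly as in the proof of Corollary \ref{main.cor} via finite speed of propagation, so that the hypotheses \eqref{data.1}--\eqref{data.3} are automatically satisfied on the characteristic hypersurfaces used here.
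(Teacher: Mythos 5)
Your proposal is correct and follows essentially the same route as the paper, which presents this corollary as a direct consequence of Corollary \ref{cor.horizon.lower.bd}: the pointwise bound $|\rd_t\phi|\lesssim v^{-4-\ep}$ on $\EH$ would make $\int_{\EH}v^{7+\ep}(\rd_v\phi)^2$ finite, contradicting the generic $L^2$ lower bound, while the co-dimension $1$ statement comes from the linearity of $\mathfrak L$ together with Theorem \ref{construction.existence.thm}, exactly as in the proof of Corollary \ref{main.cor}. Your explicit handling of the factor relating $\rd_t$ and $\rd_v$ on the horizon (via the vanishing of $\rd_u\phi$ there) is a point the paper elides, and is welcome.
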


\subsection{The strong cosmic censorship conjecture and previous works}\label{SCC}

As mentioned above, the presence of the smooth Cauchy horizon in the interior of the Reissner-Nordstr\"om black hole allows the maximal globally hyperbolic development to be extended smoothly but non-uniquely as solutions to the Einstein-Maxwell system. On the other hand, the celebrated strong cosmic censorship conjecture of Penrose suggests that such extensions are not possible when given generic data. More precisely, we have

\begin{conjecture}[Strong cosmic censorship]
Maximal globally hyperbolic developments for the Einstein-Maxwell system to generic asymptotically flat initial data are inextendible as suitably regular Lorentzian manifolds.
\end{conjecture}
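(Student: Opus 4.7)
The plan is to attack this conjecture via a three-stage nonlinear program that takes the blue-shift instability established in the present paper as its linearised backbone. Stage one is orbital stability of the subextremal Reissner-Nordstr\"om family in the exterior: for generic smooth compactly supported perturbations of two-ended asymptotically flat Reissner-Nordstr\"om data on $\Sgm_0$, the maximal globally hyperbolic development of the Einstein-Maxwell system should contain a complete future event horizon $\EH$ along which the geometry settles down to a nearby member of the family with quantitative Price's-law decay rates for the appropriate gauge-invariant Teukolsky-type scalars. Stage two is a nonlinear genericity statement for the quantity playing the role of $\mathfrak{L}$ from \eqref{L.condition}: adapting the co-dimension argument of Corollary~\ref{main.cor} one would show that the set of exterior data for which the relevant asymptotic charge vanishes has positive co-dimension in the space of all compactly supported perturbation data. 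Stage three is to propagate the blue-shift blow-up mechanism of Theorem~\ref{final.blow.up.step} through the \emph{dynamical} black hole interior and conclude that the spacetime metric $g$ itself develops infinite nondegenerate energy in every neighborhood of the future Cauchy horizon $\CH$, ruling out any $W^{1,2}_{loc}$ extension across $\CH$ in the sense of Christodoulou.

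For stage one one could build on the recent nonlinear black-hole stability programs (for Schwarzschild, slowly rotating Kerr, and in spherical symmetry the Einstein-Maxwell-scalar work of Dafermos-Rodnianski) together with the linear decay results of Theorems~\ref{thm:Civin} and \ref{linear.Price.law}. Stage two should be a relatively soft perturbation argument once the nonlinear analogue of $\mathfrak{L}$ has been identified, since a single linear functional with a nontrivial value generates a co-dimension one direction in data space, exactly as in the proof of Corollary~\ref{main.cor}. The interior stage is guided by the mass-inflation heuristics of Poisson-Israel and by the $C^0$-extension and weak null singularity picture of Dafermos-Luk: one expects $\CH$ to persist as a bifurcate null hypersurface across which the Christoffel symbols fail to be locally square-integrable. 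The present paper proves precisely the linear shadow of this picture, since the scalar $\phi$ models a typical gravitational degree of freedom and the statement $\int \Omg^{-2}(\rd_v\phi)^2\,dv=\infty$ is the model for non-integrability of the connection coefficients.

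The hard part will be stage three, i.e. upgrading the fixed-background analysis of Section~\ref{sec.interior} to the coupled Einstein-Maxwell system. Three obstacles seem essential. First, the double-null coordinate system $(u,v)$ adapted to the unperturbed background degenerates precisely where the instability is detected, so one must design a geometric double-null gauge that remains regular in the dynamical interior while still resolving the blue-shift asymptotics underpinning Theorem~\ref{thm.decay.R}. Second, the back-reaction of the blue-shifted incoming radiation on the metric cannot be neglected near $\CH$ and must be controlled by a bootstrap argument that prevents $\rd_v r$ from vanishing before $\CH$ is reached, since otherwise a spacelike singularity could intervene \textemdash\ precisely the scenario that the Dafermos-Luk stability theorem is designed to rule out. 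Third, Christodoulou's $W^{1,2}_{loc}$ inextendibility criterion requires a nonlinear lower bound on the full curvature tensor, and it is not obvious how to extract such a bound from a scalar blow-up statement alone; a genuine nonlinear refinement of Theorem~\ref{final.blow.up.step}, able to distinguish a true geometric singularity from a mere coordinate degeneracy, will be needed to close the argument.
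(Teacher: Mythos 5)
This statement is a \emph{conjecture} in the paper, not a theorem: the authors explicitly present strong cosmic censorship as open, prove only its linear ``poor man's'' shadow (Theorem~\ref{main.thm}, Corollary~\ref{main.cor}), and even record a further open Conjecture at the end of Section~\ref{SSEMSF} for the spherically symmetric Einstein--Maxwell--scalar field model. There is therefore no proof in the paper to compare against, and your submission is not a proof either --- it is a research program. Every one of your three stages is itself a major open problem. Stage one (nonlinear stability of the subextremal Reissner--Nordstr\"om exterior for the full Einstein--Maxwell system, with quantitative Price's-law tails along $\EH$) is not known outside spherical symmetry; the linear inputs you cite (Theorems~\ref{thm:Civin} and \ref{linear.Price.law}) are fixed-background statements and do not by themselves control the nonlinear evolution. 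Stage two presupposes that a nonlinear analogue of $\mathfrak{L}$ exists and obeys a linear-functional structure on data space; in the nonlinear problem the map from data to the asymptotic quantity is not linear, so the ``single nontrivial direction gives co-dimension one'' argument of Corollary~\ref{main.cor} (which relies on superposition, $\phi = \phi_0 + \bt\phi_{sing}$) does not transfer without a genuinely new idea. Stage three is the point the paper itself flags as not understood even in spherical symmetry: it is not known whether a \emph{single} regular Cauchy data set close to Reissner--Nordstr\"om produces a solution singular at the Cauchy horizon.

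You also correctly note, but do not resolve, the most serious structural gap: Christodoulou's $W^{1,2}_{loc}$ inextendibility is a statement about \emph{all} possible extensions in \emph{all} possible coordinates, whereas blow-up of an energy integral such as $\int \Omg^{-2}(\rd_v\phi)^2\,dv$ in a particular gauge does not by itself rule out an extension of the metric; one needs a geometric (gauge-invariant) obstruction. Even in the linear setting the paper sidesteps this by working with the nondegenerate energy measured against a globally regular vector field $N$, which is meaningful only because the background Cauchy horizon is fixed and smooth; in the dynamical problem the very object $\CH$ must first be constructed (this is what \cite{DL} provides at the $C^0$ level) before any such norm can be defined on it. In short: your outline is a reasonable description of the expected route and of the known obstacles, but no step of it is carried out, so it cannot be accepted as a proof of the conjecture --- which, consistent with the paper, remains open.
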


In particular, according to the strong cosmic censorship conjecture, the smooth Cauchy horizons of Reissner-Nordstr\"om spacetimes are non-generic. Indeed, one of the early motivations for the strong cosmic censorship conjecture, in addition to the appeal of a deterministic theory, is that the Cauchy horizon appears to be linearly unstable, at least heuristically. It was already observed in the numerical work of Simpson-Penrose \cite{SP} that there is a linear instability mechanism associated to the blue shift effect along the Cauchy horizon. This led to further study of the propagation of linear test fields on a fixed Reissner-Nordstr\"om spacetime \cite{McN, CH, GSNS}. In particular, McNamara \cite{McN} showed that there exist data that can be imposed on past null infinity such that the solution is not regular at the Cauchy horizon.

On the other hand, in the early years of the conjecture, the precise implications of this linear instability was debated. In particular, it was frequently argued that the linear instability would lead to a Schwarzschild-like spacelike singularity for the nonlinear theory. The picture only became clearer after the works of Hiscock \cite{Hiscock} and Poisson-Israel \cite{PI1, PI2} which considered the coupled Einstein-Maxwell-null dust system. In particular, it was argued that general perturbations for the nonlinear system still admits a Cauchy horizon in a neighborhood of timelike infinity for which the metric remains continuous. Moreover, the ``mass inflation'' scenario was put forward, suggesting that generically the mass blows up on the Cauchy horizon and in particular the spacetime is not $C^1$ at the Cauchy horizon. This picture was finally established rigorously in the works of Dafermos \cite{D1, D2} for the Einstein-Maxwell-scalar field system in spherical symmetry.

While the above results are restricted to spherical symmetry, Dafermos-Luk \cite{DL} very recently announced the $C^0$-stability of the Kerr Cauchy horizon\footnote{The Kerr spacetime is a solution to the vacuum Einstein equations which also has a smooth Cauchy horizon. Strictly speaking, the work \cite{DL} only covers the case of Kerr, but one hopes that the methods also give an analogous result regarding perturbations of Reissner-Nordstr\"om spacetime for the Einstein-Maxwell system.}, which provided the first mathematical result regarding perturbations of the interior of the Kerr black hole \emph{without any symmetry assumptions}. More precisely, it was shown that for initial data on the event horizon which are close to and approaching the geometry of the Kerr event horizon, the solution exists all the way up to the Cauchy horizon ``in a neighborhood of timelike infinity''. The solution is moreover everywhere $C^0$-close to the Kerr solution and has a spacetime metric that is continuous up to the Cauchy horizon.

The recent work \cite{DL} in particular shows that a $C^0$ formulation of the strong cosmic censorship conjecture is false provided that the conjectural stability of the exterior region of Kerr holds true. In other words, if the exterior of Kerr is stable, then all solutions arising from data sufficiently close to Kerr spacetimes are in fact extendible with a $C^0$ metric. On the other hand, in view of the mass inflation scenario that is established in \cite{D1, D2} under additional assumptions on the flux along the event horizon, one can still hope that the conjecture holds if we require the class of ``suitably regular'' Lorentzian manifolds to be $W^{1,2}_{loc}$. As pointed out by Christodoulou \cite{Chr}, from the point of view of partial differential equations, the $W^{1,2}_{loc}$ formulation of the strong cosmic censorship conjecture has the consequence that generically the solution does not admit any extensions as weak solutions to the Einstein equations\footnote{This is because $W^{1,2}_{loc}$ is the minimal requirement for the metric to define a weak solution to the Einstein equations.}. 

Nevertheless, despite the progress in understanding the stability of the Cauchy horizons, the mechanism for which the instability occurs in $W^{1,2}_{loc}$ is not understood mathematically. In this paper, instead of discussing nonlinear problems, we return to the study of linear instability. In particular, in Theorem \ref{main.thm}, we prove that for the linear scalar wave equation, there is a global instability mechanism which generically give rise to solutions that are not $W^{1,2}_{loc}$ at the Cauchy horizon. 

In the next section, we discuss some of the known mathematical results regarding the solutions to the linear scalar wave equation - the ``poor man's'' linearized problem. In particular, a result of Dafermos shows that the solution blows up in $W^{1,2}_{loc}$ if one \emph{assumes} lower bounds regarding the global behavior of the solution on the event horizon. However, it is not known whether generic solutions obey this assumed lower bound. In contrast, in our main theorem, we proved that blow up can be guaranteed by the condition \eqref{L.condition} along null infinity, which is satisfied by solutions arising from generic Cauchy data.  We also note that many results that are known for the linear scalar wave equation on Reissner-Nordstr\"om spacetime have also been proved in the \emph{nonlinear} setting of the Einstein-Maxwell-scalar field system in \emph{spherical symmetry}. We hope that our result is also relevant in this setting (see further discussions in Section \ref{SSEMSF}).

\subsubsection{``Poor man's'' linearized problem}

One of the simplest linear problem on Reissner-Nordstr\"om spacetime is that of the linear scalar wave \eqref{wave.eqn}. It can be viewed as a ``poor man's'' version of the linearized Einstein-Maxwell system in which one suppresses the tensorial nature of the linearization as well as ignores all the lower order terms.

To further simplify the analysis for the linear stability and instability of the Cauchy horizon, one can begin with the setting where only trivial data are prescribed on the event horizon. For such data, we have both stability and instability results, which can be summarized as follows:
\begin{theorem}\label{thm.linear.hor}
Consider solutions $\phi$ to the equation \eqref{wave.eqn} with smooth initial data which vanish on the event horizon and have nondegenerate energy on $\underline C_1$ which satisfy the following bound for some $D>0$:
\begin{equation}\label{data.energy.bound}
E_{\underline C_1\cap \{u\leq U\}}(\phi)\leq De^{-\kappa_+ |U|},
\end{equation}
where $\kappa_+>0$ is given by $\kappa_+=\f{r_+-r_-}{2r_+^2}$ with $r_\pm=M\pm \sqrt{M^2-e^2}$ as before.
Then the following statements hold:
\begin{enumerate}
\item (Franzen \cite{Fra}) The solution is uniformly bounded, i.e., there exists $C>0$ such that
$$|\phi|\leq C.$$
\item (Sbierski \cite{Sbi.2}) If $\f{e^2}{M^2}>\f{4\sqrt 2}{3+2\sqrt 2}$, then the solution has finite nondegenerate energy everywhere in the interior of the black hole.
\item (Dafermos\footnote{Strictly speaking, this result is not explicitly stated in \cite{D1} but nevertheless follows from the methods in proving Theorem 2 in \cite{D1}. Moreover, notice that the statement in \cite{D1} is stated in a regular $u$ coordinate and we have translated the statement into a form using the coordinate system introduced in Section \ref{sec.geometry}.} \cite{D1}) Fix $\f{e^2}{M^2}<\f{4\sqrt 2}{3+2\sqrt 2}$. If, in addition to the bound \eqref{data.energy.bound}, there exists $U\in (-\infty,-U_0)$ and $c>0$ such that the spherically symmetric part of the initial data satisfies 
\begin{equation}\label{dafermos.trans.assumption}
\sup_{\underline C_1\cap \{u\leq U\}}|\f{\rd_u\phi}{\rd_u r}|\geq ce^{\kappa_+ s u}
\end{equation}
for\footnote{The condition $\f{e^2}{M^2}<\f{4\sqrt 2}{3+2\sqrt 2}$ guarantees that such an $s$ exists.} $0\leq s<\f 12(\f{r_+}{r_-})^2-1$, then the solution has infinite nondegenerate energy along constant $u$ null hypersurfaces intersecting the Cauchy horizon.
\end{enumerate}
\end{theorem}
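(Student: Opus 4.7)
The plan is to treat the three parts of Theorem~\ref{thm.linear.hor} by different combinations of energy estimates and characteristic analysis, exploiting the redshift at $\EH$ for the boundedness statement and quantifying the blueshift at $\CH$ for the finer (in)stability statements.

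For part (1), Franzen's uniform boundedness, I would first convert the decay hypothesis \eqref{data.energy.bound} on $\uC_1$ into exponentially decaying flux through a foliation of outgoing null hypersurfaces of constant $u$ penetrating $\EH$ in the interior, via a redshift energy estimate applied across $\EH$. I would then propagate into the deeper interior using \eqref{wave.eqn.int}, or more effectively its integrated form for $r\phi$: integration of the source along incoming characteristics, together with a Gronwall-type argument for $r\phi$, yields an $L^\infty$ bound up to $\CH$, since the coefficients in \eqref{wave.eqn.int} are uniformly bounded in the bulk of region $II$. Commuting with angular-momentum operators and a redshift-compatible vector field near $\EH$, combined with Sobolev embedding on the spheres, upgrades the estimate to a pointwise bound on $\phi$.

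For part (2), Sbierski's finite-energy statement in the near-extremal range, the strategy is a multiplier estimate using a timelike vector field $X$ chosen to be regular across both $\EH$ and $\CH$. Applying the divergence identity for $T_{\mu\nu}X^\nu$ on a rectangle whose future right corner abuts $\CH$, one expresses the nondegenerate flux on a constant-$u$ slice as the sum of the initial flux on $\uC_1$, zero flux through $\EH$ (by hypothesis), and a bulk term controlled by the deformation tensor of $X$. The blueshift amplification along $\rd_v$-transport from $\uC_1$ up to $\CH$ is governed by the Cauchy-horizon surface gravity $\kpp_- = (r_+-r_-)/(2r_-^2)$, while the data on $\uC_1$ decays at rate $\kpp_+$. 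A balancing computation shows that the amplified nondegenerate flux remains summable in $|u|$ provided $\kpp_- < 2\kpp_+$, equivalent to $(r_+/r_-)^2 < 2$ and hence to $e^2/M^2 > 4\sqrt 2/(3+2\sqrt 2)$. The delicate step is constructing $X$ whose weights defeat the blueshift uniformly in the bulk, where the coordinate vectors $\rd_u,\rd_v$ degenerate on each horizon.

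For part (3), Dafermos's blowup, I would first reduce to the spherical mode of $\phi$ via an $L^2$ spherical harmonic decomposition, the higher modes enjoying enough angular regularity not to produce the leading singularity. The assumption \eqref{dafermos.trans.assumption} gives a pointwise lower bound for $\rd_u\phi/\rd_u r$ on $\uC_1^{int}$ that behaves like $e^{-\kpp_+ s|u|}$ as $u\to-\infty$. Propagating this lower bound into the interior along $\rd_v$-transport via the equation for $\rd_v\phi$ derived from \eqref{wave.eqn.int}, and accounting for the blueshift amplification governed by $\Omg^{-2}\sim e^{\kpp_- v}$ near $\CH$, produces a lower bound for the nondegenerate energy density $\Omg^{-2}(\rd_v\phi)^2$ on a constant-$u$ slice whose integral in $v$ diverges precisely when $\kpp_- > 2(s+1)\kpp_+$, equivalent to $s < \f12(r_+/r_-)^2 - 1$. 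The main obstacle is showing that the error terms in the transport equation, which couple $\rd_v\phi$ back to $\phi$ and $\rd_u\phi$, are genuinely subleading relative to this blueshift amplification; this requires a bootstrap that exploits the quantitative gap encoded by the allowed range of $s$ together with the $L^\infty$ upper bound from part (1).
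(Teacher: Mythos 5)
The paper does not prove Theorem~\ref{thm.linear.hor} at all: it is a survey statement in Section~\ref{SCC}, quoted from Franzen \cite{Fra}, Sbierski \cite{Sbi.2} and Dafermos \cite{D1}, and the proofs live entirely in those references. So there is no proof in the paper to compare yours against, and any assessment has to be of your sketch on its own terms against the cited literature.

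On that basis, your proposal is a reasonable high-level reconstruction of the known arguments, and your exponent bookkeeping is consistent with the stated thresholds: $\kpp_- < 2\kpp_+$ is indeed equivalent to $(r_+/r_-)^2 < 2$, i.e.\ $e^2/M^2 > 4\sqrt{2}/(3+2\sqrt 2) = 12\sqrt 2 - 16$, and $\kpp_- > 2(s+1)\kpp_+$ is equivalent to $s < \tfrac12 (r_+/r_-)^2 - 1$, matching parts (2) and (3). However, as a proof it is incomplete precisely where the real work lies. In part (1) the step ``Gronwall for $r\phi$ up to $\CH$'' glosses over the fact that a pointwise bound at $\CH$ requires $\int_1^\infty |\rd_v\phi|\,dv < \infty$ on constant-$u$ rays, which by Cauchy--Schwarz forces a weighted energy estimate (Franzen's proof runs through separate redshift, no-shift, and blueshift regions with carefully tuned multipliers, not a scalar Gronwall argument). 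In part (2) you explicitly defer the construction of the multiplier $X$ that defeats the blueshift, which is the entire content of Sbierski's theorem; the heuristic balance of $\kpp_-$ against $2\kpp_+$ does not by itself produce the estimate. In part (3) the claim that the coupling errors are ``genuinely subleading'' is asserted via an unspecified bootstrap; in Dafermos's argument the propagation is done through the conserved-to-leading-order quantity $\rd_u\phi/\rd_u r$ (equivalently via $\rd_u(r\rd_v\phi) = -\rd_v r\,\rd_u\phi$ integrated from $\EH$, where the data vanish), and verifying that the error integrals respect the gap in $s$ is the delicate part. So: correct mechanisms and correct arithmetic, but the proposal stops at the point where each of the three cited proofs actually begins.
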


The result of Dafermos can be interpreted as a blow-up statement for \emph{generic} initial data. In particular, the above theorem suggests that the Cauchy horizon is linearly unstable for a subrange of parameters $e$ and $M$ even when the data vanish on the event horizon. However, one should keep in mind that the solution on the event horizon is \emph{not} expected\footnote{This expectation indeed holds true in view of Corollary \ref{cor.horizon.lower.bd}!} to vanish for generic compactly supported data on the Cauchy hypersurface $\Sigma_0$. It turns out that there is a sense that the solution is ``more unstable'' in this case (and hence our main theorem holds). Nevertheless, we still have the following uniform boundedness result.

\begin{theorem}[Franzen \cite{Fra}]\label{thm.Fra}
Given initial data for \eqref{wave.eqn} which are smooth and compactly supported, there exists $C>0$ such that 
$$|\phi|\leq C$$
globally, including in the interior of the black hole up to the Cauchy horizon.
\end{theorem}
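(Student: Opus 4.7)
The plan is to combine the exterior decay estimates already available with a characteristic integration argument in the black hole interior, organized around the reformulated unknown $\psi = r\phi$. In the interior, $\psi$ satisfies
\begin{equation*}
\rd_u \rd_v \psi = -\frac{2(M - e^{2}/r)\Omg^{2}}{r^{2}}\, \phi,
\end{equation*}
and the key fact is that $\Omg^{2}$ vanishes at both $r = r_{+}$ and $r = r_{-}$, in fact exponentially in $|u|$ as $u \to -\infty$ and exponentially in $v$ as $v \to \infty$.

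First, I would use Theorem~\ref{linear.Price.law} to obtain inverse-cubic decay of $\phi$ and $\rd_{t} \phi$ along $\EH$; commuting the wave equation with the Killing fields $\rd_{t}$ and spherical rotations (and applying a red-shift commutation for transversal derivatives) upgrades this to analogous decay of all higher derivatives. Together with the finiteness of nondegenerate energy on $\Sgm_{0}$, this gives polynomial decay of the nondegenerate energy flux through $\EH \cap \set{v \geq V}$ as $V \to \infty$.

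Next, I would split the interior into three regions: a red-shift collar $r \in [r_{+} - \dlt, r_{+}]$, a middle region $r \in [r_{-} + \dlt, r_{+} - \dlt]$, and a blue-shift collar $r \in [r_{-}, r_{-} + \dlt]$. In the red-shift collar, a standard red-shift multiplier (now applied in the interior, where $r_{+}$ still has positive surface gravity) propagates the decaying nondegenerate energy from $\EH$ across to a hypersurface $\set{r = r_{+} - \dlt}$ with comparable $v$-decay. In the middle region $\Omg^{2}$ is bounded below, the metric is regular, and a standard energy estimate together with Sobolev embedding on the spheres (after commuting with angular momentum operators to obtain higher-order energy bounds) gives a uniform pointwise bound on $\phi$ together with inverse-power decay in $v$ of $\rd_{v}\psi$ on the inner boundary $\set{r = r_{-} + \dlt}$.

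The main obstacle is the blue-shift collar, where the nondegenerate energy is, in fact, expected to blow up (this is precisely the content of the main theorem). The point is that the pointwise bound on $\phi$ survives even when the energy does not. Concretely, integrating the equation for $\psi$ along the $u$-direction from $u_{0}$ (a point on $\set{r = r_{-} + \dlt}$) gives
\begin{equation*}
\abs{\rd_{v} \psi(u,v)} \leq \abs{\rd_{v} \psi(u_{0}, v)} + C \, \Big(\sup_{[u_{0}, u] \times \set{v}} \abs{\phi}\Big) \int_{u_{0}}^{u} \Omg^{2}(u', v) \, \ud u',
\end{equation*}
and the exponential decay of $\Omg^{2}$ both in $|u|$ (bounding $\int \Omg^{2} \, \ud u$) and in $v$ makes the integral small and decaying. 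Integrating again in $v$ from data on $\set{r = r_{-} + \dlt}$ one obtains a bootstrap inequality of the form $\sup_{r \leq r_{-}+\dlt} \abs{\psi} \leq C + \eps \sup \abs{\psi}$ for $\dlt$ small, which closes and yields uniform boundedness of $\psi$, hence of $\phi = \psi/r$ since $r \geq r_{-} > 0$. The hard part is arranging the bootstrap so that the slow pointwise decay on $\set{r = r_{-}+\dlt}$ from Step~3 is strong enough to beat the $v$-integration near $\CH$; this is where the exponential decay of $\Omg^{2}$, rather than mere integrability, is essential.
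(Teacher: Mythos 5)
The paper does not prove this statement: Theorem~\ref{thm.Fra} is quoted from Franzen \cite{Fra} as background, so the only meaningful comparison is with the cited proof. Your overall architecture --- exterior decay, Price's law on $\EH$, then a decomposition of the interior into a red-shift collar near $r_{+}$, a no-shift region, and a blue-shift collar near $r_{-}$, with the pointwise bound surviving even though the nondegenerate energy blows up --- is indeed the architecture of \cite{Fra}, and your first three steps are sound in outline. The problem is the last step, which is exactly where the theorem's difficulty lives.

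The gap is in the blue-shift collar. Your error term in the $u$-integration is $C\sup\abs{\phi}\int_{u_{0}(v)}^{u}\Omg^{2}(u',v)\,\ud u'$, and since $\rd_{u}r=-\Omg^{2}$ in the interior this integral equals $r(u_{0}(v),v)-r(u,v)\leq\dlt$: it is \emph{small} but it does \emph{not} decay in $v$ (the exponential decay of $\Omg^{2}$ in $v$ at fixed $u$ is exactly offset by the growing length of the $u$-interval across the collar). Consequently the subsequent $v$-integration toward $\CH$ produces a term of size $C\dlt\,\sup\abs{\phi}\,(v-v_{0}(u))$, which diverges linearly; you do not obtain $\sup\abs{\psi}\leq C+\eps\sup\abs{\psi}$, and a bootstrap on mere boundedness (or even on a decay rate $v^{-q}$, which degrades to $v_{0}(u)^{-(q-1)}$ after one pass and hence fails to reproduce itself) does not close. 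What is actually needed is quantitative, $v$-integrable decay of $\rd_{v}\phi$ propagated \emph{into} the collar, not just on its inner boundary. Franzen achieves this with weighted energy estimates (multipliers of the schematic form $v^{p}\rd_{v}+\abs{u}^{p}\rd_{u}$, with a further subdivision of the blue-shift region so that the adverse bulk term, which carries a factor $\Omg^{2}$ decaying exponentially in $u+v$, can be absorbed against polynomial weights), and then concludes pointwise boundedness by Cauchy--Schwarz: $\abs{\phi(u,v)-\phi(u,v_{0})}\leq(\int v^{-p}\,\ud v)^{1/2}(\int v^{p}(\rd_{v}\phi)^{2}\,\ud v)^{1/2}<\infty$ for $p>1$. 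These weighted estimates are the analogues (run in the opposite causal direction) of Propositions~\ref{prop:intr:red-shift}--\ref{prop:intr:decay} in Section~\ref{sec.interior} of this paper; without some version of them, your final bootstrap does not close.
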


On the other hand, the fact that instability occurs in the full range of parameters $0<e<M$ if one takes into account the global structure of the spacetime\footnote{as opposed to only the interior of the black hole} is already suggested by the following theorem of Sbierski, which is based on a construction of Gaussian beam solutions capturing the celebrated blue shift effect of the Cauchy horizon.
\begin{figure}[h]
\begin{center}
\def\svgwidth{200px}
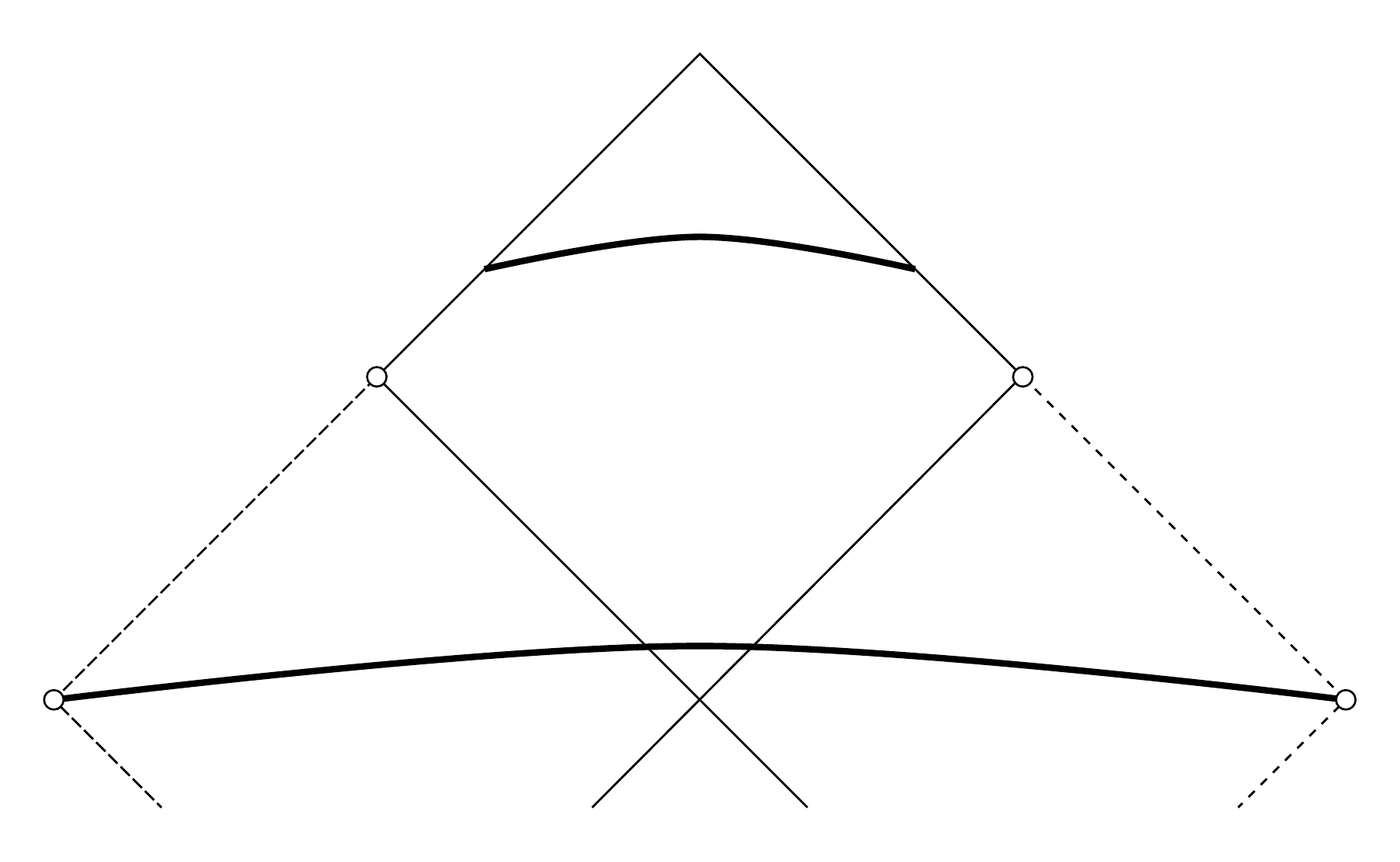 
\caption{} \label{fig:sbierski}
\end{center}
\end{figure}

\begin{theorem}[Sbierski \cite{Sbi.1}]
Let $\Sigma_0$ and $\Sigma_1$ be slices as indicated in Figure~\ref{fig:sbierski}. 
Let $E_{\Sigma_0}$ and $E_{\Sigma_1}$ be the nondegenerate energy on these two hypersurfaces as defined in Section \ref{sec.nondegenerate}.
Then there exists a sequence of solutions $\{\phi_i\}_{i\in\mathbb N}$ such that $E_{\Sigma_0}(\phi_i)=1$ but $E_{\Sigma_1}(\phi_i)\to +\infty$.
\end{theorem}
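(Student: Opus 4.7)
The natural strategy is the Lorentzian Gaussian beam (geometric optics) construction, which produces high-frequency approximate solutions concentrated in tubes around null geodesics and exploits the blue-shift effect at the Cauchy horizon to amplify the nondegenerate energy. The plan is to select a sequence of null geodesic segments $\gamma_{i}$ beginning on $\Sigma_{0}$, crossing the event horizon, and terminating at points $p_{i} = \gamma_{i} \cap \Sigma_{1}$ whose images in a regular Kruskal-type chart at $\CH$ converge to a point $p_{\infty} \in \CH \cap \overline{\Sigma_{1}}$, chosen away from the bifurcation sphere. Each $\gamma_{i}$ is arranged so that its terminus $p_{i}$ corresponds (through the $\rd_{t}$-Killing symmetry, which in the interior coordinates of Section~\ref{sec.geometry} maps $u \mapsto u - \frac{1}{2} t_{0}$ and $v \mapsto v + \frac{1}{2} t_{0}$) to a configuration time-translated by $t_{0,i} \to \infty$, so that $\gamma_i$ spends an unbounded Killing time in the blue-shift region adjacent to $\CH$.

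For each $i$ and each large frequency $\lambda_{i}$, the next step is to build the standard complex-phase WKB ansatz $\phi_{i,\mathrm{app}}(x) = a_{i}(x) e^{i \lambda_{i} \psi_{i}(x)}$, where $\psi_{i}$ solves the eikonal equation $g^{-1}(d \psi_{i}, d \psi_{i}) = 0$ to high order along $\gamma_{i}$ with positive-definite imaginary Hessian transverse to $\gamma_{i}$, and $a_{i}$ is determined by the leading transport equation $2 g^{\mu\nu} (\rd_{\mu} \psi_{i})(\rd_{\nu} a_{i}) + (\Box_{g} \psi_{i}) a_{i} = 0$. Standard Gaussian beam theory then gives $\| \Box_{g} \phi_{i,\mathrm{app}} \|_{L^{2}} = O(\lambda_{i}^{-N})$ for any fixed $N$, with the profile concentrated in a tube of width $\lambda_{i}^{-1/2}$ about $\gamma_{i}$. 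Correcting to an exact solution $\phi_{i}$ of \eqref{wave.eqn} by solving the inhomogeneous wave equation with source $-\Box_{g} \phi_{i, \mathrm{app}}$, and using finite speed of propagation together with the exterior energy boundedness of Theorem~\ref{thm:Civin} and the global $L^{\infty}$ boundedness of Theorem~\ref{thm.Fra}, the corrector contributes only $O(\lambda_{i}^{-N})$ to both $E_{\Sigma_{0}}(\phi_{i})$ and $E_{\Sigma_{1}}(\phi_{i})$ and is hence asymptotically negligible.

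The standard Gaussian beam energy identity then gives, to leading order,
\[
	E_{\Sigma}(\phi_{i, \mathrm{app}}) \asymp |a_{i}(q)|^{2} \, |g(N, \gamma_{i}'(q))| \, |g(n_{\Sigma}, \gamma_{i}'(q))|
\]
at $q = \gamma_{i} \cap \Sigma$, so that after transporting $a_{i}$ along $\gamma_{i}$ the ratio $E_{\Sigma_{1}}(\phi_{i}) / E_{\Sigma_{0}}(\phi_{i})$ reduces to a purely geometric quantity depending only on $\gamma_{i}$, $N$, $n_{\Sigma_{0}}$, and $n_{\Sigma_{1}}$. Here the blue-shift effect enters: because $N$ contains the factor $\Omega^{-2}$ in order to remain non-degenerate and timelike at $\CH$ (see Section~\ref{sec.nondegenerate}) while $\rd_{t}$ becomes null there ($g(\rd_{t}, \rd_{t}) = \Omega^{2} \to 0$), the Killing conservation law $\Omega^{2} (\gamma_{i}'^{u} - \gamma_{i}'^{v}) = E_{i} = \mathrm{const}$ along $\gamma_{i}$, combined with the exponential vanishing of $\Omega^{2}$ near $\CH$ at the rate $2 \kappa_{-}$ given by the surface gravity, forces the geometric ratio to grow like $e^{\kappa_{-} t_{0,i}}$. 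After rescaling $a_{i}$ to achieve $E_{\Sigma_{0}}(\phi_{i}) = 1$ one therefore obtains $E_{\Sigma_{1}}(\phi_{i}) \to \infty$ as desired.

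The main technical obstacle, which I expect to drive the bulk of the work, is ensuring that the Gaussian beam remainder estimate is uniform in $i$ as the geodesics $\gamma_{i}$ accumulate at $\CH$: the phase Hessian, the beam tube, and the bound on $\Box_{g} \phi_{i,\mathrm{app}}$ must all be controlled in coordinates regular at $\CH$, and the energy of the corrector must be propagated backwards through the interior and across the event horizon to $\Sigma_{0}$. A secondary point is that the choice $p_{\infty} \in \CH$ away from the bifurcation sphere is needed so that the regular chart at $p_{\infty}$ and the transversality of $\Sigma_{1}$ to $\gamma_{i}$ are under control, but this is easily arranged.
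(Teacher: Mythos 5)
The paper does not actually prove this theorem: it is quoted from Sbierski \cite{Sbi.1}, and the surrounding text itself records that the proof there ``is based on a construction of Gaussian beam solutions capturing the celebrated blue shift effect of the Cauchy horizon.'' Your proposal --- Gaussian beams along a sequence of null geodesics entering the black hole at later and later Killing time and crossing $\Sigma_1$ at points accumulating on the Cauchy horizon, with the nondegenerate energy amplified through the conserved $\rd_t$-energy against the $\Omega^{-2}$ weight in $N$ --- is exactly that approach, and is essentially correct (modulo the sign slip in the growth rate $e^{\kappa_- t_{0,i}}$, which should read $e^{|\kappa_-| t_{0,i}}$ since $\kappa_- < 0$, and the observation that the remainder estimates need not be uniform in $i$ because the frequency $\lambda_i$ may be chosen separately for each fixed geodesic $\gamma_i$).
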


Another way to view the difference between the cases where the data are posed only in the interior of the black hole and where the data are posed on a Cauchy hypersurface is that one expects that for generic data on a Cauchy hypersurface, the solution exhibits a polynomial tail in the $v$ variable along the event horizon. Such polynomial tails as \emph{upper bound} for the solution along the event horizon has been proved rigorously (see Theorem \ref{linear.Price.law} and \cite{DRPL}) but on the other hand, Dafermos showed that if the Price's law as a lower bound is \emph{assumed} along the event horizon, then the solution $\phi$ has infinite nondegenerate energy. More precisely, we have
\begin{theorem}[Dafermos \cite{D2}]\label{Dafermos.thm}
Assume that the spherically symmetric part of the data on the event horizon satisfy the bounds
\begin{equation}\label{dafermos.horizon.assumption}
C^{-1}v^{-3p+\ep}\leq |\rd_v \phi|\leq Cv^{-p}
\end{equation}
for $v\geq V$, for some constants $C>0$, $V>0$, $p>\f 12$ and the data on a transversal null hypersurface obey
$$\sup_{\underline C_1\cap \{u\leq U\}}|\f{\rd_u\phi}{\rd_u r}|\leq D.$$
Then the solution has infinite nondegenerate energy along constant $u$ null hypersurfaces intersecting the Cauchy horizon.
\end{theorem}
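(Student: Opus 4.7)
The plan is to propagate the pointwise lower bound on $|\rd_v\phi|$ from the event horizon to a constant-$u$ hypersurface in the interior, and then exploit the exponential blueshift weight $\Omg^{-2}$ appearing in the nondegenerate energy near the Cauchy horizon. As a preliminary reduction, one decomposes $\phi = \sum_{\ell, m}\phi_{\ell m}Y_{\ell m}$ into spherical harmonics. Using $L^{2}(\mathbb{S}^{2})$-orthogonality, the nondegenerate energy of $\phi$ dominates that of the spherical mode $\phi_{0}$, and the hypotheses of the theorem (being on the spherical projection of the data) pass to $\phi_{0}$; so it suffices to treat spherically symmetric $\phi$ satisfying \eqref{wave.eqn.int}.

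For the main propagation, write \eqref{wave.eqn.int} in the $r\phi$ form $\rd_u\rd_v(r\phi) = -\frac{2(M-e^{2}/r)\Omg^{2}}{r^{2}}\phi$ (valid in the interior by the same computation as \eqref{re.wave.eqn.ext}). Integrating in $u$ from $-\infty$ (the event horizon) to a fixed $u_{0}$, and using that $\rd_v r(-\infty, v) = 0$, one obtains
\begin{equation*}
    \rd_v(r\phi)(u_{0}, v) = r_{+}\rd_v\phi(-\infty, v) - \int_{-\infty}^{u_{0}}\frac{2(M-e^{2}/r)\Omg^{2}}{r^{2}}\phi(u', v)\, du'.
\end{equation*}
By hypothesis the first term on the right is bounded below by $r_{+}C^{-1}v^{-3p+\ep}$. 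The strategy is to show the integral error is $o(v^{-3p+\ep})$ as $v\to\infty$, which yields $|\rd_v(r\phi)|(u_{0}, v)\gtrsim v^{-3p+\ep}$; since $\rd_v r$ decays exponentially in $v$ as $v\to\infty$, the contribution $\phi\,\rd_v r$ to $\rd_v(r\phi)$ is subleading, so $|\rd_v\phi|(u_{0}, v)\gtrsim v^{-3p+\ep}$ as well.

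Controlling the error requires two bootstrap-type bounds in the interior. First, the transversal bound $|\rd_u\phi/\rd_u r|\leq D$ on $\uC_{1}$, combined with Franzen's uniform bound on $\phi$ (Theorem~\ref{thm.Fra}) and the wave equation, should propagate to $|\rd_u\phi|(u', v)\lesssim\Omg^{2}(u', v)$ in a neighborhood of the bifurcation sphere of $\CH$. Second, the upper bound $|\rd_v\phi|\leq Cv^{-p}$ on $\EH$, together with \eqref{wave.eqn.int}, should propagate to $|\rd_v\phi|(u',v)\lesssim v^{-p}$ in the interior, whence integration in $v$ constrains $\phi$ itself. With these bounds in hand, the integrand in the error is $O(\Omg^{2})$ times a $v$-decaying factor; since $\Omg^{2}(u', v)\sim e^{2\kappa_{+}u'}$ as $u'\to -\infty$, the $u'$-integral is convergent, and one hopes the $v$-decay of $\phi$ in the interior beats the main term $v^{-3p+\ep}$ thanks to the built-in relation $3p - \ep > p$.

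Plugging $|\rd_v\phi|(u_{0},v)\gtrsim v^{-3p+\ep}$ into the definition of the nondegenerate energy on $C_{u_{0}}$, and using $\Omg^{-2}(u_{0}, v)\sim (2\kappa_{-})^{-1}e^{2\kappa_{-}v}$ as $v\to\infty$, one gets
\begin{equation*}
    \int_{V}^{\infty}\Omg^{-2}(u_{0}, v)|\rd_v\phi|^{2}(u_{0}, v)\, dv \gtrsim \int_{V}^{\infty} e^{2\kappa_{-}v}v^{-6p+2\ep}\, dv = +\infty,
\end{equation*}
which yields infinite nondegenerate energy on every constant-$u$ null hypersurface transversal to $\CH$. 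The main obstacle is the error bound in the preceding paragraph: a crude application of Franzen's bound $|\phi|\lesssim 1$ gives an $O(1)$ error that is much larger than the main term $v^{-3p+\ep}$, so one must exploit both the redshift at $\EH$, which makes $\rd_u\phi$ vanish like $\Omg^{2}$, and the genuine $v$-decay of $\phi$ off the horizon inherited from the upper bound $|\rd_v\phi|\leq Cv^{-p}$ on $\EH$. Executing this delicate interplay between the redshift structure of $\EH$ and the prescribed polynomial decay is the crux of the argument.
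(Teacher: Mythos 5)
This statement is quoted from Dafermos \cite{D2} as background; the paper itself gives no proof of it, so there is nothing in-paper to compare your argument against, and I can only assess the proposal on its own terms.

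There is a genuine gap at exactly the point you flag as ``the crux'': the error term
$\int_{-\infty}^{u_{0}} \f{2(M-e^{2}/r)\Omg^{2}}{r^{2}} \phi(u',v)\, du'$
cannot be made $o(v^{-3p+\ep})$ by the tools you invoke. Since $\rd_{u} r = -\Omg^{2}$ in the interior, $\int_{-\infty}^{u_{0}} \Omg^{2}\, du' = r_{+} - r(u_{0},v) \to r_{+}-r_{-}$, so the $u'$-integral carries no smallness; everything hinges on $\phi$ itself decaying in $v$, and it does not. For $\f12 < p \leq 1$ the assumed upper bound $|\rd_{v}\phi| \leq C v^{-p}$ is not even integrable in $v$, so it yields no decay rate for $\phi$ on $\EH$, let alone in the interior; and in the interior $\phi$ is merely bounded (Theorem~\ref{thm.Fra}) and generically tends to a \emph{nonzero} limit at $\CH$. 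The redshift bound $|\rd_{u}\phi| \lesssim \Omg^{2}$ controls $\rd_{u}\phi$, not $\phi$, so it does not touch this term. The error is therefore generically $O(1)$, while the main term is only bounded below by $v^{-3p+\ep}$ (and above by $v^{-p}$): the error swamps the signal. The standard remedy is to integrate the \emph{other} commuted form, $\rd_{u}(r \rd_{v}\phi) = -\rd_{v} r\, \rd_{u}\phi = \Omg^{2} \rd_{u}\phi$, so that the error involves $\Omg^{2}\rd_{u}\phi$, and then to close a bootstrap for $\rd_{u}\phi$ and $\rd_{v}\phi$ across Dafermos's red-shift, no-shift and blue-shift regions, with the curves separating these regions chosen to depend on $v$ (e.g.\ $r^{\ast} \sim \alpha \log v$). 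It is precisely the losses incurred in the blue-shift region that degrade the exponent from $p$ on $\EH$ to $3p-\ep$ in the interior; your proposal contains no mechanism producing this exponent, which is a sign that the quantitative structure of the argument is missing. Two smaller points: the final display has a sign slip --- since $\kpp_{-}<0$ one has $\Omg^{-2}(u_{0},v) \sim e^{-2\kpp_{-}v} = e^{2|\kpp_{-}|v}$, and as literally written $\int e^{2\kpp_{-}v} v^{-6p+2\ep}\, dv$ is \emph{finite}; and the reduction to the spherically symmetric mode via $L^{2}(\bbS^{2})$-orthogonality is fine and matches how this paper handles the analogous reduction elsewhere.
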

\begin{remark}
In fact, given the pointwise bounds in \eqref{dafermos.horizon.assumption}, one can show more: the spherically symmetric part of $\rd_v\phi$ obeys a pointwise lower bound
$$|\rd_v\phi|\geq c v^{-3p+\ep}$$
in the interior of the black hole and therefore is not in $W^{1,p}_{loc}$ for all $p>1$.
\end{remark}
Theorem \ref{Dafermos.thm} in particular reduces the problem of generic blow up to showing that solutions to generic initial data on $\Sigma_0$ obey the bounds \eqref{dafermos.horizon.assumption}. However, showing this bound for generic data remains an open problem and in fact it is not known whether there exist\footnote{One na\"{i}ve way to try to construct Cauchy data such that the solutions satisfy the desired lower bounds is to impose data on the event horizon and future null infinity and solve backwards. However, in view of the red-shift effect on the event horizon, one expects that the Cauchy data constructed in this manner are generically non-regular. See the discussions in \cite{DRSRS}.} \emph{any} regular Cauchy data on $\Sigma_0$ that give rise to a solution verifying \eqref{dafermos.horizon.assumption}. On the other hand, we show in the present paper that for generic data, a weaker lower bound than \eqref{dafermos.horizon.assumption} holds. Namely, we prove that for generic data, \eqref{thm.2.EH} in Corollary \ref{cor.horizon.lower.bd} is verified along the event horizon and this can be viewed as a polynomial lower bound in an $L^2$-averaged sense. Moreover, we show that this is already sufficient to guarantee that the solution is in fact not in $W^{1,2}_{loc}$ near the Cauchy horizon, thus resolving the problem of $W^{1,2}_{loc}$ blow up for generic Cauchy data.

\subsubsection{Spherically symmetric Einstein-Maxwell-scalar field system}\label{SSEMSF}

It turns out that many of the insights gained in the ``poor man's'' linearized problem can in fact be applied to the \emph{nonlinear} problem for the Einstein-Maxwell-scalar field system with \emph{spherically symmetric} data in perturbative regimes. 
More precisely, in a series of works\footnote{We also refer the readers to \cite{KomThe, CGNS2, CGNS3} for some related recent results in this direction.} \cite{D1,D2,D3}, Dafermos studied the following system with spherically symmetric initial data:
\begin{equation*}
\begin{split}
Ric_{\mu\nu}-\f12 g_{\mu\nu} R=&2(T^{(sf)}_{\mu\nu}+T^{(em)}_{\mu\nu}),\\
T^{(sf)}_{\mu\nu}=&\rd_\mu\phi\rd_\nu\phi-\f 12 g_{\mu\nu} (g^{-1})^{\alp\beta}\rd_\alp\phi\rd_{\beta}\phi,\\
T^{(em)}_{\mu\nu}=&(g^{-1})^{\alp\bt}F_{\mu\alp}F_{\nu\bt}-\f 14 g_{\mu\nu}(g^{-1})^{\alp\bt}(g^{-1})^{\gamma\sigma}F_{\alp\gamma}F_{\bt\sigma},
\end{split}
\end{equation*}
where $\phi$ is a scalar function and $F$ is a $2$-form satisfying
$$\Box_g\phi=0\,\quad dF=0\,\quad F^{\mu\nu}{ }{ }_{;\nu}=0.$$
Here, the subscript $(\cdot)_{;\nu}$ stands for a covariant derivative. For this system with spherically symmetric data, it is known that small data give rise to globally $C^0$ bounded solutions. This can be viewed as a version\footnote{In fact, much more is proved in \cite{D2, DRPL}: Given (potentially large) data with non-vanishing charge such that the solution approaches a subextremal Reissner-Nordstr\"om spacetime in the exterior region, there exists ``a piece of Cauchy horizon'' near future timelike infinity such that the metric and the scalar field remains continuous up to the Cauchy horizon.} of Theorem \ref{thm.Fra} in the (much more complicated) nonlinear setting:
\begin{theorem}[Dafermos \cite{D2,D3}, Dafermos-Rodnianski \cite{DRPL}]
Given $2$-ended asymptotically flat spherically symmetric initial data globally close to Reissner-Nordstr\"om data, the maximal globally hyperbolic development has the same causal structure as that of Reissner-Nordstr\"om and is globally $C^0$-close to Reissner-Nordstr\"om. In particular, the metric and the scalar field is continuous up to the global bifurcate Cauchy horizon.
\end{theorem}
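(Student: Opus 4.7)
The plan is to work entirely within the spherically symmetric reduction of the Einstein-Maxwell-scalar field system, in double null coordinates $(u,v)$ compatible with the two asymptotically flat ends. In these coordinates the metric takes the form $g=-\Omega^2\,du\,dv+r^2\,d\sigma_{\mathbb{S}^2}$, the charge $e$ is constant by Maxwell's equations, and the unknowns $(r,\Omega,\phi)$ satisfy a closed $1{+}1$-dimensional system of transport and wave equations. Boundedness and convergence to Reissner-Nordstr\"om reduces to controlling the renormalized Hawking mass $\varpi$, the quantities $\lambda=\partial_v r$ and $\nu=\partial_u r$, the conformal factor $\Omega^2$, and the scalar field $\phi$ together with $\partial_u(r\phi)$ and $\partial_v(r\phi)$. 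I would split the argument into an \emph{exterior} part (up to the future event horizon $\mathcal{H}^+$) and an \emph{interior} part (between $\mathcal{H}^+$ and the future Cauchy horizon $\mathcal{CH}^+$), working separately at each end and then gluing at the bifurcation sphere of $\mathcal{CH}^+$.

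For the exterior I would first run a bootstrap argument built on the monotonicity of the Hawking mass, a $\partial_t$-energy estimate, and the red-shift vector field near $\mathcal{H}^+$, to establish orbital stability: the solution settles down to a subextremal Reissner-Nordstr\"om geometry with mass close to $M$ and charge close to $e$, so that $\mathcal{H}^+$, $\mathcal{I}^+$ and $i^+$ exist and have the expected causal relationships. The key quantitative output needed for the interior analysis is a Price's-law-type polynomial decay for $\phi$ and $\partial_v\phi$ along $\mathcal{H}^+$, roughly of the form $|\phi|\lesssim v^{-3}$ and $|\partial_v\phi|\lesssim v^{-4}$; this is obtained via a hierarchy of $r^p$-weighted estimates of Dafermos-Rodnianski type together with the convergence of the integrated quantities of the same flavor as those studied in the present paper.

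For the interior the main enemy is the blue-shift: $\Omega^2$ decays exponentially along constant-$u$ hypersurfaces as $v\to\infty$, and the natural energy of $\phi$ is expected to diverge (indeed, the divergence of such energies in the linear problem is the content of the present paper's main theorem). The remarkable feature is that the pointwise amplitude of $\phi$ and the appropriately renormalized geometry remain controlled. I would divide the interior into three subregions separated by constant-$r$ curves: a red-shift region just inside $\mathcal{H}^+$ where one still benefits from the event horizon geometry; an intermediate region where $r$ is bounded away from both $r_+$ and $r_-$ and the system is uniformly regular; and a blue-shift region close to $\mathcal{CH}^+$. In the blue-shift region one integrates the $\partial_v$-transport equation for $\partial_u(r\phi)$ and the $\partial_u$-transport equation for $\partial_v(r\phi)$ against the polynomial boundary decay coming from $\mathcal{H}^+$ (and, by symmetry, from the second event horizon). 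The polynomial tail is just enough to beat the exponentially decaying weight $\Omega^2$ in the nondegenerate $C^0$ quantities, yielding finite limits for $r$, for $\phi$, and for the rescaled metric, hence a $C^0$ extension to $\mathcal{CH}^+$.

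The hard part will be the interior bootstrap in the blue-shift region. One must simultaneously close a bootstrap for the geometric quantities (so that $r$ has a positive limit close to $r_-$ and $\Omega^2$ behaves as in Reissner-Nordstr\"om) and for the scalar field, even though natural energy norms diverge. A viable strategy is a carefully tuned hierarchy: first establish polynomial-in-$v$ bounds for $\phi$ and $\partial_v(r\phi)$ with nondegenerate weights on horizontal segments of constant large $|u|$; then convert these into pointwise-in-$(u,v)$ bounds by integrating the $\partial_u$-equations with weights proportional to $\Omega^2$, which absorb the exponential blow-up; finally, use these bounds to control the renormalized Hawking mass and close the geometric part of the bootstrap. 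Gluing the analysis at both asymptotic ends and at the bifurcation sphere of $\mathcal{CH}^+$ then yields the statement that the maximal globally hyperbolic development has the Reissner-Nordstr\"om causal structure and is globally $C^0$-close to it, with $\phi$ and the metric continuous up to the global bifurcate Cauchy horizon.
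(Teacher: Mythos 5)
The paper does not prove this theorem: it is quoted as background, attributed to Dafermos \cite{D2,D3} and Dafermos--Rodnianski \cite{DRPL}, so there is no in-paper proof to compare your proposal against. Judged against the strategy of those references, your outline is essentially faithful: the reduction to the $1+1$-dimensional system for $(r,\Omega,\phi)$ with constant charge, the exterior orbital stability plus Price's-law decay along $\mathcal{H}^{+}$ (in \cite{DRPL} the rate is $v^{-3+\epsilon}$ rather than the sharp linear rates you quote, but any sufficiently fast polynomial rate suffices), and the interior decomposition into red-shift, no-shift and blue-shift regions with transport estimates showing that the polynomial tail along the event horizon beats the degenerating weight $\Omega^{2}$ in the $C^{0}$ quantities.

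One substantive caveat. In the blue-shift region you propose to ``control the renormalized Hawking mass and close the geometric part of the bootstrap.'' If by ``control'' you mean proving boundedness of the mass, this step would fail: the content of the mass-inflation results (quoted immediately after this theorem in the same section of the paper) is precisely that the Hawking mass generically blows up identically along $\mathcal{CH}^{+}$, and this blow-up coexists with the $C^{0}$ extendibility you are trying to prove. The bootstrap in \cite{D2} must therefore be closed on quantities that remain finite even when the mass diverges: the sign conditions and the positive lower bound on $r$, together with integrated fluxes of the schematic form $\int (\partial_{v}(r\phi))^{2}/|\partial_{v} r|$ and their $u$-analogues, which are what actually propagate through the blue-shift region and yield continuity of $\phi$ and of the metric in a suitably renormalized double-null gauge. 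Your proposal as written does not make clear that the geometric bootstrap is deliberately decoupled from boundedness of the mass, and that decoupling is the crux of why the interior argument works at all.
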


Moreover, in the settings analogous to that in Theorems \ref{thm.linear.hor} and \ref{Dafermos.thm}, Dafermos also showed that the Cauchy horizon is indeed singular in the sense that the Hawking mass blows up. 
\begin{theorem}
Let $(\mathcal M, g, F,\phi)$ be a spherically symmetric solution to the Einstein-Maxwell-scalar field equation.
\begin{enumerate}
\item (Dafermos \cite{D1}) Consider initial data which is exactly that of Reissner-Nordstr\"om on the event horizon and assume that \eqref{dafermos.trans.assumption} holds. Then the Hawking mass is identically infinite along the Cauchy horizon.
\item (Dafermos \cite{D3}) If the scalar $\phi$ satisfies \eqref{dafermos.horizon.assumption} along the event horizon and the data are regular on $\underline C_1$, then the Hawking mass blows up identically along the Cauchy horizon.
\end{enumerate}
Furthermore, in both of these scenarios, the scalar field is not in $W^{1,2}_{loc}$ in any neighborhood of any point on the Cauchy horizon.
\end{theorem}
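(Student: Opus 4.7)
The plan is to implement Dafermos' mass inflation mechanism in spherical symmetry using double null coordinates. Write the metric as $g = -\Omega^{2}\, du\, dv + r^{2}\, d\sigma_{\bbS^{2}}$; the Maxwell equations force the charge $e$ to be constant. With $\lambda = \rd_{v} r$, $\nu = \rd_{u} r$, and $\mu = 2m/r - e^{2}/r^{2}$, one has $1 - \mu = -4\lambda\nu/\Omega^{2}$, and the Einstein equations give the Raychaudhuri equations $\rd_{v}(\Omega^{-2}\lambda) = -r\Omega^{-2}(\rd_{v}\phi)^{2}$, $\rd_{u}(\Omega^{-2}\nu) = -r\Omega^{-2}(\rd_{u}\phi)^{2}$, together with the mass transport equations
\begin{equation*}
\rd_{v} m = \tfrac{1}{2} r^{2} (1-\mu)(\rd_{v}\phi)^{2}/\lambda, \qquad \rd_{u} m = \tfrac{1}{2} r^{2}(1-\mu)(\rd_{u}\phi)^{2}/\nu,
\end{equation*}
both of which have good signs in the interior of the black hole, where $\lambda, \nu < 0$ and $\mu > 1$.

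The first step is to establish, on a region $\set{u \geq U_{0}, v \geq V_{0}}$ bordering the bifurcate Cauchy horizon near timelike infinity, a bootstrap showing that $r$ converges to a limit $r_{\infty} \in [r_{-}, r_{+})$ as $v \to \infty$ and that the geometric quantities $|\Omega^{-2}\lambda|$ and $|\Omega^{-2}\nu|$ stay comparable to their Reissner-Nordstr\"om values modulo the nonlinear correction from Raychaudhuri. This relies on the $C^{0}$-stability results for the black hole interior cited earlier, and it guarantees that the amplification factor $|1-\mu|/|\lambda| = 4|\Omega^{-2}\nu|$ appearing in the mass transport equation grows in $v$ as $\CH$ is approached, since Raychaudhuri forces $|\Omega^{-2}\nu|$ to be monotone increasing along ingoing rays as soon as $\rd_{u}\phi$ is not identically zero.

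The second step is to propagate the assumed lower bound on $\phi$ from the characteristic data into a neighborhood of $\CH$. In case (1), where $\phi$ is Reissner-Nordstr\"om on $\EH$ but satisfies \eqref{dafermos.trans.assumption} on $\uC_{1}$, one integrates the scalar wave equation along ingoing null segments of bounded $v$-length and uses the control of $\Omega^{-2}\nu$ from Raychaudhuri to obtain $|\rd_{u}\phi/\nu|(u, v) \gtrsim e^{\kappa_{+} s u}$ throughout a neighborhood of $\CH$. In case (2), combining \eqref{dafermos.horizon.assumption} with transport along outgoing null lines yields a pointwise lower bound $|\rd_{v}\phi|(u, v) \gtrsim v^{-3p+\epsilon}$ on curves approaching $\CH$. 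Inserting either bound into the mass transport equation and using the geometric amplification from Step 1 yields
\begin{equation*}
m(u, v) \geq \int_{v_{0}}^{v} \tfrac{1}{2} r^{2} |1-\mu| (\rd_{v}\phi)^{2}/|\lambda| \, dv' \to \infty \quad \text{as } v \to \infty,
\end{equation*}
establishing $m \equiv \infty$ on $\CH$ in both cases.

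To pass from $m \equiv \infty$ to $\phi \notin W^{1,2}_{loc}$, one argues by contradiction: if $\phi$ were locally $W^{1,2}$ near a point $p \in \CH$ in regular coordinates $(U, V)$, then in a small neighborhood its $W^{1,2}$ norm would majorize $\int \Omega^{-2}(\rd_{v}\phi)^{2}\, du\, dv$, which by Fubini and the first mass transport equation forces $m$ to be finite along almost every constant-$u$ slice approaching $\CH$, contradicting mass inflation. The main obstacle is the second step: in case (1), the algebraic restriction $s < \tfrac{1}{2}(r_{+}/r_{-})^{2} - 1$ in \eqref{dafermos.trans.assumption} is exactly what allows the exponential lower bound on $\uC_{1}$ to survive the competing red-shift along $\EH$ and emerge on the Cauchy horizon side with a net growth; in case (2), one must track constants in the interior wave equation estimates carefully, so that possible cancellations in the integrals of $\rd_{v}\phi$ cannot destroy the pointwise polynomial lower bound before the geometric amplification takes effect.
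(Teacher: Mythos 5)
This theorem is not proved in the paper at all: it appears in Section~\ref{SSEMSF} purely as background, quoted from Dafermos's works \cite{D1} and \cite{D3} on the spherically symmetric Einstein--Maxwell--scalar field system, so there is no in-paper proof to compare your sketch against. Judged on its own terms as a reconstruction of the cited argument, your outline correctly identifies the standard framework: double null gauge, the Raychaudhuri and mass-transport equations, the good signs of $\rd_u m$ and $\rd_v m$ in the trapped region, the identification of the amplification factor $\abs{1-\mu}/\abs{\lambda}$ with $4\abs{\Omg^{-2}\nu}$, and the passage from mass inflation to failure of $W^{1,2}_{loc}$ via the boundedness of $\nu$ near $\CH$. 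That last reduction is essentially sound.

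The genuine gap is your second step, which is where all of the content of \cite{D1} and \cite{D3} actually lives. You assert that "integrating the scalar wave equation along ingoing null segments" propagates the lower bound \eqref{dafermos.trans.assumption} (resp.\ \eqref{dafermos.horizon.assumption}) into a full neighborhood of $\CH$, but the wave equation couples $\rd_u\phi$ and $\rd_v\phi$ with coefficients involving $\lambda$ and $\nu$, and nothing in a naive integration prevents cancellation from destroying the lower bound before the blue-shift amplification acts. Dafermos's actual proofs proceed by a dichotomy/contradiction structure: one assumes the mass does \emph{not} inflate, derives from this strong quantitative upper bounds on the geometry and the field in a characteristic rectangle (this is where the restriction $s<\tfrac12(r_+/r_-)^2-1$ enters, as a comparison between the decay rate of the data and the surface gravity of the Cauchy horizon), and then shows these upper bounds are incompatible with the assumed lower bound on the data. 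Your sketch names this obstacle in its closing paragraph but does not supply the mechanism that overcomes it, so as written the argument establishes the conclusion only modulo the hardest step. There is also a mild circularity in Step 1, which invokes the $C^0$-stability of the interior as an input even though that stability is itself part of the same chain of results being reconstructed.
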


On the other hand, as in the case for the linear wave equation, the blow-up mechanism when data are posed on a Cauchy hypersurface is much less understood. In particular, it is not known whether there exists a single solution with data that are sufficiently regular and close to that of Reissner-Nordstr\"om such that the solution is singular in a neighborhood of the Cauchy horizon. Nevertheless, in view of Theorem \ref{main.thm} for the linear wave equation, we make the following conjecture:
\begin{conjecture}
Generic smooth $2$-ended asymptotically flat spherically symmetric initial data that are globally close to Reissner-Nordstr\"om data give rise to solutions that are not in $W^{1,2}_{loc}$ near the Cauchy horizon.
\end{conjecture}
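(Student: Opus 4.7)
The plan is to adapt the three-step strategy of the linear proof to the nonlinear spherically symmetric Einstein-Maxwell-scalar field system, taking as input the global $C^{0}$-closeness to Reissner-Nordstr\"om of small spherically symmetric perturbations established by Dafermos \cite{D2,D3} and the sharp nonlinear decay analysis of \cite{LO1}. The first step is to define the correct nonlinear analogue of the quantity $\mathfrak{L}$ in \eqref{L.condition}. Since \cite{LO1} already carried out the analogous construction in the nonlinear Einstein-scalar field setting, one expects a suitable $\mathfrak{L}_{\mathrm{nl}}$ formed from the limit of $r^{3} \partial_{v}(r \phi)$ at $\mathcal{I}^{+}$ together with a correction involving the limiting Bondi mass and charge. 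The condition $\mathfrak{L}_{\mathrm{nl}} = 0$ is a single (nonlinear) constraint on the Cauchy data, so the data set on which it fails should be of codimension at least one, giving the desired genericity.

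The second step is to promote $\mathfrak{L}_{\mathrm{nl}} \neq 0$ to a quantitative lower bound for $\phi$, first on a large-$r$ timelike hypersurface (the analogue of Theorem \ref{thm.decay.R}) and then on the event horizon in an $L^{2}$-averaged sense (the analogue of Corollary \ref{cor.horizon.lower.bd}). The pointwise lower bound at large $r$ is essentially already contained in \cite{LO1}. For the transition to an event-horizon $L^{2}$ bound one combines it with the nonlinear Price's law upper bound of \cite{DRPL} and runs a weighted multiplier argument on the exterior region; the exterior being globally close to a subextremal Reissner-Nordstr\"om, the multipliers used in Theorem \ref{horizon.lower.bd} should transplant with controllable error terms.

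The third step, which I expect to be the main obstacle, is to propagate the $L^{2}$ lower bound on $\mathcal{H}^{+}$ into a $W^{1,2}_{loc}$ singularity at $\mathcal{CH}^{+}$, i.e.\ a nonlinear version of Theorem \ref{final.blow.up.step}. The linear argument (Section \ref{sec.interior}) relies on a hierarchy of weighted energy estimates with logarithmic weights $\log^{\alpha_{0}}(1/\Omega)$ that are calibrated to the exact Reissner-Nordstr\"om interior geometry and in particular to the explicit behavior of $\Omega$ approaching $\mathcal{CH}^{+}$. In the nonlinear problem the metric is only continuous at the Cauchy horizon and, under mass inflation, the Hawking mass blows up there, so the effective surface gravity and the analogue of $\Omega$ are perturbed. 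One must show that the blue-shift multiplier estimates driving Section \ref{sec.interior} are stable under these perturbations, which appears to require a bootstrap coupling the scalar-field estimates to estimates on the metric coefficients in the black hole interior in the spirit of \cite{D1,D2,D3}. The nonlinear error terms --- products of $\partial_{v} \phi$ and $\partial_{u} \phi$ with metric quantities that degenerate near $\mathcal{CH}^{+}$ --- must be absorbed against the logarithmic weights. A plausible route is to run the argument inside a pointwise geometric bootstrap for the interior along the lines of \cite{DL}, conclude $W^{1,2}_{loc}$ blow-up for $\phi$, and then verify that this blow-up is consistent with the bootstrap; closing this loop is the essential analytic difficulty.
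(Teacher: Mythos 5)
This statement is posed in the paper as an open \emph{conjecture}: the paper offers no proof, and indeed explicitly states just before it that ``it is not known whether there exists a single solution with data that are sufficiently regular and close to that of Reissner-Nordstr\"om such that the solution is singular in a neighborhood of the Cauchy horizon.'' Your proposal is therefore a research program rather than a proof, and you concede as much when you write that ``closing this loop is the essential analytic difficulty.'' A proof attempt whose central step is announced as unresolved is not a proof; the honest conclusion is that the statement remains open.

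Beyond that, two of your steps have concrete gaps worth naming. First, the genericity claim in your step one does not follow from ``$\mathfrak{L}_{\mathrm{nl}} = 0$ is a single constraint.'' In the linear problem the paper's codimension-one argument (Corollary \ref{main.cor}) rests entirely on superposition: one constructs a single $\phi_{sing}$ with $\mathfrak{L} \neq 0$ (Theorem \ref{construction.existence.thm}) and then perturbs any regular solution by $\beta \phi_{sing}$, using linearity of $\phi \mapsto \mathfrak{L}$. For the Einstein-Maxwell-scalar field system there is no superposition, so you would need both a nonlinear analogue of the construction in Section \ref{sec.construction} and a separate argument (e.g.\ that $\mathfrak{L}_{\mathrm{nl}}$ is a $C^{1}$ functional of the data with nonvanishing derivative) to extract codimension one; neither is supplied. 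Second, your step three is not a stability-under-perturbation problem in the sense you describe: if the instability occurs at all, then by the mass inflation scenario the Hawking mass blows up at $\mathcal{CH}^{+}$, so the interior geometry is \emph{not} globally close to Reissner-Nordstr\"om there, and the multiplier hierarchy of Section \ref{sec.interior} --- whose weights $\log^{\alpha_{0}}(1/\Omega)$ and sign conditions such as $-\rd_{u}\Omg^{2} \geq c\, \Omg^{2}$ near $r = r_{-}$ are computed from the exact background --- cannot be transplanted ``with controllable error terms.'' The geometry degeneration and the scalar-field blow-up are coupled, which is precisely why the paper leaves this as a conjecture rather than a corollary of its linear theorem.
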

\subsection{Outline of the paper} We conclude our introduction with an outline of the remainder of the paper. Our proof of Theorem \ref{decay.thm} follows the outline sketched in Section \ref{sec.outline}. In Section \ref{sec.lower.bd.R}, we prove Theorem \ref{thm.decay.R}; in Section \ref{sec.horizon}, we prove Theorem \ref{horizon.lower.bd}; and in Section \ref{sec.interior}, we prove Theorem \ref{final.blow.up.step}. Finally, we conclude the paper by proving Theorem \ref{construction.existence.thm} in Section \ref{sec.construction}.
\\
\\
\noindent
{\bf Acknowledgments:} The authors thank Mihalis Dafermos and Jan Sbierski for stimulating discussions. They also thank Mihalis Dafermos for very helpful comments on a preliminary version of the manuscript. Part of this work was carried out in Princeton University, MIT and Cambridge University. The authors thank these institutions for their hospitality. J. Luk is supported by the NSF Postdoctoral Fellowship DMS-1204493. S.-J. Oh is a Miller Research Fellow, and thanks the Miller Institute at UC Berkeley for support.

\section{Lower bound on $\{r=R\}$}\label{sec.lower.bd.R}
Our main goal in this section is to prove Theorem \ref{thm.decay.R}, i.e., we show that if we have the upper bound \eqref{R.upper.bd.ass} and moreover $\mathfrak{L}\neq 0$, then there exists $R$ sufficiently large such that along the $\{r=R\}$ curve, we have the lower bound 
\begin{equation} \label{eq:lb:r=R}
|\rd_v(r\phi)|(u,v) \restriction_{\{r=R\}}\geq \f{|\mathfrak L|} 8 v^{-3}.
\end{equation}

We first recall\footnote{Strictly speaking, in \cite{DRPL} Theorems~9.4 and ~9.5 is stated for the full nonlinear Einstein-Maxwell-(real)-scalar field system. However, its linear analogue on a Reissner-Nordstr\"om background can be easily inferred by following the outline provided in Section 11 of the same paper. Moreover, in Theorems~9.4 and ~9.5 of \cite{DRPL}, the precise dependence of the constant in \eqref{eq:DRPL:conc} on the size of the initial data is not stated. Nevertheless, since the equation is linear, we can apply the closed graph theorem to immediately deduce Theorem~\ref{thm:DRPL}.} the following estimate from \cite{DRPL}:
\begin{theorem}[Dafermos-Rodnianski \cite{DRPL}, Theorems~9.4, ~9.5 and Section~11] \label{thm:DRPL}
Let $\phi$ be a spherically symmetric $C^{1}$ real-valued solution to the linear wave equation on the exterior region of a Reissner-Nordstr\"om space-time to the future of $\uC_{1} \cup C_{-U_{0}}$, where $U_{0} > 0$ is a fixed number. Suppose that, for real numbers $1 < \omg \leq 3$ and $B > 0$, the data on $C_{-U_{0}} \cap \set{v \geq 1}$ obey
\begin{equation} \label{eq:DRPL:hyp:C}
	\sup_{C_{-U_{0}} \cap \set{v \geq 1}} r^{2} \abs{\frac{\rd_{v} \phi}{\rd_{v} r}} \leq B, \quad
	\sup_{C_{-U_{0}} \cap \set{v \geq 1}} r^{\omg} \abs{\frac{\rd_{v}(r \phi)}{\rd_{v} r}} \leq B.
\end{equation}
Suppose furthermore that the data on $\uC_{1} \cap \set{u \geq -U_{0}}$ obeys
\begin{equation} \label{eq:DRPL:hyp:uC}
	\sup_{\uC_{1} \cap \set{u \geq -U_{0}}} \abs{\frac{\rd_{u} \phi}{\rd_{u} r}} \leq B, \quad
	\sup_{\uC_{1} \cap \set{u \geq -U_{0}}} \abs{\frac{\rd_{u}(r \phi)}{\rd_{u} r}} \leq B.
\end{equation}
Then there exists a constant $C = C(\omg, U_{0}, R_0) > 0$ such that in the region $r \geq R_0$, we have
\begin{equation} \label{eq:DRPL:conc}
	(1+u_{+})^{\omg-1} \abs{r \phi} (u, v) \leq C B,
\end{equation}
where $u_{+} := \max \set{u, 0}$. 

\end{theorem}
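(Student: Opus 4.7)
The plan is to adapt the bootstrap strategy for the reduced spherically symmetric wave equation \eqref{re.wave.eqn.ext}, written in terms of $\psi := r\phi$ as
\begin{equation*}
\rd_u \rd_v \psi = -\frac{2(M - e^2/r) \Omega^2}{r^3}\, \psi,
\end{equation*}
so that the forcing has strong $r^{-3}$ decay in the asymptotically flat region. I would run a continuity/bootstrap argument simultaneously on the two weighted quantities
\begin{equation*}
\mathcal{A}_{1}(u,v) := r^{\omg}\,\abs{\rd_{v} \psi}(u,v), \quad
\mathcal{A}_{2}(u,v) := (1+u_{+})^{\omg-1}\,\abs{\psi}(u,v),
\end{equation*}
in the region $\set{r \geq R_{0}}$ to the future of $\uC_{1} \cup C_{-U_{0}}$, with the goal of showing both are bounded by $CB$.

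The first step is a uniform boundedness statement: $\abs{\psi} \leq CB$. On $\uC_{1}$, the assumption \eqref{eq:DRPL:hyp:uC} together with $\rd_{u} r = -\Omg^{2}$ in the exterior turns the integral $\int_{-U_{0}}^{u} \abs{\rd_{u} \psi}\, \ud u'$ into an integral of $B |\rd_u r|\, \ud u' = -B\, \ud r$, which telescopes to a bounded quantity. On $C_{-U_{0}}$ the weight $r^{-\omg}$ with $\omg > 1$ makes $\int_{1}^{v} \abs{\rd_{v} \psi}(-U_{0}, v')\, \ud v' \lesssim B\int r^{-\omg}\, \ud r < \infty$. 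The bulk extension follows from a continuity argument using the smallness of the right-hand side of the wave equation. Next, for the improved decay of $\rd_{v} \psi$, I would integrate the wave equation from $C_{-U_{0}}$ at fixed $v$:
\begin{equation*}
\rd_v \psi(u,v) = \rd_v \psi(-U_{0},v) - \int_{-U_{0}}^{u} \frac{2(M - e^{2}/r)\Omg^{2}}{r^{3}}\psi(u',v)\, \ud u'.
\end{equation*}
The boundary term is $\leq CB\, r(-U_{0},v)^{-\omg}$ by \eqref{eq:DRPL:hyp:C}, and the bulk term, after the substitution $\Omg^{2}\, \ud u' = -\ud r$, gives $\lesssim B\, r(u,v)^{-2}$ from the uniform bound on $\psi$. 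Iterating with the improved $\mathcal{A}_{2}$ bound replaces $B$ by $\mathcal{A}_2 (1+u_+)^{-(\omega-1)}$ and refines the estimate to $\abs{\rd_{v} \psi}(u,v) \lesssim B\, r^{-\omg}$, closing the bootstrap on $\mathcal{A}_{1}$.

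For the bootstrap on $\mathcal{A}_{2}$, I would integrate $\rd_{v} \psi$ in $v$. For $u$ with $u + r^{*}(R_{0}) > 1$, one back-tracks along the outgoing ray to the level set $\set{r = R_{0}}$, writing
\begin{equation*}
\psi(u,v) = \psi(u, u + r^{*}(R_{0})) + \int_{u + r^{*}(R_{0})}^{v} \rd_{v} \psi(u, v')\, \ud v',
\end{equation*}
then use the bound on $\mathcal{A}_{1}$ together with $r(u,v') \sim v' - u$ to control the integral by $CB \int (v'-u)^{-\omg}\, \ud v' \lesssim CB(r^{*}(R_{0}))^{-(\omg-1)}$. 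The $u$-decay arises by treating the boundary term $\psi$ on $\set{r = R_{0}}$ via propagation from $\uC_{1}$ using the improved bound on $\rd_{u} \psi$ (obtained in parallel by the symmetric integration of the wave equation from $\uC_{1}$), where the integrand inherits $u$-decay from the wave equation source $r^{-3}\psi$ evaluated at earlier $u$-values. For $u$ not yet large, the estimate is trivial since $(1+u_{+})^{\omg-1}$ is bounded.

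The main obstacle is the simultaneous tracking of the $r$- and $u$-weights through the iteration: the substitution $r \sim v - u$ in the asymptotic region couples the two weights, and the improved decay on $\rd_{v} \psi$ must be fed back consistently to obtain the correct $(1+u_{+})^{-(\omg-1)}$ decay of $\psi$. When $\omg$ is close to the upper endpoint $3$, the integrals in Step 2 are only marginally convergent, requiring a finite iteration (reminiscent of the $r^{p}$-hierarchy of Dafermos-Rodnianski) rather than a single bootstrap. Additionally, the transition between the asymptotic region $r \gg R_{0}$ (where the source is integrable) and the near-horizon region must be handled by separating the integration domain at $\set{r = R_{0}}$ and importing a uniform bound on the finite-$r$ piece, which can be supplied either by an integrated local energy decay estimate or by the elementary observation that in $\set{r \leq R_{0}}$ any $r$-weight is harmless.
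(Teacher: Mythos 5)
First, be aware that the paper does not prove this statement at all: Theorem~\ref{thm:DRPL} is imported as a black box from Dafermos--Rodnianski \cite{DRPL} (Theorems~9.4, 9.5 and Section~11), and the only argument the authors supply is the footnoted observation that the linear dependence of the constant on $B$ follows from the closed graph theorem, since the data-to-solution map is linear. You are therefore attempting to reprove from scratch the upper-bound half of Price's law, and your sketch has a genuine gap precisely where that result is hard. The problem is the step where the $(1+u_{+})^{\omega-1}$ decay is supposed to emerge. Integrating $\rd_{v}\psi$ outward from $\gmm_{R_{0}}$ with $|\rd_{v}\psi|\lesssim B r^{-\omega}$ only controls $\psi(u,v)-\psi(u,v_{R_{0}^{*}}(u))$ by a constant $\sim B R_{0}^{1-\omega}$ that does \emph{not} decay in $u$; all the retarded-time decay must come from the boundary term on $\set{r=R_{0}}$. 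Your proposed source for that decay --- the ``symmetric'' integration $\rd_{u}\psi(u,v)=\rd_{u}\psi(u,1)-\int_{1}^{v}\frac{2(M-e^{2}/r)\Omg^{2}}{r^{3}}\psi\,\ud v'$ --- does not work: the data term indeed decays (like $\Omg^{2}(u,1)$), but the bulk term is bounded below in size by $\sup_{v'}|\psi(u,v')|\cdot M\big(r^{-2}(u,1)-r^{-2}(u,v)\big)$, and since $r(u,1)\to r_{+}$ as $u\to\infty$ this prefactor is $O(r_{+}^{-2})$, not decaying. To make it small you would need decay of $\psi=r\phi$ in the region $r\leq R_{0}$ at late advanced time, i.e.\ decay of $\phi$ near and on the event horizon --- which is exactly the content of Price's law and is obtained in \cite{DRPL} only through the global energy machinery (uniform boundedness, integrated local energy decay, the red-shift estimate, and a dyadic pigeonhole/hierarchy argument), none of which appears in your scheme. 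The model case $\rd_{u}\rd_{v}\psi=0$, where $\psi=f(u)+g(v)$ with $f$ determined by the data on $\uC_{1}$ and merely bounded, shows that no rate of $u$-decay can be extracted from boundedness plus integrability of the source in the far region alone.

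A secondary but related defect: even your Step 1 (uniform boundedness of $r\phi$) does not close by the Gronwall argument you indicate, because the region $\set{r\geq R_{0}}$ has past boundary containing $\gmm_{R_{0}}$, on which $\psi$ is not data, and over the full exterior the quantity $\iint \frac{\Omg^{2}}{r^{3}}\,\ud u\,\ud v$ diverges (the $u$-integral at fixed $v$ is $O(r_{+}^{-2})$ but the $v$-extent is infinite). Your fallback that ``in $\set{r\leq R_{0}}$ any $r$-weight is harmless'' misses the point: the obstruction there is the unbounded $v$-extent, not the $r$-weight, and resolving it again requires the energy estimates. In short, the far-region characteristic bootstrap you describe is sound for converting decay of $\phi$ in the bounded-$r$ region into the stated estimate near infinity (this is essentially what Section~11 of \cite{DRPL} and Proposition~\ref{imp.decay} of this paper do), but it cannot generate that decay by itself; the correct course here is to cite \cite{DRPL} as the paper does, adding only the closed-graph-theorem remark for the $B$-dependence.
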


In order to obtain a lower bound on $\{r=R\}$, we first need an improved decay estimate for $\phi$ compared to Theorem \ref{thm:DRPL} in the region $\{r\geq R_1\}$ for some sufficiently large $R_1$. To prove this improved decay estimate, we rely crucially on the \emph{assumption} that $|\phi|$ has an improved decay on a constant $r$-curve, i.e., $\{r=R_1\}$. The proof uses the sharp decay estimate of Dafermos-Rodnianski in Theorem \ref{thm:DRPL} and also applies some ideas from \cite{LO1}. More precisely, we have
\begin{proposition}\label{imp.decay}
Suppose that $\phi$ is a solution to \eqref{wave.eqn} with initial data satisfying \eqref{data.1}-\eqref{data.3}.
There exists $R_1=R_1(M)$ sufficiently large such that if
$$\sup_{\set{r= R_1} \cap \set{u \geq 1}} u^3|\phi|\leq A',$$
for some $A'>0$, then the following estimate holds for some $C=C(A',D,U_0,R_1)$:
$$ \sup_{\set{r\geq R_1} \cap \set{u \geq 1}} u^3|\phi|\leq C. $$
\end{proposition}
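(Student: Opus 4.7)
The approach is a bootstrap argument propagating the improved $u^{-3}$ decay from the curve $\set{r = R_{1}}$ to the entire region $\set{r \geq R_{1}, u \geq 1}$. The key tools are Theorem~\ref{thm:DRPL} (providing a preliminary upper bound on $r\phi$) and the wave equation in the form \eqref{re.wave.eqn.ext}.

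First, I would apply Theorem~\ref{thm:DRPL} with $\omg = 3$---valid in view of \eqref{data.1} and \eqref{data.2}---to obtain $(1+u_{+})^{2} \abs{r \phi}(u, v) \leq C D$ whenever $r(u, v) \geq R_{0}$, for some $R_{0}$ depending only on $M, e, U_{0}$. I would then choose $R_{1} \geq R_{0}$ large (its precise size to be determined) and set up a bootstrap assumption $\abs{\phi}(u, v) \leq K u^{-3}$ throughout $\set{r \geq R_{1}, u \geq 1}$ for some constant $K = K(A', D, U_{0}, R_{1})$, aiming to strictly improve it. The relevant representation is obtained by integrating along outgoing null rays from $\set{r = R_{1}}$,
\[
r \phi(u, v) = R_{1} \phi(u, v_{R_{1}}(u)) + \int_{v_{R_{1}}(u)}^{v} \rd_{v}(r \phi)(u, v') \, \ud v',
\]
and by expressing $\rd_{v}(r \phi)(u, v')$ through integration of \eqref{re.wave.eqn.ext} in $u$ from $C_{-U_{0}}$:
\[
\rd_{v}(r \phi)(u, v') = \rd_{v}(r \phi)(-U_{0}, v') - \int_{-U_{0}}^{u} \frac{2(M - e^{2}/r) \Omg^{2}}{r^{2}} \phi(u'', v') \, \ud u''.
\]

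The first boundary term contributes at most $R_{1} A' u^{-3}$ by hypothesis. For the inner integrand, the data term obeys $\abs{\rd_{v}(r \phi)(-U_{0}, v')} \lesssim D/v'^{3}$ thanks to \eqref{data.1} and \eqref{data.3}, while the source integral is split at $u'' = 1$: for $u'' \leq 1$ I would use the DRPL-type bound $\abs{\phi} \lesssim 1/(r (1 + u''_{+})^{2})$ from the first step, and for $1 \leq u'' \leq u$ the bootstrap assumption $\abs{\phi} \leq K/u''^{3}$. After integrating in both $u''$ and $v'$, the bootstrap-region contribution inherits a small factor of order $R_{1}^{-2}$ from the $1/r^{2}$ weight in the source; together with the other $O(u^{-3})$ contributions, this should give a strict improvement of the bootstrap bound once $R_{1}$ is chosen large enough, closing the argument.

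The main obstacle is obtaining the sharp $u^{-3}$ decay rather than the $u^{-2}$ rate implied by Theorem~\ref{thm:DRPL} alone. This improvement requires three ingredients working together: the sharp $v^{-3}$ decay of $\rd_{v}(r \phi)$ on $C_{-U_{0}}$ encoded in \eqref{data.3}; the quadratic smallness $\sim R_{1}^{-2}$ of the source integrated over large-$r$ regions; and the boundary hypothesis on $\set{r = R_{1}}$, whose role is to cancel the would-be leading $u^{-2}$ tail of the asymptotic $\Phi(u) := \lim_{v \to \infty} r \phi(u, v)$, thereby making the sharp rate accessible.
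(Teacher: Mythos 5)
Your overall strategy coincides with the paper's: integrate $\rd_v(r\phi)$ along constant-$u$ rays starting from $\gmm_{R_1}$, represent $\rd_v(r\phi)(u,v')$ by integrating \eqref{re.wave.eqn.ext} in $u$ from the data on $C_{-U_0}$, and close by absorption using smallness coming from $R_1$ large. The boundary term, the data term, and the range $u''\le 1$ are handled correctly. The gap is in your treatment of the source integral over the intermediate range $1\le u''\le u$, where you invoke only the bootstrap bound $|\phi|\le K (u'')^{-3}$ and claim a smallness of order $R_1^{-2}$ from the $r^{-2}$ weight. The bound $\int_1^u \frac{\Omg^2}{r^2}(u'')^{-3}\, \ud u'' \le R_1^{-2}\int_1^\infty (u'')^{-3}\,\ud u''$ produces a small \emph{constant} with no decay in $u$ or $v'$, which is useless once multiplied by $u^3$. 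Being sharper and using $r(u'',v')\gtrsim v'-u''$ does not rescue the estimate: since $(u'')^{-3}$ is integrable, the $u''$-integral is dominated by $u''=O(1)$, where one only gets $\frac{K\Omg^2}{r^2(u'',v')}\sim K (v')^{-2}$; integrating this in $v'$ over $[v_{R_1^\ast}(u),v]$ and multiplying by $u^3/r(u,v)$ yields a contribution of order $K u^2/r(u,v)$, which is unbounded along $\gmm_{R_1}$ as $u\to\infty$. So the bootstrap as you set it up does not close, no matter how large $R_1$ is taken.

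The paper's proof of Proposition~\ref{imp.decay} fixes exactly this point by splitting the $u''$-integral at $u''=v'/2$ rather than at $u''=1$. On $[-U_0,v'/2]$ it uses the Dafermos--Rodnianski bound on $r\phi$ (not on $\phi$), writing the source as $\frac{2(M-\frac{e^2}{r})\Omg^2}{r^3}\,\abs{r\phi}\lesssim \frac{\Omg^2}{r^3(1+u''_+)^2}$; since $r\gtrsim v'$ throughout that range, the $u''$-integral converges and yields the full $(v')^{-3}\lesssim u^{-3}$. The extra power of $r^{-1}$ --- obtained by trading the bound on $\phi$ for the bound on $r\phi$ precisely where $u''$ is small and $r\sim v'$ is large --- is the ingredient your argument is missing: at $u''=O(1)$ one needs $r^{-3}\sim (v')^{-3}$, not $r^{-2}\sim(v')^{-2}$. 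The bootstrap quantity (the paper's $\mathcal B(u)$) is then only needed on $[v'/2,u]$, where $(u'')^{-3}\lesssim u^{-3}$ is already at the target rate and $\int_{v'/2}^{u}\frac{2(M-\frac{e^2}{r})\Omg^2}{r^2}\,\ud u''\lesssim M/R_1$ supplies the smallness needed for absorption. With that modification your argument becomes the paper's.
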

\begin{proof}
Let $\mathcal B(U)=\sup_{u\in [1,U], r\in [R_1,\infty)} u^3|\phi|$ for $U > 1$.
We recall our convention that for every $u$, $v_{R_1^*}(u)$ denotes the unique $v$ value such that $r(u,v_{R_1^*}(u))=R_1$. Given a point $(u,v)$ such that $r\geq R_1$, we integrate along a constant $u$ curve from $(u,v_{R_1^*}(u))$ to $(u,v)$ to get
\begin{equation*}
\begin{split}
u^3|\phi(u,v)| \leq &u^3|\phi(u,v_{R_1^*}(u))|+\f{u^3}{r(u,v)}\int_{v_{R_1^*}(u)}^{v} |\rd_v(r\phi)|(u,v') dv'\\
\leq &A'+Cu^3\sup_{v'\in [v,\infty)} |\rd_v(r\phi)|(u,v')
\end{split}
\end{equation*}
since $\f{v-v_{R_1^*}(u)}{r(u,v)}\leq C$ for some $C=C(R_1)$. This implies that
\begin{equation}\label{B.trivial}
\mathcal B(U)\leq A'+C\sup_{r\geq R_1,\,u\leq U,\,v'\in [v,\infty)} u^3|\rd_v(r\phi)|(u,v').
\end{equation}
We now use the wave equation
$$\rd_u\rd_v (r\phi)=-\f{2(M-\f{e^2}{r})\Omg^2\phi}{r^2}$$
and integrate along a constant $v$ curve starting from the initial data to the point $(u,v)$. We will assume that $u\geq 1$. In the case where $u\geq \f v2$, we further divide the integral into the regions $-U_0 \leq u'\leq \f v2$ and $\f v2\leq u'\leq u$. More precisely, we have
\begin{align}
&|\rd_v(r\phi)(u,v)| \label{imp.decay.1} \\
\leq &|\rd_v(r\phi)(-U_0,v)|+\int_{-U_0}^u \f{2(M-\f{e^2}{r})\Omg^2|\phi|}{r^2}(u',v)\,du' \notag\\
\leq &|\rd_v(r\phi)({-U_0},v)|+\int_{-U_0}^{\min\{\f v2,u\}} \f{2(M-\f{e^2}{r})\Omg^2r|\phi|}{r^3}(u',v)\,du'+\int_{\f v2}^{\max\{\f v2,u\}} \f{2(M-\f{e^2}{r})\Omg^2|\phi|}{r^2}(u',v)\,du' \notag \\
\leq &|\rd_v(r\phi)({-U_0},v)|+\int_{-U_0}^{\f v2} \f{C}{(1+u'_{+})^2} \times\f{2(M-\f{e^2}{r})\Omg^2}{r^3}(u',v)\,du'+\int_{\f v2}^{\max\{\f v2,u\}} \f{2(M-\f{e^2}{r})\Omg^2|\phi|}{r^2}(u',v)\,du' \notag \\
\leq &|\rd_v(r\phi)({-U_0},v)|+Cu^{-3}+Cu^{-3}\mathcal B(u) \int_{\f v2}^{\max\{\f v2,u\}} \f{2(M-\f{e^2}{r})\Omg^2}{r^2}(u',v)\,du' \notag
\end{align}
for some $C=C(U_0)$. In the second to the last line above, we used Theorem \ref{thm:DRPL} to control the second term.
In the last line above, we have used the following two facts. First, we have the estimate
$$\int_{-U_0}^{\f v2} \f{1}{(1+u'_{+})^2} \times\f{2(M-\f{e^2}{r})\Omg^2}{r^3}(u',v)\,du'\leq \f{2M}{r^3(\f v2,v)} \int_{-U_0}^{\infty}\f{du'}{(1 + u'_{+})^2} \leq Cu^{-3}$$
since $u\leq Cv\leq Cr(\f v2,v)$ for $R_1$ sufficiently large. Second, for $\f v2\leq u'\leq \max\{\f v2,u\}$, we have $\sup_{\f v2\leq u'\leq \max\{\f v2,u\}}u'^{-3}\leq C u^{-3}$.

To proceed, we use the assumption \eqref{data.1} on the initial data to obtain for some $C=C(U_0)$ that
\begin{equation}\label{imp.decay.2}
u^3|\rd_v(r\phi)({-U_0},v)|\leq Cr^3|\rd_v(r\phi)({-U_0},v)|\leq CD.
\end{equation}
Combining \eqref{imp.decay.1} and \eqref{imp.decay.2}, we obtain for some $C=C(D,U_0)$ that
\begin{equation*}
\begin{split}
u^3|\rd_v(r\phi)(u,v)|\leq &C+\mathcal B(u) \int_{\f v2}^{\max\{\f v2,u\}} \f{2(M-\f{e^2}{r})\Omg^2}{r^2}(u',v)\,du'.
\end{split}
\end{equation*}
Combining this with \eqref{B.trivial}, we get for some $C=C(A',D,U_0)$ that
$$\mathcal B(u)\leq C+\mathcal B(u) \int_{\f v2}^{\max\{\f v2,u\}} \f{2(M-\f{e^2}{r})\Omg^2}{r^2}(u',v)\,du'.$$
Finally, notice that by choosing $R_1$ sufficiently large, 
$$\int_{\f v2}^{\max\{\f v2,u\}} \f{2(M-\f{e^2}{r})\Omg^2}{r^2}(u',v)\,du'$$
can be made arbitrarily small. Therefore, we have
$$\sup_u\mathcal B(u)\leq C$$
for some $C=C(A',D,U_0,R_1)$ and the conclusion follows.
\end{proof}

{\bf From this point onward, we take $R_1$ to be fixed such that the conclusion of Proposition \ref{imp.decay} applies.} Equipped with the upper bound for $|\phi|$ provided by Proposition \ref{imp.decay}, we can now turn to the main goal of this section, i.e., to obtain a lower bound for $\rd_v(r\phi)$. We will achieve this in two steps. First, in Proposition \ref{decay.far} immediately below, we show the desired lower bound along a curve $\{u=\eta v\}$ where $\eta$ will be chosen to be a sufficiently small constant. We will then show in Proposition \ref{lower.bd.R} that this lower bound can be propagated in the whole region of $r\geq R$, as long as $R$ is chosen to be sufficiently large. 

Before we proceed, we make one simplifying assumption that $\mathfrak L>0$. The case $\mathfrak L<0$ will be dealt with at the end of this section using the symmetry $\phi\mapsto -\phi$ for the wave equation. We begin with the lower bound on $\{u=\eta v\}$:
\begin{proposition}\label{decay.far}
Suppose that $\phi$ is a solution to \eqref{wave.eqn} with initial data satisfying \eqref{data.1}-\eqref{data.3}. Moreover assume that 
$$\mathfrak L> 0.$$
Then there exists $\eta>0$ sufficiently small (depending on $D$ and $\mathfrak L$) and $U$ sufficiently large such that the following lower bound holds:
$$\rd_v(r\phi)(u,v)\geq  \f{\mathfrak L}{4} v^{-3}$$
whenever $U\leq u\leq \eta v$.
\end{proposition}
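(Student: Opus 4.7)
\emph{Plan.} Integrating the wave equation \eqref{re.wave.eqn.ext} for $r\phi$ along the constant-$v$ segment from $u' = -U_{0}$ to $u' = u$ gives
\begin{equation*}
\rd_{v}(r\phi)(u,v) = \rd_{v}(r\phi)(-U_{0}, v) - \int_{-U_{0}}^{u} \frac{2(M - e^{2}/r)\Omg^{2}}{r^{2}} \phi(u',v) \, du'.
\end{equation*}
My strategy is to multiply by $v^{3}$ and show that, for sufficiently small $\eta > 0$, $v^{3} \rd_{v}(r\phi)(u,v) \to \mathfrak{L}$ uniformly in $u \in [U, \eta v]$ as $U, v \to \infty$; the pointwise bound $\rd_{v}(r\phi) \geq (\mathfrak{L}/4) v^{-3}$ then follows by choosing $U$ (and $\eta$) so that the left side exceeds $\mathfrak{L}/2$.

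For the boundary term, \eqref{data.3} asserts that $r^{3} \rd_{v}(r\phi)(-U_{0}, v)$ has a limit as $v \to \infty$. Since $r^{\ast}(-U_{0}, v) = v + U_{0}$ and $r = r^{\ast} + O(\log r^{\ast})$ for large $r$, one obtains $v/r(-U_{0}, v) \to 1$, so by the definition \eqref{L.condition} of $\mathfrak{L}$,
\begin{equation*}
\lim_{v \to \infty} v^{3} \rd_{v}(r\phi)(-U_{0}, v) = \mathfrak{L} + \int_{-U_{0}}^{\infty} 2M\Phi(u') \, du'.
\end{equation*}
The remaining task is to prove that $v^{3}$ times the bulk integral converges to $\int_{-U_{0}}^{\infty} 2M\Phi(u') \, du'$ in the same limit; the two expressions then cancel and leave $\mathfrak{L}$.

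To control the bulk integral, first apply Theorem~\ref{thm:DRPL} with $\omg = 3$ to get $|r\phi(u',v)| \leq C(1+u'_{+})^{-2}$ for $r \geq R_{0}$; passing to $v \to \infty$ at fixed $u'$ also yields $|\Phi(u')| \leq C(1+u'_{+})^{-2}$, which is integrable on $[-U_{0}, \infty)$. Split the integral at a large constant $A$. Since $u' \leq u \leq \eta v$ forces $r^{\ast}(u',v) \geq (1-\eta) v$, and hence $r(u',v) \geq \tfrac{1}{2}(1-\eta) v$ for $v$ large, the weight $v^{3}/r(u',v)^{3}$ stays bounded by a constant depending only on $\eta$. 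The tail $[A, u]$ is therefore dominated by $C \int_{A}^{\infty} (1+u'_{+})^{-2} \, du' \leq C/A$ uniformly in $u$ and $v$, and similarly $|\int_{A}^{\infty} 2M\Phi(u') \, du'| \leq C/A$. On the bounded interval $[-U_{0}, A]$, the integrand converges pointwise to $2M\Phi(u')$ (as $v^{3}/r^{3} \to 1$ and $r\phi \to \Phi$ at each fixed $u'$) and is dominated by the integrable function $C(1+u'_{+})^{-2}$, so once $u \geq A$ one may invoke the dominated convergence theorem.

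\emph{Main obstacle.} The essential difficulty is uniformity: I need the above control on the tail to hold uniformly as $u$ and $v$ both tend to infinity with $u \leq \eta v$. The restriction to small $\eta$ is precisely what keeps $r(u',v) \gtrsim v$ across the integration range, so that the weight $v^{3}/r^{3}$ does not blow up; without it, the tail could diverge. Given this uniformity, one first chooses $A$ large to make both tail contributions at most $\eps$, and then takes $U \geq A$ and $v$ large so that dominated convergence on $[-U_{0}, A]$ contributes a further error at most $\eps$. This establishes the required limit, whence the proposition.
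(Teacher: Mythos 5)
Your proposal is correct and follows essentially the same route as the paper's proof: integrate \eqref{re.wave.eqn.ext} in $u$, handle the boundary term via \eqref{data.3} and $v/r\to 1$ on $C_{-U_0}$, split the bulk integral at a large finite $u$-value, control the tail uniformly by the $(1+u_+)^{-2}$ decay of $r\phi$ from Theorem~\ref{thm:DRPL} together with the boundedness of $v^3/r^3$ in $\set{u\le \eta v}$, and apply dominated convergence on the compact piece. The only (harmless) cosmetic difference is that you absorb the paper's separate step $|v^3/r^3-1|\le C\eta$ (its \eqref{decay.far.est.4}) into the dominated-convergence argument, so your $\eta$ is used only to keep $v^3/r^3$ bounded rather than close to $1$.
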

\begin{proof}
Using the wave equation \eqref{re.wave.eqn.ext}, we have
\begin{equation}\label{decay.far.0}
v^3\rd_v(r\phi)(u,v)=v^3\rd_v(r\phi)({-U_0},v)-v^3\int_{-U_0}^u \f{2(M-\f{e^2}{r})\Omg^2}{r^3} \, r \phi (u',v)\,du'.
\end{equation}
First, by the decay of $u^2r|\phi|$ given by Theorem \ref{thm:DRPL}, the following holds for $U_1>{-U_0}$ such that $U_1$ is sufficiently large (depending on $D$ and $\mathfrak L$):
\begin{equation}\label{decay.far.est.1}
\sup_{u\geq U_1,\, r(u,v)\geq R_0}\int_{U_1}^{u} 2(M-\f{e^2}{r})\Omg^2 \, r |\phi| (u',v)\,du'\leq \f{\mathfrak L}{8}.
\end{equation}
In particular, the limiting value of this integral at future null infinity (i.e., $v=\infty$) is also bounded as follows:
\begin{equation}\label{decay.far.est.2}
\int_{U_1}^\infty 2M |\Phi|(u')  \,du' \leq \f{\mathfrak L}{8}.
\end{equation}
On the other hand, since $\lim_{v\to \infty} r\phi(u',v)\to \Phi(u')$ pointwise, we have by the dominated convergence theorem that 
\begin{equation}\label{decay.far.est.3}
|\int_{-U_0}^{U_1} 2(M-\f{e^2}{r})\Omg^2 r \phi(u',v)\,du'-\int_{-U_0}^{U_1} 2 M \Phi (u')  \,du'|\leq \f{\mathfrak L}{8}
\end{equation}
for $v$ sufficiently large.
Also, for $v$ sufficiently large, in the region $\{u\leq \eta v\}$, we have
\begin{equation}\label{vr.comp}
|\f{v^3}{r^3}-1|\leq C\eta.
\end{equation}
Moreover, by choosing $V_1$ to be sufficiently large and requiring $v\geq V_1$, we have simultaneously $U\leq \eta v$ and $r(U,v)\geq R$. We can therefore apply \eqref{vr.comp} to get
\begin{equation*}
\begin{split}
&|v^3\int_{-U_0}^{U_1} \f{2(M-\f{e^2}{r})\Omg^2\phi}{r^2}(u',v)\,du'-\int_{-U_0}^{U_1} 2(M-\f{e^2}{r})\Omg^2 r \phi(u',v)\,du'|\\
\leq &C\eta |\int_{-U_0}^{U_1} 2(M-\f{e^2}{r})\Omg^2 r \phi(u',v)\,du'|\\
\leq &C\eta,
\end{split}
\end{equation*}
where the last line holds because the integral $|\int_{-U_0}^{U_1} 2(M-\f{e^2}{r})\Omg^2 r \phi(u',v)\,du'|$ is bounded using Theorem \ref{thm:DRPL}. Therefore, we can choose $\eta$ to be sufficiently small (depending on $D$ and $\mathfrak L$) such that 
\begin{equation}\label{decay.far.est.4}
\begin{split}
|v^3\int_{-U_0}^{U_1} \f{2(M-\f{e^2}{r})\Omg^2\phi}{r^2}(u',v)\,du'-\int_{-U_0}^{U_1} 2(M-\f{e^2}{r})\Omg^2 r \phi(u',v)\,du'|\leq \f{\mathfrak L}{8}.
\end{split}
\end{equation}
Combining the above bounds we know that for $u\geq U_1$ and $v\geq V_1$, we have
\begin{equation}\label{decay.far.1}
\begin{split}
&|v^3\int_{-U_0}^u \f{2(M-\f{e^2}{r})\Omg^2}{r^2} \phi (u',v)\,du'-\int_1^\infty 2M \Phi(u')  \,du'|\\
\leq &|\int_{-U_0}^{U_1} 2(M-\f{e^2}{r})\Omg^2 r \phi(u',v)\,du'-\int_{-U_0}^{U_1} 2M \Phi(u')  \,du'|\\
&+|v^3\int_{-U_0}^{U_1} \f{2(M-\f{e^2}{r})\Omg^2\phi}{r^2}(u',v)\,du'-\int_{-U_0}^{U_1} 2(M-\f{e^2}{r})\Omg^2 r \phi(u',v)\,du'|\\
&+|v^3\int_{U_1}^u \f{2(M-\f{e^2}{r})\Omg^2}{r^2} \phi (u',v)\,du|+|\int_{U_1}^\infty 2M \Phi(u')  \,du'|\\
\leq &\f{\mathfrak L}{2},
\end{split}
\end{equation}
where we have used \eqref{decay.far.est.3}, \eqref{decay.far.est.4}, \eqref{decay.far.est.1} and \eqref{decay.far.est.2} for the first, second, third and fourth term respectively.

On the other hand, by \eqref{data.3}, we can choose $V_2$ sufficiently large such that
\begin{equation}\label{decay.far.2}
\sup_{v\geq V_2}|v^3\rd_v(r\phi)({-U_0},v)-\lim_{v'\to \infty}r^3\rd_v(r\phi)({-U_0},v')|\leq \f{\mathfrak L}4.
\end{equation}
Therefore, combining \eqref{decay.far.1} and \eqref{decay.far.2}, we obtain the following estimate for $u\geq U_1$, $v\geq \max\{V_1,V_2\}$ and $u\leq \eta v$:
\begin{equation}\label{decay.far.3}
\begin{split}
&|v^3\rd_v(r\phi)({-U_0},v)-v^3\int_{-U_0}^u \f{2(M-\f{e^2}{r})\Omg^2}{r^3} \, r \phi (u',v)\,du'|\\
\geq &|\lim_{v\to \infty}r^3\rd_v(r\phi)({-U_0},v)-\int_{-U_0}^\infty 2M \Phi(u')  \,du'|-\f{\mathfrak L}{2}-\f{\mathfrak L}{4}\\
= &\mathfrak L-\f{3 \mathfrak L}{4}= \f{\mathfrak L}{4}.
\end{split}
\end{equation}
Now, we can choose $U>U_1$ to be sufficiently large such that $\{u\geq U\}\cap\{u\leq \eta v\}\subset\{v\geq \max\{V_1,V_2\}\}$. Finally, returning to the equation \eqref{decay.far.0} and using the bound \eqref{decay.far.3} we get that
$$v^3 \rd_v(r\phi)(u,v)\geq \f{\mathfrak L}{4},$$
for every $(u,v)$ such that $u\in [U,\eta v]$, as desired.
\end{proof}

Our final step in this section is to show that under the assumption \eqref{R.upper.bd.ass}, we have a lower bound for $|\rd_v(r\phi)|$ on $\gamma_R:=\{r=R\}$ as long as $R$ is sufficiently large. This thus concludes the proof of Theorem \ref{thm.decay.R}. To obtain this conditional lower bound, we combine conditional improved decay estimate in Proposition \ref{imp.decay} with the lower bound derived in Proposition \ref{decay.far}.
\begin{proposition}\label{lower.bd.R}
Suppose that $\phi$ is a solution to \eqref{wave.eqn} with initial data satisfying \eqref{data.1}-\eqref{data.3}. 
Assume that
$$\sup_{r= R_1} u^3|\phi|\leq A'$$
for some $A'>0$, where $R_1$ is such that the conclusion of Proposition \ref{imp.decay} holds. 
Moreover assume that 
$$\mathfrak L> 0.$$
There exists $R>R_1$ sufficiently large (depending on $\mathfrak L$, $A'$, $D$, $U_0$ and $R_1$) such that the following lower bound holds:
$$\rd_v(r\phi)(u,v)\restriction_{r=R}\geq \f{\mathfrak L}{8}v^{-3}$$
for $u\geq U$, where $U$ is as in Proposition \ref{decay.far}.
\end{proposition}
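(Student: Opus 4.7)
My plan is to fix $v$ sufficiently large and to propagate the lower bound $\partial_{v}(r\phi)(u_{1}, v) \geq \frac{\mathfrak{L}}{4} v^{-3}$ from the point $u_{1} := \eta v$ (where $\eta$ is the constant from Proposition~\ref{decay.far}) to the target point $u_{R^{*}}(v) := v - R^{*}$ on $\gamma_{R}$, by integrating the wave equation \eqref{re.wave.eqn.ext} in the $u$-direction along the constant-$v$ slice. Since the hypothesis $\sup_{r = R_{1}} u^{3} |\phi| \leq A'$ is verified, Proposition~\ref{imp.decay} supplies the pointwise bound $|\phi|(u', v) \leq C (u')^{-3}$ throughout $\set{r \geq R_{1}} \cap \set{u \geq 1}$ for some $C = C(A', D, U_{0}, R_{1})$, and Proposition~\ref{decay.far} supplies the base lower bound at $u_{1}$. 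The task is then to show that the error term in propagation can be absorbed into $\mathfrak{L}/8 \cdot v^{-3}$ by choosing $R$ large.

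Concretely, integrating \eqref{re.wave.eqn.ext} from $u_{1}$ to $u_{R^{*}}(v)$ and applying the triangle inequality gives
\[
\partial_{v}(r\phi)(u_{R^{*}}(v), v) \geq \partial_{v}(r\phi)(u_{1}, v) - \int_{u_{1}}^{u_{R^{*}}(v)} \frac{2(M - e^{2}/r) \Omega^{2}}{r^{2}} |\phi|(u', v) \, du'.
\]
On the integration range $u' \in [\eta v, v - R^{*}]$, the pointwise bound yields $|\phi|(u', v) \leq C (\eta v)^{-3}$, uniformly in $u'$. Changing variables via $du' = -dr/\Omega^{2}$ (valid because $\partial_{u} r = -\Omega^{2}$ in the exterior), the remaining integral reduces to $\int_{R}^{r(u_{1}, v)} \frac{2(M - e^{2}/r)}{r^{2}} dr \leq 2M/R$. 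Hence the error term is bounded by $2MC \eta^{-3} R^{-1} v^{-3}$, which is at most $\frac{\mathfrak{L}}{8} v^{-3}$ once $R$ is chosen sufficiently large in terms of $\mathfrak{L}, A', D, U_{0}, R_{1}$. Combined with $\partial_{v}(r\phi)(u_{1}, v) \geq \frac{\mathfrak{L}}{4} v^{-3}$, the desired bound $\partial_{v}(r\phi)(u_{R^{*}}(v), v) \geq \frac{\mathfrak{L}}{8} v^{-3}$ follows.

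The main subtlety I anticipate is ensuring that both inputs are simultaneously active: Proposition~\ref{decay.far} requires $U \leq u_{1} \leq \eta v$, i.e., $v \geq U/\eta$, while reaching the target point requires $u_{R^{*}}(v) = v - R^{*} \geq U$, i.e., $v \geq U + R^{*}$. Since $u \geq U$ on $\gamma_{R}$ is equivalent to $v \geq U + R^{*}$, and since $R$ can be taken large enough that $R^{*} \geq U(\eta^{-1} - 1)$, both conditions hold automatically throughout the stated range. No further obstacles arise because the integration takes place inside $\set{r \geq R > R_{1}}$, where $\Omega^{2}$ and $M - e^{2}/r$ are bounded away from zero and from above, and the pointwise decay from Proposition~\ref{imp.decay} is at its sharpest.
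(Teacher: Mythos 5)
Your proposal is correct and follows essentially the same route as the paper: integrate \eqref{re.wave.eqn.ext} in $u$ from $(\eta v, v)$ to $\gamma_R$, bound $|\phi|$ on that range by $C(\eta v)^{-3}$ via Proposition~\ref{imp.decay}, and observe that the remaining geometric integral is $O(M/R)$, so the error is absorbed into $\frac{\mathfrak{L}}{8}v^{-3}$ for $R$ large. Your explicit check that $\eta v \geq U$ holds on $\gamma_R\cap\{u\geq U\}$ (by taking $R^*$ large enough) is a point the paper glosses over, and is a welcome addition.
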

\begin{proof}
Take $(u,v)$ such that $r(u,v)= R$ and $u\geq U$, where $U$ is as in Proposition \ref{decay.far}. We again use the wave equation \eqref{re.wave.eqn.ext} and integrate from the point $(\eta v, v)$ to $(u,v)$ to get
\begin{equation*}
\begin{split}
|\rd_v(r\phi)(u,v)-\rd_v(r\phi)(\eta v,v)|
\leq &\int_{\eta v}^u \f{2(M-\f{e^2}{r})\Omg^2|\phi|}{r^2}(u',v)\,du'\\
 \leq & C \eta^{-3} v^{-3} \int_{\eta v}^u \f{2(M-\f{e^2}{r})\Omg^2}{r^2}(u',v)\,du'
\end{split}
\end{equation*}
for some $C=C(A',D,U_0,R_1)$ by Proposition \ref{imp.decay}. Now notice that for $(u,v)$ such that $r(u,v)=R$, we have
$$\int_{\eta v}^u \f{2(M-\f{e^2}{r})\Omg^2}{r^2}(u',v)\,du'\leq \f{CM}{R}\to 0 \quad\mbox{as }R\to \infty.$$
In particular, $R$ can be chosen to be sufficiently large (depending on $\mathfrak L$, $A'$, $D$, $U_0$, $R_1$ and $\eta$) such that
\begin{equation}\label{finite.R.diff.bd}
|\rd_v(r\phi)(u,v)-\rd_v(r\phi)(\eta v,v)|\leq \f{\mathfrak L}{8} v^{-3}.
\end{equation}
Finally, recall the lower bound in Proposition \ref{decay.far}
$$\rd_v(r\phi)(\eta v,v)\geq \f{\mathfrak L}{4} v^{-3}$$ 
for $u\geq U$. Combining this with \eqref{finite.R.diff.bd} gives the desired lower bound on $\{r=R\}$.
\end{proof}

We have thus proved Theorem \ref{thm.decay.R} for $\mathfrak L>0$. In the case of $\mathfrak L<0$, notice that we can apply the above result to show that 
$$-\rd_v(r\phi)(u,v)\restriction_{r=R}\geq -\f{\mathfrak L}{8}v^{-3}$$
for $u\geq U$. Therefore, Theorem \ref{thm.decay.R} follows.

\section{Lower bound on the event horizon}\label{sec.horizon}
In this section, the analysis continues to take place in the exterior of the black hole. As mentioned in the introduction, our main goal is to prove Theorem \ref{horizon.lower.bd}.

Recalling the setting in Theorem \ref{horizon.lower.bd}, we assume that there exists $\eps > 0$ and $A>0$ such that
\begin{equation}\label{contra}
\int_{\EH \cap \set{v \geq 1}} v^{7 + \eps} (\rd_{v} \phi)^{2} = A < \infty.
\end{equation}
We remark that in the coordinates $(u,v)$, \eqref{contra} reads
\begin{equation*} \tag{\ref{contra}$'$}
	\int_{1}^{\infty} v^{7+\eps} (\rd_{v} \phi)^{2}(\infty, v) \, \ud v = A < \infty.
\end{equation*}
We begin with a simple consequence of \eqref{contra}, namely a pointwise decay estimate for $\phi$ on $\mathcal H^+$.
\begin{proposition}\label{prop:EH:phi}
Let $\phi$ be a solution to the linear wave equation with spherically symmetric data verifying \eqref{data.1}-\eqref{data.2} and moreover obeying \eqref{contra}. Then on the event horizon $\EH = \set{(u, v) : u = \infty}$, we have the bound
\begin{equation} \label{eq:EH:phi}
|\phi|\leq Cv^{-3-\frac{\ep}{2}},
\end{equation}
where $C = C(A) > 0$.
\end{proposition}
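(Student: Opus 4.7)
The plan is a straightforward application of Cauchy--Schwarz to the weighted $L^{2}$ bound \eqref{contra}. Using \eqref{contra} and Cauchy--Schwarz I first get
$$\int_{1}^{\infty} |\rd_{v}\phi|(\infty, v) \, \ud v \leq \left( \int_{1}^{\infty} v^{-7-\eps} \, \ud v \right)^{1/2} A^{1/2} < \infty,$$
so $\rd_{v}\phi$ is absolutely integrable along $\EH$ and the limit $\phi_{\infty} := \lim_{v \to \infty} \phi(\infty, v)$ exists. Granting for the moment that $\phi_{\infty} = 0$, I then write $\phi(\infty, v) = -\int_{v}^{\infty} \rd_{v}\phi(\infty, v') \, \ud v'$ and apply Cauchy--Schwarz a second time against the weight $(v')^{7+\eps}$ to obtain
$$|\phi(\infty, v)| \leq \left( \int_{v}^{\infty} (v')^{-7-\eps} \, \ud v' \right)^{1/2} A^{1/2} \leq C A^{1/2} v^{-3-\eps/2},$$
which is exactly \eqref{eq:EH:phi}.

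The whole proof therefore reduces to verifying $\phi_{\infty} = 0$. For this I plan to invoke Theorem~\ref{thm:DRPL} with $\omg = 3$: the hypotheses \eqref{data.1}--\eqref{data.2} immediately supply $B = D$, yielding $|r\phi|(u,v) \leq C(1+u_{+})^{-2}$ in the region $\set{r \geq R_{0}}$. Along the constant-$r$ curve $\gamma_{R_{0}} = \set{r = R_{0}}$ we have $u = v - r^{\ast}(R_{0}) \to \infty$ as $v \to \infty$, so $|\phi| \to 0$ along $\gamma_{R_{0}}$. To transfer this decay to $\EH = \set{u = \infty}$, I would integrate $\rd_{u}\phi$ at constant $v$ from $\gamma_{R_{0}}$ to $\EH$, using a red-shift estimate at $r = r_{+}$ to control $\int_{u_{R_{0}}(v)}^{\infty} |\rd_{u}\phi|(u, v) \, \ud u$ by a quantity tending to $0$ as $v \to \infty$.

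The main technical obstacle is precisely this red-shift propagation step to the event horizon. A cleaner alternative is to appeal directly to Theorem~\ref{linear.Price.law} (Price's law of Tataru and Metcalfe--Tataru--Tohaneanu), which covers the region $r_{+} \leq r < 2 r_{+}$ and hence yields $|\phi|(\infty, v) \leq C (1+v)^{-3} \to 0$, after verifying its regularity hypotheses from \eqref{data.1}--\eqref{data.2} (upgraded if necessary by commuting the wave equation with $\rd_{t}$). Either route gives $\phi_{\infty} = 0$, and the two Cauchy--Schwarz applications in the first paragraph then close the argument.
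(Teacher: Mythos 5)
Your two applications of Cauchy--Schwarz are exactly the paper's computation: the paper likewise writes $\phi(\infty,v)=-\int_v^\infty \rd_v\phi(\infty,v')\,\ud v'$ and pairs the integrand against the weight $(v')^{7+\ep}$ supplied by \eqref{contra} to obtain \eqref{eq:EH:phi}. The only substantive point of difference is the step you yourself isolated: showing $\lim_{v\to\infty}\phi(\infty,v)=0$. The paper dispatches this in one line by citing the Dafermos--Rodnianski results \cite{DRPL}, whose hypotheses are precisely characteristic-data bounds of the form \eqref{data.1}--\eqref{data.2} and whose conclusions include pointwise decay of $\phi$ up to and \emph{including} the event horizon; the version recorded as Theorem~\ref{thm:DRPL} in this paper is only the far-region piece ($r\ge R_0$) of those results, which is why it does not suffice on its own.

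Neither of your proposed substitutes closes this step as written. Route A (Theorem~\ref{thm:DRPL} plus a red-shift propagation from $\gamma_{R_0}$ to $\EH$) is left as an acknowledged obstacle, and it is a real one: the red-shift estimate the paper actually proves (Proposition~\ref{prop:red-shift}) runs in the opposite direction, taking \eqref{contra} on $\EH$ as input and propagating \emph{outward} to $\set{r\le R_2}$, so it cannot be repurposed here; controlling $\int_{u_{R_0^\ast}(v)}^{\infty}|\rd_u\phi|(u,v)\,\ud u$ by a quantity tending to zero requires quantitative decay of the $\rd_u$-flux toward $i^+$, which is essentially the content of a Price-law theorem and would amount to reproving part of \cite{DRPL}. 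Route B cites Theorem~\ref{linear.Price.law}, but that theorem assumes sufficiently regular Cauchy data decaying toward spatial infinity, whereas the present setting provides only characteristic data obeying the first-order bounds \eqref{data.1}--\eqref{data.2}; one cannot ``upgrade by commuting with $\rd_t$'' because no higher-order bounds on the data are assumed. The correct reference is \cite{DRPL} itself, whose spherically symmetric Price law is formulated under exactly these characteristic-data hypotheses and yields $\phi\to 0$ on $\EH$; with that citation in place, your first paragraph completes the proof and coincides with the paper's.
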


\begin{proof}
According to the results in \cite{DRPL}, $\phi\to 0$ along the event horizon as $v\to \infty$. Therefore, by a direct application of the Cauchy-Schwarz inequality, we have
\begin{equation*}
\begin{split}
|\phi|(\infty,v)&\leq \int_v^{\infty}|\rd_v\phi| dv \\
&\leq \big(\int_v^{\infty} (v')^{-7-\ep} dv'\big)^{\frac 12} \big(\int_v^{\infty} (v')^{7+\ep}(\rd_v\phi)^2(\infty,v') dv'\big)^{\frac 12}\leq C A v^{-3-\frac{\ep}{2}}. \qedhere
\end{split}
\end{equation*}
\end{proof}

Next, we utilize the red-shift effect of the event horizon $\EH = \set{r = r_{+}}$ (see Dafermos-Rodnianski \cite{DRPL}) to propagate decay into a neighborhood $\set{r \leq R_{2}}$ of $\EH$.
%

\begin{proposition}\label{prop:red-shift}
Let $\phi$ be a solution to the linear wave equation with spherically symmetric data verifying \eqref{data.1}-\eqref{data.2} and moreover obeying \eqref{contra}. Then for $R_{2} > r_{+}$ sufficiently close, we have
\begin{align} 
\sup_{(u, v) \in \set{r \leq R_{2}} \cap \set{v \geq 1}} v^{\frac{7}{2}+\frac{\eps}{2}}\abs{\frac{\rd_{u} \phi}{\Omg^{2}}}(u,v)
	\leq& C,  \label{eq:red-shift:1} \\
\int_{C_{u} \cap \set{r \leq R_{2}} \cap \set{v \geq 1}} v^{7+\eps} (\rd_{v} \phi)^{2} \leq& C, \label{eq:red-shift:2}
\end{align}
as well as
\begin{align} 
	\sup_{(u, v) \in \set{r \leq R_{2}}} v^{3+\frac{\eps}{2}} \abs{\phi} \leq & C, \label{eq:red-shift:r=R2:phi} \\
	\sup_{r_+\leq r\leq R_2}\int_{\gmm_{r} \cap \set{v \geq 1}} v^{7+\eps} (\rd_{v} \phi)^{2}  \leq & C, \label{eq:red-shift:r=R2:dvphi}
\end{align}
for a constant $C = C(A, \eps, R_{2},D) > 0$. Here the notation $\gmm_{r}$ denotes the constant $r$-curve parametrized by the spacetime coordinate $v$, and the integration over $\gmm_{r}$ is with respect to $\ud v$.
\end{proposition}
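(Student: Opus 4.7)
The plan is to exploit the red-shift mechanism near $\EH$ via two complementary ingredients: a transport estimate for $\tilde\psi := \rd_u\phi/\Omg^2$ along constant-$u$ curves, and a weighted energy identity using a red-shift vector field $N$ regular across $\EH$ (as in \cite{DRPL}). First, I would derive an ODE for $\tilde\psi$ by combining \eqref{wave.eqn.ext} with the identity $\rd_v\Omg^2 = 2h(r)\Omg^2$, where $h(r):=M/r^2 - e^2/r^3$; this gives
\begin{equation*}
\rd_v\tilde\psi + \bb(2h(r) + \f{\Omg^2}{r}\bb)\tilde\psi = \f{\rd_v\phi}{r}.
\end{equation*}
Since $h(r_+) = \kpp_+ > 0$, taking $R_2 > r_+$ sufficiently close to $r_+$ ensures that the coefficient of $\tilde\psi$ is bounded below by a positive constant $c = c(R_2)$ throughout $\{r \leq R_2\}$. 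Integrating along constant-$u$ curves with initial datum $|\tilde\psi|(u,1) \leq D$ from \eqref{data.2} then yields the pointwise representation
\begin{equation*}
|\tilde\psi|(u,v) \leq D\, e^{-c(v-1)} + C \int_1^v e^{-c(v-v')}\, |\rd_v\phi|(u,v')\, dv' \qquad \text{in } \{r \leq R_2\}.
\end{equation*}

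Next, to establish the weighted $L^2$ bound \eqref{eq:red-shift:2}, I would apply the divergence theorem to the current $v^{7+\eps} T^{\mu\nu} N_\nu$. By the red-shift property of $N$, for $R_2$ sufficiently close to $r_+$ the bulk $\nb_\mu(v^{7+\eps}T^{\mu\nu}N_\nu)$ coercively controls $v^{7+\eps}\bb[\Omg^{-2}(\rd_u\phi)^2 + \Omg^2(\rd_v\phi)^2\bb]$ on $\{r \leq R_2\}$, up to a lower-order error of order $v^{6+\eps}$ arising from the derivative of the weight. Integrating over $\{u' \geq u,\, v' \geq 1,\, r(u', v') \leq R_2\}$, the flux contribution on $\EH$ is bounded by $A$ thanks to \eqref{contra}, the data contribution on $\uC_1$ is controlled via \eqref{data.1}--\eqref{data.2}, and the weight-derivative error is absorbed by a Gr\"onwall argument exploiting its lower $v$-power. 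This will yield \eqref{eq:red-shift:2} together with a spacetime $L^2$ analogue; the companion bound \eqref{eq:red-shift:r=R2:dvphi} follows by running the identical argument with $C_u$ replaced by the timelike curve $\gmm_R$ for each $R \in [r_+, R_2]$, using uniform coercivity of the red-shift bulk throughout this range.

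The $L^\infty$ estimate \eqref{eq:red-shift:1} will then follow by substituting \eqref{eq:red-shift:2} into the transport representation and applying Cauchy--Schwarz to the $v$-convolution. For the pointwise decay \eqref{eq:red-shift:r=R2:phi}, I would set $\Psi := r\phi$ and use the decomposition $\Psi(u,v) = \Phi(u) - \int_v^\infty \rd_v(r\phi)(u, v')\, dv'$, where $\Phi(u) := \lim_{v'\to\infty}r\phi(u, v')$ is bounded via Theorem \ref{thm:DRPL} and decays at rate $v^{-3-\eps/2}$ as $u \to \infty$ by Proposition \ref{prop:EH:phi}. Estimating the tail by Cauchy--Schwarz using \eqref{eq:red-shift:2} yields $|\Psi|(u,v) \leq Cv^{-3-\eps/2}$, hence \eqref{eq:red-shift:r=R2:phi}.

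The main obstacle will be closing the absorption in the energy step: the weight-derivative error at order $v^{6+\eps}$ and the cross contributions from the deformation tensor of $v^{7+\eps} N$ must all be dominated by the coercive red-shift bulk at order $v^{7+\eps}$. This is precisely where the smallness of $\Omg^2$ in $\{r \leq R_2\}$ (secured by choosing $R_2 - r_+$ small) combined with the strict positivity of the surface gravity $\kpp_+$ at $\EH$ becomes crucial; a subtler but standard technical point is the treatment of the corner at $\EH \cap \uC_1$ (the bifurcation sphere), which I would handle by a standard limiting argument.
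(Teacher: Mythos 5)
Your first step (the transport equation for $\rd_u\phi/\Omg^2$ and its Duhamel representation exploiting $\rd_v\Omg^2/\Omg^2 \geq c > 0$ near $r=r_+$) is exactly the paper's red-shift ingredient, and your deduction of \eqref{eq:red-shift:1} from \eqref{eq:red-shift:2} via Cauchy--Schwarz on the convolution is fine. The gap is in how you obtain \eqref{eq:red-shift:2}. The region $\set{u' \geq u,\ v' \geq 1,\ r \leq R_2}$ over which you apply the divergence theorem has the \emph{timelike} hypersurface $\gmm_{R_2}$ as part of its boundary, and the flux of $v^{7+\eps}T(\cdot,N)$ through a timelike boundary is not sign-definite: it contains an incoming-radiation term proportional to $v^{7+\eps} N^u (\rd_u\phi)^2\restriction_{\gmm_{R_2}}$ which enters the region and must be controlled a priori; it is bounded neither by \eqref{contra}, nor by the data, nor by the pointwise bound \eqref{eq:red-shift:1} (which would only bound the integrand by a constant, yielding a divergent integral over $\gmm_{R_2}$). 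You list the $\EH$ and $\uC_1$ contributions but never mention this term. The paper avoids it entirely: it integrates $\rd_u(r\rd_v\phi) = -\rd_v r\, \rd_u\phi$ against $v^{\alp} r \rd_v\phi$ over null rectangles $[u,\infty)\times[1,v]$ wholly contained in $\set{r\leq R_2}$, so that every boundary flux is signed, and it couples the resulting estimate to the transport estimate through the small prefactor $(R_2-r_+)^2$ --- that smallness, not a Gr\"onwall argument in $v$, is what closes the absorption. Your scheme is likely repairable (by a simultaneous bootstrap of the transport and energy estimates, or by switching to null rectangles), but as written the decisive boundary term is unaccounted for.

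The more serious problem is your proof of \eqref{eq:red-shift:r=R2:phi}. Writing $r\phi(u,v) = \Phi(u) - \int_v^\infty \rd_v(r\phi)(u,v')\,dv'$ integrates \emph{away} from the horizon: the tail integral leaves $\set{r\leq R_2}$, where \eqref{eq:red-shift:2} gives no information, and $\Phi(u) = \lim_{v\to\infty} r\phi(u,v)$ is the radiation field on $\calI^+$, not the horizon value. Proposition~\ref{prop:EH:phi} controls $\phi(\infty,v)$, i.e., the limit $u\to\infty$ at fixed $v$, and says nothing about the $u$-decay of $\Phi$; these are different asymptotic regimes. What Theorem~\ref{thm:DRPL} actually gives is $\abs{\Phi(u)} \lesssim (1+u_+)^{-2}$, and since $u \geq v - R_2^{\ast}$ on $\set{r\leq R_2}$ this only yields $v^{-2}$, short of the required $v^{-3-\eps/2}$. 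The correct (and the paper's) route is to integrate in the $u$-direction from the event horizon: $\phi(u,v) = \phi(\infty,v) + \int_u^\infty \f{\rd_u\phi}{\Omg^2}\,(-\rd_u r)\, du'$, so that Proposition~\ref{prop:EH:phi} handles the first term and \eqref{eq:red-shift:1} together with $\int_u^\infty \abs{\rd_u r}\, du' \leq R_2 - r_+$ handles the second, with room to spare since $\f72 + \f{\eps}{2} > 3 + \f{\eps}{2}$.
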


\begin{proof} 
From the wave equation \eqref{wave.eqn.ext} we can derive
\begin{align*}
	\rd_{v} (\frac{\rd_{u} \phi}{\Omg^{2}}) 
	=& - \bb( \frac{\rd_{v} \Omg^{2}}{\Omg^{2}} + \frac{\Omg^{2}}{r} \bb) \frac{\rd_{u} \phi}{\Omg^{2}} + \frac{1}{r} \rd_{v} \phi.
\end{align*}

Hence, multiplying by $(v + V)^{\alp} \, \Omg^{-2} \rd_{u} \phi$, where $V > 0$ is a large constant to be chosen below, we arrive at the identity
\begin{equation} \label{eq:extr:red-shift}
	\frac{1}{2} \rd_{v} \bb( (v+V)^{\alp} \big( \frac{\rd_{u} \phi}{\Omg^{2}} \big)^{2} \bb) + \bb( \frac{\rd_{v} \Omg^{2}}{\Omg^{2}} + \frac{\Omg^{2}}{r} - \frac{\alp}{2(v+V)} \bb) (v+V)^{\alp} \big( \frac{\rd_{u} \phi}{\Omg^{2}} \big)^{2} = (v+V)^{\alp} \rd_{v} \phi \frac{\rd_{u} \phi}{\Omg^{2}}.
\end{equation}

A simple computation shows that $\Omg^{-2} \rd_{v} \Omg^{2}$ is equal to a \emph{positive} constant on $\EH$; this computation captures the red-shift effect along $\EH$. Hence, for $R'_{2} > r_{+}$ sufficiently close to $r_{+}$, there exists $c > 0$ such that
\begin{equation*}
	\frac{\rd_{v} \Omg^{2}}{\Omg^{2}}= \frac{2M}{r^{2}} - \frac{2 e^{2}}{r^{3}} \geq c \quad \hbox{ for } r_{+} \leq r \leq R'_{2}.
\end{equation*}
Moreover, there exists $V = V(R_{2}', \alp, c) > 0$ such that 
\begin{equation*}
	\frac{\rd_{v} \Omg^{2}}{\Omg^{2}} - \frac{\alp}{v+V}  \geq \frac{c}{2} \quad \hbox{ for } r_{+} \leq r \leq R'_{2} \hbox{ and } v \geq 1.
\end{equation*} 
Integrating \eqref{eq:extr:red-shift} along a curve $\set{u} \times [v_{1}, v_{2}]$ such that $\sup_{v \in [v_{1}, v_{2}]} r(u,v) \leq R'_{2}$, and using the bound $v \leq v + V \leq (V+1) v$ for $v \geq 1$ to absorb $V$ into the constant $C$ (to simplify the notation), we obtain
\begin{equation} \label{eq:red-shift:duphi}
	\sup_{v \in [v_{1}, v_{2}]} v^{\alp} (\frac{\rd_{u} \phi}{\Omg^{2}})^{2}(u, v) + \int_{v_{1}}^{v_{2}} v^{\alp} (\frac{\rd_{u} \phi}{\Omg^{2}})^{2} (u, v)\, \ud v 
	\leq  C \bb( v_{1}^{\alp} (\frac{\rd_{u} \phi}{\Omg^{2}})^{2}(u, v_{1}) + \int_{v_{1}}^{v_{2}} v^{\alp} (\rd_{v} \phi)^{2}(u, v) \, \ud v \bb).
\end{equation}
where $C = C(R_{2}', \alp, c) > 0$. 

On the other hand, from the wave equation we have
\begin{align*}
	\rd_{u} (r \rd_{v} \phi) 
	= - \rd_v r \rd_{u} \phi
\end{align*}
Hence multiplying by $v^{\alp} r \rd_{v} \phi$ and integrating over $(u,v) \in [u_{1}, u_{2}] \times [v_{1}, v_{2}]$, we have
\begin{equation} \label{eq:red-shift:rdvphi}
	\int_{v_{1}}^{v_{2}} r^{2}  v^{\alp} (\rd_{v} \phi)^{2}(u_{1}, v) \, \ud v
	=  \int_{v_{1}}^{v_{2}} r^{2} v^{\alp} (\rd_{v} \phi)^{2}(u_{2}, v) \, \ud v 
		+ \int_{v_{1}}^{v_{2}} \int_{u_{1}}^{u_{2}} v^{\alp} (\rd_v r \rd_{u} \phi) (r \rd_{v} \phi) \, \ud u \ud v.
\end{equation}
To treat the last term, we apply the Cauchy-Schwarz inequality and write
\begin{align*}
\int_{v_{1}}^{v_{2}} \int_{u_{1}}^{u_{2}} v^{\alp} \abs{(\rd_{v} r \rd_{u} \phi) (r \rd_{v} \phi)} \, \ud u \ud v	
\leq &
\bb(\sup_{(u, v) \in [u_{1}, u_{2}] \times [v_{1}, v_{2}]} (\rd_v r)^{\f 12} v^{\frac{\alp}{2}} \abs{\frac{\rd_{u} \phi}{\Omg^{2}}} \bb)  \bb( \int_{u_{1}}^{u_{2}} \sup_{v \in [v_{1}, v_{2}]} \Omg^{2}  \, \ud u \bb)  \\
& \times  \sup_{u \in [u_{1}, u_{2}]} 
	\bb( \int_{v_{1}}^{v_{2}} \rd_v r \, \ud v \bb)^{\frac{1}{2}} 
	\bb( \int_{v_{1}}^{v_{2}} r^{2} v^{\alp} (\rd_{v} \phi)^{2} \, \ud v \bb)^{\frac{1}{2}}
\end{align*}
It can be easily verified that $0 \leq \rd_v r \leq 1$ in the exterior of the black hole. Moreover, as $\rd_u r = -\Omg^{2}$ and $\Omg^{- 2} \rd_{v} \Omg^{2}  > 0$ for $r > r_{+}$ (i.e., $\Omg^2$ is increasing in $v$), it follows that
\begin{equation*}
	\int_{u_{1}}^{u_{2}} \sup_{v \in [v_{1}, v_{2}]} \Omg^{2}  \, \ud u 
	= \int_{u_{1}}^{u_{2}} - \rd_u r (u, v_{2}) \, \ud u
	= r(u_{1}, v_{2}) - r(u_{2}, v_{2}).
\end{equation*}
Finally, we use the trivial bound $\int_{v_{1}}^{v_{2}} \rd_v r \ud v \leq 2 \sup_{v \in [v_{1}, v_{2}]} r(u, v)$, since we do not expect this quantity to be small in rectangles with long $v$-length.
Therefore, the last term in \eqref{eq:red-shift:rdvphi} is bounded by
\begin{equation} \label{eq:red-shift:rdvphi:error}
\begin{aligned}
\leq & \bb(\sup_{(u, v) \in [u_{1}, u_{2}] \times [v_{1}, v_{2}]} \abs{v^{\frac{\alp}{2}} \frac{\rd_{u} \phi}{\Omg^{2}}} \bb)  (r(u_{1}, v_{2}) - r(u_{2}, v_{2})) \\
	& \times \bb(2 \sup_{(u, v) \in [u_{1}, u_{2}] \times [v_{1}, v_{2}]} r(u, v) \bb)^{\frac{1}{2}} \sup_{u \in [u_{1}, u_{2}]} \bb( \int_{v_{1}}^{v_{2}} r^{2} v^{\alp} (\rd_{v} \phi)^{2} \, \ud v \bb)^{\frac{1}{2}}.
\end{aligned}
\end{equation}
Let $R_{2}\in [r_+,R_2']$ be a constant to be determined below, and consider a rectangle $[u_{1}, u_{2}] \times [v_{1}, v_{2}] \subseteq \set{r_{+} \leq r \leq R_{2}}$. Observe that $r(u_{1}, v_{2}) - r(u_{2}, v_{2}) \leq R_{2} - r_{+}$; moreover, $r$ is bounded from below and above by $r_{+}$ and $R_{2} \leq R_{2}'$, respectively. Combining \eqref{eq:red-shift:rdvphi} and \eqref{eq:red-shift:rdvphi:error}, we arrive at
\begin{equation} \label{eq:red-shift:dvphi}
	\sup_{u \in [u_{1}, u_{2}]} \int_{v_{1}}^{v_{2}} v^{\alp} (\rd_{v} \phi)^{2} \, \ud v
	\leq C \bb( \int_{v_{1}}^{v_{2}} v^{\alp} (\rd_{v} \phi)^{2}(u_{2}, v) \, \ud v + (R_{2} - r_{+})^{2} \sup_{(u, v) \in [u_{1}, u_{2}] \times [v_{1}, v_{2}]} v^{\alp} \big(\frac{\rd_{u} \phi}{\Omg^{2}}\big)^{2} \bb)
\end{equation}
where $C = C(R_{2}', r_{+}) > 0$. Taking $R_{2} - r_{+}$ sufficiently small (depending on $R_{2}', r_{+}, \alp$ and $c$), \eqref{eq:red-shift:duphi} and \eqref{eq:red-shift:dvphi} imply
\begin{equation} \label{eq:red-shift:combine}
\begin{aligned}
& \hskip-2em
	\sup_{(u, v) \in [u_{1}, u_{2}] \times [v_{1}, v_{2}]} v^{\alp} (\frac{\rd_{u} \phi}{\Omg^{2}})^{2}
	+ \sup_{u \in [u_{1}, u_{2}] }\int_{v_{1}}^{v_{2}} v^{\alp} (\frac{\rd_{u} \phi}{\Omg^{2}})^{2} (u, v)\, \ud v 
	+ \sup_{u \in [u_{1}, u_{2}]} \int_{v_{1}}^{v_{2}} v^{\alp} (\rd_{v} \phi)^{2} \, \ud v \\
\leq & C \bb( \sup_{u \in [u_{1}, u_{2}]} v_{1}^{\alp} \big(\frac{\rd_{u} \phi}{\Omg^{2}} \big)^{2}(u, v_{1}) + \int_{v_{1}}^{v_{2}} v^{\alp} (\rd_{v} \phi)^{2} (u_{2}, v) \, \ud v \bb).
\end{aligned}
\end{equation}
Applying \eqref{eq:red-shift:combine} with $\alp = 7 + \eps$ to rectangles of the form $[u, \infty) \times [1, v]$, where $(u, v) \in \set{r \leq R_{2}}$ and $v \geq 1$, we obtain \eqref{eq:red-shift:1} and \eqref{eq:red-shift:2}. Moreover, \eqref{eq:red-shift:r=R2:phi} can be proved by integrating \eqref{eq:red-shift:1} over $[u, \infty) \times \set{v}$, using \eqref{eq:EH:phi} and the fact that $\Omg^{2} = - \rd_{u} r$. Finally, repeating the above argument to the region $\set{r \leq R_{2}} \cap \set{v \geq 1}$, \eqref{eq:red-shift:r=R2:dvphi} follows; we omit the details. \qedhere
\end{proof}

Notice that if $R\leq R_2$, then we have already obtained Theorem \ref{horizon.lower.bd}. We therefore assume that $R>R_2$ and we show that the pointwise and integrated decay estimates \eqref{eq:red-shift:r=R2:phi} and \eqref{eq:red-shift:r=R2:dvphi} can be propagated to the curve $\gmm_{R} = \set{r = R}$ (with some loss in the exponent).
\begin{proposition} \label{prop:r=R}
Let $R > R_{2}$, where $R_{2}$ is as in Proposition~\ref{prop:red-shift}. Let $\phi$ be a solution to the linear wave equation with spherically symmetric data verifying \eqref{data.1}-\eqref{data.2} and moreover obeying \eqref{contra}. Then in the region $\set{r \leq R}$, we have 
\begin{align} 
	\sup_{\set{r \leq R} \cap \set{v \geq 1}} v^{3 + \frac{\eps}{2}} \abs{\phi} 
	\leq& C, \label{eq:r=R:phi} \\
	\int_{\gmm_{R} \cap \set{v \geq 1}} \frac{v^{6+\eps}}{\log^{2}(1+v)} (\rd_{v} \phi)^{2}
	 \leq& C, \label{eq:r=R:dvphi}
\end{align}
for a constant $C = C(A, \eps, R, R_{2}) > 0$.
\end{proposition}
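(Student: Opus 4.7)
The proof extends the decay estimates from Proposition~\ref{prop:red-shift} from the red-shift region $\set{r \leq R_{2}}$ out to the larger region $\set{r \leq R}$. The key geometric feature is that the strip $\set{R_{2} \leq r \leq R}$ is bounded in $r$, so that $\Omg^{2}$, $\rd_{v} r$ and $\abs{\rd_{u} r}$ are bounded above and below by positive constants, the null separation between $\gmm_{R_{2}}$ and $\gmm_{R}$ along any characteristic is a finite constant, and $u \aeq v$ throughout the strip.

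For the pointwise bound \eqref{eq:r=R:phi} at a point $(u,v)$ in the strip, I would integrate $\rd_{v} \phi$ along the constant-$u$ curve from $\gmm_{R_{2}}$,
\begin{equation*}
	\phi(u,v) = \phi(u, v_{R_{2}^{*}}(u)) + \int_{v_{R_{2}^{*}}(u)}^{v} \rd_{v} \phi(u, v') \, \ud v'.
\end{equation*}
The first term is bounded by $C v^{-3-\eps/2}$ by \eqref{eq:red-shift:r=R2:phi} (since $v_{R_{2}^{*}}(u) \aeq v$), and the integral is handled by weighted Cauchy-Schwarz, reducing matters to the weighted $L^{2}$ bound $\int_{C_{u} \cap \set{r \leq R}} v^{7+\eps} (\rd_{v} \phi)^{2} \, \ud v \leq C$ uniformly in $u$.

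To establish this $C_{u}$ bound inside the strip, I would couple two ingredients. First, integrating the identity $\rd_{u}(r\rd_{v}\phi) = -\rd_{v} r \, \rd_{u}\phi$ in $u$ from the strip toward $\gmm_{R_{2}}$, squaring, and integrating against $v^{7+\eps} \, \ud v$ yields a bound on $\int v^{7+\eps} r^{2} (\rd_{v} \phi)^{2} \, \ud v$ in the strip in terms of \eqref{eq:red-shift:r=R2:dvphi} on $\gmm_{R_{2}}$ plus a bulk error involving $\rd_{u} \phi$. Second, the required pointwise decay $\abs{\rd_{u}\phi/\Omg^{2}} \leq C v^{-(7+\eps)/2}$ throughout the strip is obtained from the transport equation
\begin{equation*}
	\rd_{v}(\rd_{u}\phi/\Omg^{2}) + (\Omg^{2}/r + \rd_{v}\Omg^{2}/\Omg^{2})(\rd_{u}\phi/\Omg^{2}) = \rd_{v}\phi/r,
\end{equation*}
derived from the wave equation, which has bounded coefficients in the strip. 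Duhamel integration of this ODE in $v$ at fixed $u$, starting from $\gmm_{R_{2}}$ where \eqref{eq:red-shift:1} provides the boundary data, controls $\rd_{u}\phi/\Omg^{2}$ in terms of a forcing term that is in turn dominated by the $C_{u}$ $L^{2}$ norm of $\rd_{v}\phi$ via weighted Cauchy-Schwarz. The coupled system closes via a bootstrap in the strip.

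Finally, \eqref{eq:r=R:dvphi} on $\gmm_{R}$ follows from the $C_{u}$ $L^{2}$ bound by a change of parametrization combined with a weighted averaging over $u$; the loss of one power of $v$ offset by the gain of $\log^{2}(1+v)$ in the final weight arises from this averaging, which trades $v$-power for a logarithmic correction when converting $L^{2}$ control on the (vertical) slices $C_{u}$ into $L^{2}$ control along the (timelike) curve $\gmm_{R}$. The main obstacle I expect lies in the strip: unlike in Proposition~\ref{prop:red-shift}, we cannot exploit the smallness of $R_{2} - r_{+}$ to absorb the bulk error against a favorable red-shift contribution, and must instead rely on the boundedness of the $v$-length of the strip at fixed $u$, which appears to be precisely what forces the logarithmic loss in the final bound on $\gmm_{R}$.
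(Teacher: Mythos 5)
Your route differs from the paper's (which runs an \emph{unweighted} $(1+1)$-dimensional energy estimate with multiplier $(\rd_{v}-\rd_{u})\phi$ on dyadic regions $\set{v_{0}\leq v\leq 2v_{0}}$, closes it by Gronwall in $r^{\ast}$, and then sums the dyadic fluxes using the $\log^{-2}$ weight for summability), and your intermediate claims --- the uniform bound $\int_{C_{u}\cap\set{r\leq R}}v^{7+\eps}(\rd_{v}\phi)^{2}\,\ud v\leq C$ and the pointwise bound $\abs{\rd_{u}\phi/\Omg^{2}}\lesssim v^{-(7+\eps)/2}$ in the strip --- are plausible, precisely because $C_{u}\cap\set{R_{2}\leq r\leq R}$ has bounded $v$-length. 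But there are two gaps. The minor one: your coupled bootstrap does not close as stated. Writing $P=\sup v^{7+\eps}(\rd_{u}\phi/\Omg^{2})^{2}$ and $Q=\sup_{u}\int_{C_{u}}v^{7+\eps}(\rd_{v}\phi)^{2}$, your two ingredients give $P^{1/2}\leq C_{data}+c_{1}L^{1/2}Q^{1/2}$ and $Q\leq C_{data}+c_{2}(R-R_{2})^{2}L\,P$ with $L=R^{\ast}-R_{2}^{\ast}$; the product of the error coefficients is $O(1)$, not small, since unlike Proposition~\ref{prop:red-shift} there is no small parameter $R-R_{2}$. One must subdivide $[R_{2}^{\ast},R^{\ast}]$ into thin sub-strips and iterate (this is exactly the role Gronwall plays in the paper's version); "boundedness of the $v$-length" gives finiteness of the errors, not absorbability.

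The substantive gap is your last step. A weighted average over $u$ of the $C_{u}$ bounds produces a \emph{bulk} estimate $\iint_{strip}\frac{v^{6+\eps}}{\log^{2}(1+v)}(\rd_{v}\phi)^{2}\,\ud u\,\ud v\leq C$ (or, via pigeonhole, a flux bound on \emph{some} intermediate constant-$r$ curve), but it does not bound the flux through the specific timelike curve $\gmm_{R}$: restricting to a curve transverse to the $C_{u}$ foliation is a trace problem, not a reparametrization. This matters for the paper's logic, since $R$ is fixed in advance by Theorem~\ref{thm.decay.R} and the contradiction in Theorem~\ref{horizon.lower.bd} must be produced on that particular $\gmm_{R}$. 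To repair it you need either a flux identity whose boundary term lives on $\gmm_{R}$ (the paper's multiplier does exactly this), or a genuine trace inequality of the form
\begin{equation*}
	(\rd_{v}\phi)^{2}(u_{R^{\ast}}(v),v)\leq \frac{1}{L}\int_{u_{R^{\ast}}(v)}^{u_{R_{2}^{\ast}}(v)}(\rd_{v}\phi)^{2}\,\ud u' + 2\int_{u_{R^{\ast}}(v)}^{u_{R_{2}^{\ast}}(v)}\abs{\rd_{v}\phi}\,\abs{\rd_{u}\rd_{v}\phi}\,\ud u',
\end{equation*}
followed by the wave equation to replace $\rd_{u}\rd_{v}\phi$ by $\frac{\Omg^{2}}{r}(\rd_{v}\phi-\rd_{u}\phi)$ and your bounds on $P$ and $Q$ to control the resulting terms; note that it is exactly the $(\rd_{u}\phi)^{2}$ contribution, which decays only like $v^{-(7+\eps)}$ pointwise and must be integrated over the \emph{unbounded} $v$-range of $\gmm_{R}$, that forces the drop from $v^{7+\eps}$ to $v^{6+\eps}/\log^{2}(1+v)$ --- so the logarithmic loss is not an artifact of averaging but is needed to make this boundary contribution integrable. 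With those two repairs your argument goes through and is a legitimate alternative to the paper's dyadic energy method.
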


\begin{proof} 

The idea is to use the (1+1)-dimensional energy estimate in the space-like direction using the multiplier $(\rd_{v} - \rd_{u}) \phi$ and establish \eqref{eq:r=R:dvphi} on dyadic segments $\gmm_{R} \cap \set{v_{0} \leq v \leq 2 v_{0}}$ for $v_{0} \in 2^{\bbN}$. We will also obtain estimates for the $L^2$ norm of $\rd_u\phi$ along $\underline C_v$ with appropriate $v$-decay, from which the pointwise bound \eqref{eq:r=R:phi} will follow.

In order to proceed, we need to introduce some notation. By a slight abuse of notation, let $\gmm_{r^{\ast}}$ denote the constant $r^{\ast}$ curve, which may be parametrized by $v$ or $u$ as
\begin{equation*}
	\gmm_{r^{\ast}} = \set{(u_{r^{\ast}}(v), v)} = \set{(u, v_{r^{\ast}}(u))}
\end{equation*}
where $u_{r^{\ast}}(v) = v - r^{\ast}$ and $v_{r^{\ast}}(u) = u + r^{\ast}$. 
As usual, we integrate functions over $\gmm_{r^{\ast}}$ using the $v$-parametrization. Note that the integral of the same function performed using the $u$-parametrization gives the same value.
We also define the segments
\begin{equation*}
	\uC_{v}(r^{\ast}_{1}, r^{\ast}_{2}) := \set{(u, v) : u_{r^{\ast}_{1}}(v) \leq u \leq u_{r^{\ast}_{2}}(v)}, \quad
	C_{u}(r^{\ast}_{1}, r^{\ast}_{2}) := \set{(u, v) : v_{r^{\ast}_{1}}(u) \leq v \leq v_{r^{\ast}_{2}}(u)}.
\end{equation*}
Given $v_{0} \geq 1$, denote by $\gmm^{(v_{0})}_{r^{\ast}}$ the segment
\begin{equation*}
	\gmm^{(v_{0})}_{r^{\ast}} := \gmm_{r^{\ast}} \cap \set{v_{r^{\ast}}(u_{R^{\ast}}(v_{0})) \leq v \leq 2 v_{0}}
\end{equation*}
where $R^{\ast}$ is the $r^{\ast}$ value at $r = R$. Note that $\gmm^{(v_{0})}_{R^{\ast}} = \gmm_{R} \cap \set{v_{0} \leq v \leq 2v_{0}}$, which is the segment we wish to estimate at the end. Finally, let $\calD^{(v_{0})}(r^{\ast}_{1}, r^{\ast}_{2})$ denote the region bounded by $\gmm^{(v_{0})}_{r^{\ast}_{1}}$, $\gmm^{(v_{0})}_{r^{\ast}_{2}}$, $\uC_{2v_{0}}(r^{\ast}_{1}, r^{\ast}_{2})$ and $C_{u_{R^{\ast}}(v_{0})}(r^{\ast}_{1}, r^{\ast}_{2})$. A Penrose diagram representation of these objects is given in Figure~\ref{fig:no-shift}.

\begin{figure}[h]
\begin{center}
\def\svgwidth{220px}
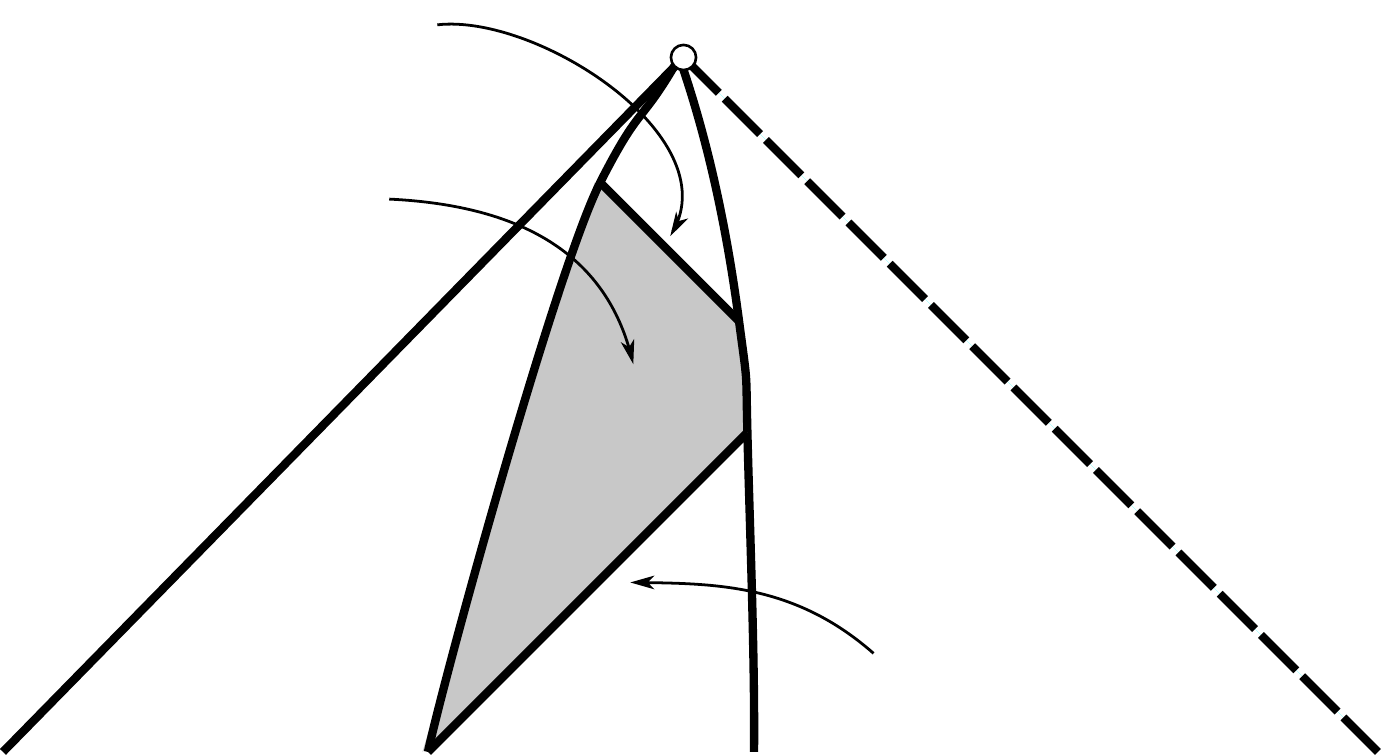 
\caption{} \label{fig:no-shift}
\end{center}
\end{figure}

For every $v_{0} \geq 1$, we claim that the following inequality holds:
\begin{equation} \label{eq:r=R:energy}
	\sup_{r^{\ast} \in [R_{2}^{\ast}, R^{\ast}]} \bb( \int_{\gmm^{(v_{0})}_{r^{\ast}}} (\rd_{v} \phi)^{2} + (\rd_{u} \phi)^{2} \bb)
	+ \sup_{v \in [v_{0}, 2v_{0}]} \int_{\uC_{v}(R^{\ast}_{2}, R^{\ast})} (\rd_{u} \phi)^{2} 
	\leq C \bb( \int_{\gmm^{(v_{0})}_{R^{\ast}_{2}}} (\rd_{v} \phi)^{2} + (\rd_{u} \phi)^{2} \bb)
\end{equation}
where $R^{\ast}_{2}$ is the $r^{\ast}$ value at $r = R_{2}$ and $C = C(R^{\ast}, R_{2}^{\ast}) > 0$. 
To prove \eqref{eq:r=R:energy}, we multiply the wave equation for $\phi$ by $(\rd_{v} - \rd_{u}) \phi$, which gives
\begin{equation} \label{phi.dr.ed}
	-\frac{1}{2} \rd_{v} (\rd_{u} \phi)^{2} + \frac{1}{2} \rd_{u} (\rd_{v} \phi)^{2} = (- \frac{\Omg^{2}}{r} \rd_{u} \phi + \frac{\Omg^{2}}{r} \rd_{v} \phi) (\rd_{v} - \rd_{u}) \phi.
\end{equation}
We then integrate \eqref{phi.dr.ed} by parts on regions of the form $\calD^{(v_{0})}(R^{\ast}_{2}, r^{\ast}) \cap \set{v \leq v_{1}}$ for $r^{\ast} \in [R^{\ast}_{2}, R^{\ast}]$ and $v_{0} \leq v_{1} \leq 2v_{0}$ with respect to the measure $du\,dv$. Then using the fact that $r$ is localized in $R_{2} \leq r \leq R$, we have $\frac{\Omg^{2}}{r} = \frac{1}{r}(1 - \frac{2M}{r} + \frac{e^{2}}{r^{2}})$, $\rd_{v} r$ and $|\rd_{u} r|$ are uniformly bounded from the above and below (independently of $v_{0}$). Therefore, we have
\begin{equation*} 
\begin{split}
	&\sup_{r^{\ast} \in [R_{2}^{\ast}, R^{\ast}]} \bb( \int_{\gmm^{(v_{0})}_{r^{\ast}}} (\rd_{v} \phi)^{2} + (\rd_{u} \phi)^{2} \bb)
	+ \sup_{v \in [v_{0}, 2v_{0}]} \int_{\uC_{v}(R^{\ast}_{2}, R^{\ast})} (\rd_{u} \phi)^{2} \\
	\leq &C \bb( \int_{\gmm^{(v_{0})}_{R^{\ast}_{2}}} (\rd_{v} \phi)^{2} + (\rd_{u} \phi)^{2}+\int_{R_2^{\ast}}^{R^{\ast}} (\int_{\gmm^{(v_{0})}_{r^{\ast}}} (\rd_v\phi)^2+(\rd_u\phi)^2) dr^{\ast} \bb)
\end{split}
\end{equation*}
for some $C=C(R_2^{\ast},R^{\ast})>0$. From this, we easily obtain \eqref{eq:r=R:energy} using Gronwall's inequality.

We are now ready to conclude the proof of \eqref{eq:r=R:dvphi}. Note that 
\begin{equation*}
	\gmm^{(v_{0})}_{R^{\ast}_{2}} = \gmm_{R^{\ast}_{2}} \cap \set{ v_{0} - R^{\ast} + R^{\ast}_{2} \leq v \leq 2v_{0}} 
	\subseteq \gmm_{R^{\ast}_{2}} \cap \set{ \frac{1}{2}v_{0} \leq v \leq 2v_{0}} 
\end{equation*}
for $v_{0}$ sufficiently large, i.e., $v_{0} \geq 2 (R^{\ast} - R^{\ast}_{2})$. 
Note furthermore that, combining \eqref{eq:red-shift:1} and \eqref{eq:red-shift:r=R2:dvphi} in Proposition~\ref{prop:red-shift}, we have
\begin{equation} \label{eq:r=R:dyadic-R2}
	\int_{\gmm_{R^{\ast}_{2}} \cap \set{\frac{1}{2} v_{0} \leq v \leq 2v_{0}}} v^{6+\eps} \big( (\rd_{v} \phi)^{2} + (\rd_{u} \phi)^{2} \big) \leq C
\end{equation}
for some $C = C(A, \eps, R_{2}, D) > 0$ independent of $v_{0}$. Summing up the dyadic bounds obtained from \eqref{eq:r=R:energy} and \eqref{eq:r=R:dyadic-R2} (for large $v_{0} \in 2^{\bbN}$), we obtain \eqref{eq:r=R:dvphi}. We remark that the loss $\log^{-2}(1+v)$ allows us to gain summability. On the other hand, again by \eqref{eq:r=R:energy} (in particular, the last term on the left-hand side) and \eqref{eq:r=R:dyadic-R2}, we also have the bound
\begin{equation*}
	\sup_{v \geq 1} v^{6+\eps} \int_{\uC_{v}(R^{\ast}_{2}, R^{\ast})} (\rd_{u} \phi)^{2} \leq C.
\end{equation*}
At this point, \eqref{eq:r=R:phi} follows from the preceding bound and \eqref{eq:red-shift:r=R2:phi} in Proposition \ref{prop:red-shift}. \qedhere

%
\end{proof}

We are now ready to conclude the proof of Theorem~\ref{horizon.lower.bd}.
\begin{proof} [Proof of Theorem~\ref{horizon.lower.bd}]
Let $(u_{R^{\ast}}(v), v)$ be a parametrization of $\gmm_{R}$, i.e., $u_{R^{\ast}}(v) = v - R^{\ast}$. Note, moreover, that $\rd_{v} r = \Omg^{2}$ is constant on $\gmm_{R}$; we denote this value by $\Omg_{R}^{2}$. Combining \eqref{eq:r=R:phi} and \eqref{eq:r=R:dvphi}, we see that
 \begin{equation*}
	\int_{\gmm_{R} \cap \set{v \geq 1}} v^{5} (\rd_{v} (r \phi))^{2} 
	\leq 2 R^{2} \int_{1}^{\infty} v^{5} (\rd_{v} \phi)^{2} (u_{R^{\ast}}(v), v) \, \ud v
		+ 2 \Omg^{2}_{R} \int_{1}^{\infty} v^{5} \phi^{2}(u_{R^{\ast}}(v), v) \, \ud v 
	\leq C.
\end{equation*}
\end{proof}

\section{Interior of the black hole}\label{sec.interior}
We now turn to the final part of the proof of Theorem \ref{decay.thm}, i.e., we prove Theorem \ref{final.blow.up.step}. We work in the interior of the black hole. The main idea is that since we are in spherical symmetry, we can solve the wave equation in the spacelike direction toward future timelike infinity $i^+$ and control the ``initial data term'' using the assumption of Theorem \ref{decay.thm}. In this context, the \emph{blue shift} effect which is at the first place the source of the instability is seen in the analysis as a \emph{red shift} effect, and we can therefore perform an iteration to prove the decay of $\phi$ along $\mathcal H^+$ (see related analysis in \cite{DRS, DRdS}).

Before we proceed, we set up some notation for this section. Let
\begin{equation*}
	\Gmm_{\tau} := \Gmm^{(1)}_{\tau} \cup \Gmm^{(2)}_{\tau}
\end{equation*}
where
\begin{equation*}
	\Gmm^{(1)}_{\tau} = 	\set{(-\tau, v) : v \geq \tau}, \quad
	\Gmm^{(2)}_{\tau} = \set{(u, \tau) : u \leq \tau}.
\end{equation*}
We also define
\begin{equation*}
	\CH(\tau_{1}, \tau_{2}) = \CH \cap \set{\tau_{1} \leq u \leq \tau_{2}}, \quad
	\EH(\tau_{1}, \tau_{2}) = \EH \cap \set{\tau_{1} \leq v \leq \tau_{2}}.
\end{equation*}
Denote by $\calD(\tau_{1}, \tau_{2})$ the region bounded by $\Gmm^{(1)}_{\tau_{1}}$, $\Gmm^{(2)}_{\tau_{1}}$, $\CH(\tau_{1}, \tau_{2})$, $\EH(\tau_{1}, \tau_{2})$, $\Gmm^{(1)}_{\tau_{2}}$, $\Gmm^{(2)}_{\tau_{2}}$. A Penrose diagram representation of these objects is provided in Figure~\ref{fig:interior}. Note that we will be integrating on the sets $\calD(\tau_{1}, \tau_{2})$, $\Gmm^{(1)}_{\tau_{1}}$, $\Gmm^{(2)}_{\tau_{1}}$, $\CH(\tau_{1}, \tau_{2})$, $\EH(\tau_{1}, \tau_{2})$, $\Gmm^{(1)}_{\tau_{2}}$, $\Gmm^{(2)}_{\tau_{2}}$ and we will use the convention for the volume elements introduced in Section \ref{sec.notation}.

\begin{figure}[h]
\begin{center}
\def\svgwidth{220px}
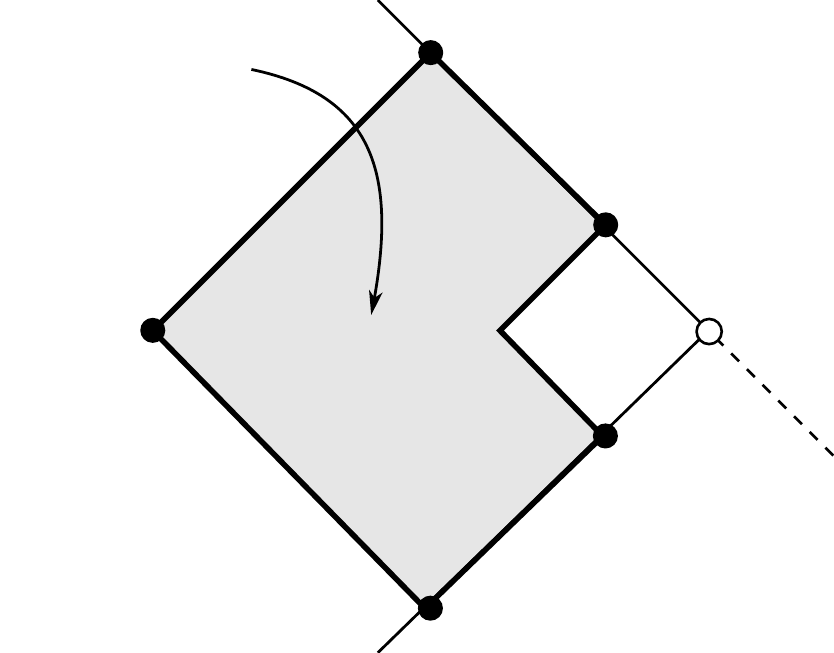 
\caption{} \label{fig:interior}
\end{center}
\end{figure}



The assumption in Theorem \ref{final.blow.up.step} implies that there exists $\tau \geq 1$ and some integer $\alp_0>7$ such that 
\begin{equation} \label{eq:contra.2.pre}
	\int_{\Gmm^{(1)}_{\tau}} v^{\alp_{0}} (\rd_{v} \phi)^{2} < \infty.
\end{equation}
At this point, it turns out to be convenient to make a few reductions to bring \eqref{eq:contra.2.pre} to a form that is easier to use.
First, for simplicity, we henceforth restrict ourselves to the case $\tau = 1$; the reader may check that the argument below remains the same when \eqref{eq:contra.2.pre} holds for other values of $\tau$. Next, on $\Gmm^{(1)}_{1}$, note that
\begin{equation*}
	\lim_{v \to \infty} \frac{v}{\log \frac{1}{\Omg} (-1, v)} = \frac{1}{\abs{\kpp_{-}}},
\end{equation*}
where $\kpp_{-} = \frac{r_{-} - r_{+}}{2 r_{-}^{2}}$ is a strictly negative number. Hence \eqref{eq:contra.2.pre} for $\tau = 1$ is equivalent to
\begin{equation} \label{contra.2}
	\int_{\Gmm^{(1)}_{1}} (1 + \chi_{1}(r) \log^{\alp_{0}} (\frac{1}{\Omg})) (\rd_{v} \phi)^{2}  =: A < \infty,
\end{equation}
where $\chi_{1}(r)$ is a smooth cutoff near $\CH$ to be fixed below (see \eqref{chi1.def}). 

Our goal now is to show that \eqref{contra.2} together with the assumptions \eqref{data.1}-\eqref{data.2} on the initial data lead to the bound
\begin{equation}\label{contra.2.goal}
	\int_{\EH\cap\{v\geq 1\}} \f{v^{\alp_0}}{\log^2(1+v)} (\rd_{v} \phi)^{2} < \infty,
\end{equation}
along $\mathcal H^+$ towards $i^{+}$.

In Propositions \ref{prop:intr:energy}-\ref{prop:intr:red-shift:final}, we will derive some identities and estimates in general for spherically symmetric solutions to the wave equation in the interior of the black hole. Then, in Proposition \ref{prop:intr:decay}, we will use the assumption \eqref{contra.2} to derive \eqref{contra.2.goal}. We begin with an energy inequality in the direction of $i^{+}$.
\begin{proposition} \label{prop:intr:energy}
For every $\tau_{1}, \tau_{2}$ such that $1 \leq \tau_{1} \leq \tau_{2}$, we have
\begin{equation} \label{eq:intr:energy}
\begin{aligned}
&\hskip-2em
	\int_{\Gmm_{\tau_{2}}^{(1)}} r^{2} (\rd_{v} \phi)^{2} + \int_{\Gmm_{\tau_{2}}^{(2)}} r^{2} (\rd_{u} \phi)^{2} 
	+ \int_{\CH(\tau_{1}, \tau_{2})} r^{2} (\rd_{u} \phi)^{2} + \int_{\EH} r^{2} (\rd_{v} \phi)^{2} \\
	\leq & C \bb( \int_{\Gmm^{(1)}_{\tau_{1}}} r^{2} (\rd_{v} \phi)^{2} + \int_{\Gmm^{(2)}_{\tau_{2}}} r^{2} (\rd_{u} \phi)^{2} \bb)
\end{aligned}
\end{equation}
for some universal $C>0$.
\end{proposition}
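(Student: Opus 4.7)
The approach is to exploit the $\partial_{t}$-Killing symmetry of the Reissner-Nordstr\"om metric, which remains Killing in the interior, together with the resulting divergence-free energy current of $\phi$. In $(u, v)$ coordinates $\partial_{t} = \tfrac{1}{2}(\partial_{v} - \partial_{u})$, so we expect a conservation law relating the ``future'' and ``past'' pieces of $\partial \mathcal{D}(\tau_{1}, \tau_{2})$, weighted by the spherical area factor $r^{2}$.

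Concretely, the first step is to derive a pointwise identity directly from the spherically symmetric wave equation \eqref{wave.eqn.int}. Multiplying \eqref{wave.eqn.int} by $r^{2} \partial_{v} \phi$ and substituting $\partial_{u} r = 1 - \tfrac{2M}{r} + \tfrac{e^{2}}{r^{2}} = -\Omega^{2}$ (valid in the interior), a short computation yields
\begin{equation*}
\partial_{u} \big( r^{2} (\partial_{v} \phi)^{2} \big) = 2 r \Omega^{2} \partial_{u} \phi \partial_{v} \phi.
\end{equation*}
Multiplying instead by $r^{2} \partial_{u} \phi$ gives symmetrically
\begin{equation*}
\partial_{v} \big( r^{2} (\partial_{u} \phi)^{2} \big) = 2 r \Omega^{2} \partial_{u} \phi \partial_{v} \phi.
\end{equation*}
Subtracting, I obtain the exact conservation law $\partial_{u} \big( r^{2} (\partial_{v} \phi)^{2} \big) - \partial_{v} \big( r^{2} (\partial_{u} \phi)^{2} \big) = 0$, which is the coordinate expression of $\partial_{\mu}(\sqrt{-g}\, T^{\mu}{}_{\nu} (\partial_{t})^{\nu}) = 0$.

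The second step is to integrate this identity over $\mathcal{D}(\tau_{1}, \tau_{2})$ with respect to $du\, dv$ and apply Fubini. Writing $\mathcal{D}(\tau_{1}, \tau_{2})$ as the disjoint union $\big( (-\tau_{2}, -\tau_{1}) \times (\tau_{1}, \infty) \big) \cup \big( (-\infty, -\tau_{2}) \times (\tau_{1}, \tau_{2}) \big)$, I would evaluate the one-dimensional antiderivatives slice by slice. Each piece of $\partial \mathcal{D}(\tau_{1}, \tau_{2})$ then appears exactly once as a boundary value, with the sign dictated by whether the endpoint lies on the ``future'' or ``past'' side of the region. Tracking the signs, the $\tau_{1}$-pieces land on one side and the $\tau_{2}$-pieces together with $\CH(\tau_{1}, \tau_{2})$ and $\EH(\tau_{1}, \tau_{2})$ land on the other, yielding in fact the equality corresponding to $C = 1$ in \eqref{eq:intr:energy}.

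The main technical point I anticipate is the justification of the integration by parts at the ideal boundaries $\EH = \{u = -\infty\}$ and $\CH = \{v = +\infty\}$, which lie outside our coordinate chart. I would handle this by first applying the identity on truncated regions of the form $\mathcal{D}(\tau_{1}, \tau_{2}) \cap \{ u \geq -U, v \leq V \}$, and then letting $U, V \to \infty$. The existence and finiteness of the limiting boundary fluxes on $\EH$ and $\CH$ would follow from monotone convergence (all integrands are nonnegative) together with the a priori regularity of $\phi$ coming from the hypotheses \eqref{data.1}--\eqref{data.2}, which are enough to make the boundary contributions on the auxiliary truncation surfaces $\{ u = -U \}$ and $\{ v = V \}$ tend to zero. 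Once this limiting argument is set up, the rest of the proof is essentially a sign-tracking bookkeeping exercise.
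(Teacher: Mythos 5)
Your proposal is correct and takes essentially the same route as the paper: the paper multiplies the interior wave equation by $r^{2}(\rd_{u}\phi - \rd_{v}\phi)$ (i.e., the $\rd_{t}$-energy current weighted by $r^{2}$) and integrates by parts over $\calD(\tau_{1},\tau_{2})$, which upon expansion is exactly your conservation law $\rd_{u}\bigl(r^{2}(\rd_{v}\phi)^{2}\bigr) = \rd_{v}\bigl(r^{2}(\rd_{u}\phi)^{2}\bigr)$, yielding the identity with $C=1$. One small correction to your limiting argument: the fluxes on the truncation surfaces $\set{u=-U}$ and $\set{v=V}$ do not tend to zero --- they converge precisely to the $\EH$ and $\CH$ flux terms appearing on the left-hand side (and for the stated inequality it suffices to drop them by positivity, or invoke Fatou), so as written that sentence contradicts the presence of those terms in your claimed equality.
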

\begin{proof} 
The proof follows by simply integrating by parts over $\calD(\tau_{1}, \tau_{2})$ the identity
\begin{equation*}
	0 = r^{2} (\rd_{u} \phi - \rd_{v} \phi) \bb( \rd_{u} \rd_{v} \phi - \frac{\Omg^{2}}{r} \rd_{u} \phi - \frac{\Omg^{2}}{r} \rd_{v} \phi \bb),
\end{equation*} 
using the fact that $\rd_{v} r = \rd_{u} r = - \Omg^{2}$ in our coordinates. Note that the result is simply the energy identity corresponding to the stationary Killing vector field, but for propagation in the space-like direction. \qedhere
\end{proof}

Using the energy inequality, we now establish an integrated local energy decay estimate.
\begin{proposition} \label{prop:intr:ILED}
For every $\tau_{1}, \tau_{2}$ such that $1 \leq \tau_{1} \leq \tau_{2}$, we have
\begin{equation} \label{eq:intr:ILED}
	\int_{\calD(\tau_{1}, \tau_{2})} \Omg^{2} \bb( (\rd_{u} \phi)^{2} + (\rd_{v} \phi)^{2} \bb)
	\leq C \bb( \int_{\Gmm^{(1)}_{\tau_{1}}} (\rd_{v} \phi)^{2} + \int_{\Gmm^{(2)}_{\tau_{2}}} (\rd_{u} \phi)^{2} \bb)
\end{equation}
for some universal $C>0$.
\end{proposition}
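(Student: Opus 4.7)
The strategy is to derive a spacetime integral identity whose bulk term is exactly the desired $\Omg^{2}[(\rd_u\phi)^2 + (\rd_v\phi)^2]$, then apply Stokes' theorem on $\calD(\tau_1,\tau_2)$ and control the resulting boundary integrals via Proposition~\ref{prop:intr:energy}. Concretely, multiplying the spherically symmetric wave equation $r\rd_u\rd_v\phi = \Omg^2(\rd_u\phi + \rd_v\phi)$ by $(\rd_u + \rd_v)\phi$ and using $\rd_u r = \rd_v r = -\Omg^2$ yields
\begin{equation*}
\tfrac{1}{2}\rd_v\bb( r(\rd_u\phi)^2 \bb) + \tfrac{1}{2}\rd_u\bb( r(\rd_v\phi)^2 \bb) = \tfrac{\Omg^2}{2}\bb( (\rd_u\phi)^2 + (\rd_v\phi)^2 \bb) + 2\Omg^2 (\rd_u\phi)(\rd_v\phi). \qquad (\ast)
\end{equation*}
Integrating $(\ast)$ over $\calD(\tau_1,\tau_2)$ and applying Stokes' theorem converts the LHS into a signed sum of boundary fluxes of $r(\rd_v\phi)^2$ along the constant-$u$ components of $\partial\calD$ (namely $\Gmm^{(1)}_{\tau_1}$, $\Gmm^{(1)}_{\tau_2}$, $\EH(\tau_1,\tau_2)$) and of $r(\rd_u\phi)^2$ along the constant-$v$ components ($\Gmm^{(2)}_{\tau_1}$, $\Gmm^{(2)}_{\tau_2}$, $\CH(\tau_1,\tau_2)$). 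Since $r \in [r_-, r_+]$ is bounded above and below in the interior, these boundary integrals are comparable to the fluxes of $r^2(\rd_v\phi)^2$ and $r^2(\rd_u\phi)^2$ on the same hypersurfaces, and hence are controlled by (the RHS of) Proposition~\ref{prop:intr:energy}.

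The RHS of the integrated form of $(\ast)$ contains the desired positive bulk plus the indefinite cross term $2\int_\calD \Omg^2(\rd_u\phi)(\rd_v\phi)\, du\, dv$. A direct Cauchy--Schwarz bound only yields $\abs{2\Omg^2 (\rd_u\phi)(\rd_v\phi)} \leq \Omg^2[(\rd_u\phi)^2 + (\rd_v\phi)^2]$, which is insufficient to absorb into the positive bulk $\tfrac{\Omg^2}{2}[(\rd_u\phi)^2 + (\rd_v\phi)^2]$ with room to spare. To handle it, I would exploit the separate identity
\begin{equation*}
\rd_u\bb( r^2(\rd_v\phi)^2 \bb) = 2 r \Omg^2 (\rd_u\phi)(\rd_v\phi) = \rd_v\bb( r^2(\rd_u\phi)^2 \bb), \qquad (\ast\ast)
\end{equation*}
which follows directly from the wave equation and is precisely the conservation law underlying Proposition~\ref{prop:intr:energy}. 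Integrating the first equality in $(\ast\ast)$ over $\calD(\tau_1,\tau_2)$ writes $\int_\calD 2r\Omg^2 (\rd_u\phi)(\rd_v\phi)\, du\, dv$ as a signed sum of fluxes of $r^2(\rd_v\phi)^2$ across the constant-$u$ components of $\partial\calD$, which are again bounded by the RHS of \eqref{eq:intr:ILED} via Proposition~\ref{prop:intr:energy}. Dividing by $r_- > 0$ converts this into the desired bound on $\abs{\int_\calD \Omg^2(\rd_u\phi)(\rd_v\phi)\, du\, dv}$; substituting back into the integrated form of $(\ast)$ and rearranging then produces \eqref{eq:intr:ILED}.

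The main technical care will lie in the bookkeeping of signs and orientations when applying Stokes' theorem to the L-shaped region $\calD(\tau_1,\tau_2)$, which is bounded by six pieces, and in treating the segments at $\CH = \set{v = \infty}$ and $\EH = \set{u = -\infty}$ as the appropriate improper integrals so that they match the null flux integrals appearing in the RHS of Proposition~\ref{prop:intr:energy}.
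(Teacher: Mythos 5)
Your identities $(\ast)$ and $(\ast\ast)$ are both correct, and your bookkeeping of the boundary fluxes (comparable to those of Proposition~\ref{prop:intr:energy} since $r_{-} \leq r \leq r_{+}$ in the interior) is fine. The gap is in the treatment of the cross term. From $(\ast\ast)$ you correctly deduce that $\iint_{\calD} 2 r \Omg^{2} \rd_{u}\phi \, \rd_{v}\phi \, \ud u \, \ud v$ equals an explicit signed sum of boundary fluxes of $r^{2}(\rd_{v}\phi)^{2}$ and is therefore bounded in absolute value by the data. But you cannot then ``divide by $r_{-}$'' to bound $\abs{\iint_{\calD} \Omg^{2} \rd_{u}\phi \, \rd_{v}\phi}$: the integrand $\Omg^{2}\rd_{u}\phi\rd_{v}\phi$ has no definite sign, so a bound on $\abs{\iint r f}$ with $r_{-} \leq r \leq r_{+}$ gives no bound on $\abs{\iint f}$. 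The natural repair --- writing $\frac{1}{r} = \frac{1}{r_{0}} + (\frac{1}{r} - \frac{1}{r_{0}})$ and absorbing the error $\iint \abs{\frac{r_{0}-r}{r_{0}}}\, \Omg^{2}\abs{\rd_{u}\phi}\abs{\rd_{v}\phi}$ into the positive bulk $\frac{\Omg^{2}}{2}[(\rd_{u}\phi)^{2}+(\rd_{v}\phi)^{2}]$ --- only closes when $\sup\abs{r - r_{0}}/r_{0}$ is small, which fails for general subextremal parameters (when $\abs{e}/M$ is small, $r_{-} \ll r_{+}$). Note also that the cross-term bound you are after is, by Cauchy--Schwarz, essentially equivalent to the Proposition itself, so any proof of it must carry the full weight of the argument; deriving it from the weighted version via a pointwise comparison of weights cannot work.

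The paper avoids this entirely by choosing the weight differently: multiplying the wave equation by $r^{N}(\rd_{u}+\rd_{v})\phi$ produces the bulk $\iint r^{N-1}\Omg^{2}\bb( (\frac{N}{2}-1)\big((\rd_{u}\phi)^{2}+(\rd_{v}\phi)^{2}\big) - 2\rd_{u}\phi\rd_{v}\phi \bb)$, and the choice $N=-2$ turns the quadratic form into $-\bb( (\rd_{u}\phi+\rd_{v}\phi)^{2}+(\rd_{u}\phi)^{2}+(\rd_{v}\phi)^{2} \bb)$, which is negative definite; the cross term is thus absorbed pointwise and only boundary terms (controlled by Proposition~\ref{prop:intr:energy}) remain. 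Your multiplier corresponds to $N=1$, for which the form $-\frac{1}{2}(a^{2}+b^{2})-2ab$ is indefinite --- this is the root of the difficulty. If you replace the weight $r$ in $(\ast)$ by $r^{N}$ with any $N<0$, your argument closes directly and the detour through $(\ast\ast)$ becomes unnecessary.
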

\begin{proof} 
Given $N \leq 0$ to be chosen, consider
\begin{align*}
	0 =& \iint_{\calD(\tau_{1}, \tau_{2})}	
			r^{N} (\rd_{u} \phi + \rd_{v} \phi) \bb( \rd_{u} \rd_{v} \phi - \frac{\Omg^{2}}{r} \rd_{u} \phi - \frac{\Omg^{2}}{r} \rd_{v} \phi \bb) \, \ud u \ud v \\
	=& \iint - \frac{N}{2} (\rd_{v} r) r^{N-1} (\rd_{u} \phi)^{2} - \frac{N}{2} (\rd_{u} r) r^{N-1} (\rd_{v} \phi)^{2}\\
		& - \iint r^{N-1} \Omg^{2} \bb( (\rd_{u} \phi)^{2} + (\rd_{v} \phi)^{2} + 2 \rd_{u} \phi \rd_{v} \phi \bb) + (\hbox{boundary terms})\\
	=& \iint r^{N-1} \Omg^{2} \bb( (\frac{N}{2} - 1 ) (\rd_{u} \phi)^{2} + ( \frac{N}{2} - 1 ) (\rd_{v} \phi)^{2} - 2 \rd_{u} \phi \rd_{v} \phi \bb)
		+ (\hbox{boundary terms}).
\end{align*}
Choosing $N = -2$, we see that the space-time integral on the last line equals
\begin{equation*}
	- \iint r^{N-1} \Omg^{2} \bb( (\rd_{u} \phi + \rd_{v} \phi)^{2} + (\rd_{u} \phi)^{2} + (\rd_{v} \phi)^{2} \bb)
\end{equation*}
which is non-negative. Controlling the boundary terms by Proposition~\ref{prop:intr:energy}, which is possible since $r$ is bounded from the above and below in the interior region, we obtain \eqref{eq:intr:ILED}.
\end{proof}

Our next proposition captures the \emph{red-shift} effect along the Cauchy horizon as we approach $i^{+}$. We introduce a smooth cutoff $\chi_{1}(r)$ such that
\begin{equation}\label{chi1.def}
	\chi_{1}(r) = 
	\left\{
\begin{array}{cl}	
	1 & \hbox{ for } r_{-} \leq r \leq r^{(1)} \\
	0 & \hbox{ for } r \geq r^{(1)} + (r^{(1)} - r_{-}),
\end{array}. \right.
\end{equation}
where $r^{(1)} > r_{-}$ is to be specified below.

\begin{proposition} \label{prop:intr:red-shift}
For every $\alp \geq 0$, there exists a constant $C = C(\alp) > 0$ such that for $1 \leq \tau_{1} \leq \tau_{2}$, we have
\begin{equation} \label{eq:intr:red-shift}
\begin{aligned}
& \hskip-2em
	\int_{\Gmm^{(1)}_{\Gmm_{\tau_{2}}}} \chi_{1}(r) \log^{\alp} (\frac{1}{\Omg}) (\rd_{v} \phi)^{2}
	+ \alp \int_{\calD(\tau_{1}, \tau_{2})} \chi_{1}(r) \log^{\alp-1} (\frac{1}{\Omg}) (\rd_{v} \phi)^{2} \\
	\leq & C \bb( \int_{\Gmm^{(1)}_{\tau_{1}}} \big( 1 + \chi_{1}(r) \log^{\alp}(\frac{1}{\Omg}) \big) (\rd_{v} \phi)^{2}
	+ \int_{\Gmm^{(2)}_{\tau_{1}}} (\rd_{u} \phi)^{2} \bb)	.
\end{aligned}
\end{equation}
\end{proposition}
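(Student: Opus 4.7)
\emph{Proof plan.} I would multiply the interior wave equation \eqref{wave.eqn.int} by $\chi_{1}(r) \log^{\alp}(1/\Omg) \, \rd_{v} \phi$ to obtain the pointwise identity
\begin{equation*}
\tfrac{1}{2} \rd_{u} \bigl[ \chi_{1} \log^{\alp}(1/\Omg) (\rd_{v}\phi)^{2} \bigr]
= \tfrac{1}{2} (\rd_{v}\phi)^{2} \, \rd_{u}\bigl( \chi_{1} \log^{\alp}(1/\Omg) \bigr)
+ \chi_{1} \log^{\alp}(1/\Omg) \tfrac{\Omg^{2}}{r} \rd_{v}\phi \, (\rd_{u}\phi + \rd_{v}\phi),
\end{equation*}
then integrate over $\calD(\tau_{1}, \tau_{2})$ (truncating at some $v \leq V$ and sending $V \to \infty$ by monotone convergence at the end). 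The $\rd_{u}$-boundary integrals collapse to $\tfrac{1}{2} \int_{\Gmm^{(1)}_{\tau_{1}}} \chi_{1} \log^{\alp} (\rd_{v}\phi)^{2} - \tfrac{1}{2} \int_{\Gmm^{(1)}_{\tau_{2}}} \chi_{1} \log^{\alp} (\rd_{v}\phi)^{2}$; the event-horizon contribution at $u = -\infty$ vanishes because $\chi_{1}(r_{+}) = 0$ provided $r^{(1)} < r_{+}$.

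\emph{Extracting the red-shift.} Expanding $\rd_{u}(\chi_{1} \log^{\alp}) = (\rd_{u}\chi_{1}) \log^{\alp} + \alp \chi_{1} \log^{\alp-1} \, \rd_{u} \log(1/\Omg)$ and using $\rd_{u} r = -\Omg^{2}$ together with $\chi_{1}'(r) \leq 0$, one sees $\rd_{u} \chi_{1} = -\chi_{1}' \Omg^{2} \geq 0$. A direct computation gives $\rd_{u} \log(1/\Omg) = \tfrac{1}{2} \rd_{r} \Omg^{2}$, which at $r = r_{-}$ equals the surface gravity $\abs{\kpp_{-}} = (r_{+} - r_{-})/(2 r_{-}^{2}) > 0$. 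Choosing $r^{(1)}$ sufficiently close to $r_{-}$, I can arrange $\rd_{r} \Omg^{2} \geq c > 0$ on $\mathrm{supp}(\chi_{1})$, so that \emph{both} contributions to $\rd_{u}(\chi_{1} \log^{\alp})$ are nonnegative --- this is the Cauchy-horizon red-shift as seen in the spacelike-propagation viewpoint. Moving these favourable bulk terms to the left produces
\begin{equation*}
\begin{split}
& \tfrac{1}{2} \int_{\Gmm^{(1)}_{\tau_{2}}} \chi_{1} \log^{\alp}(1/\Omg) (\rd_{v}\phi)^{2}
 + \tfrac{c \alp}{4} \iint_{\calD} \chi_{1} \log^{\alp-1}(1/\Omg) (\rd_{v}\phi)^{2} \\
& \quad \leq \tfrac{1}{2} \int_{\Gmm^{(1)}_{\tau_{1}}} \chi_{1} \log^{\alp}(1/\Omg) (\rd_{v}\phi)^{2}
+ \left| \iint_{\calD} \chi_{1} \log^{\alp}(1/\Omg) \tfrac{\Omg^{2}}{r} \rd_{v}\phi (\rd_{u}\phi + \rd_{v}\phi) \right|.
\end{split}
\end{equation*}

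\emph{Error control and main obstacle.} The crucial elementary fact is that $\Omg^{2} \log^{\bt}(1/\Omg) \to 0$ as $\Omg \to 0^{+}$ for every $\bt$. In the inner region $\set{r \leq r^{(1)}}$, where $\chi_{1} \equiv 1$, I would apply Cauchy-Schwarz together with the bounds $\chi_{1} \log^{\alp}(1/\Omg) \tfrac{\Omg^{2}}{r} \leq \tfrac{1}{r_{-}} \bigl( \Omg^{2} \log(1/\Omg) \bigr) \chi_{1} \log^{\alp-1}(1/\Omg)$ and $\chi_{1} \Omg^{4} \log^{\alp+1}(1/\Omg) \leq C_{\alp} \chi_{1} \Omg^{2}$. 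The $(\rd_{v}\phi)^{2}$ part of the error (for $\alp > 0$) is then absorbed into the red-shift bulk integral $\tfrac{c\alp}{4} \iint \chi_{1} \log^{\alp-1} (\rd_{v}\phi)^{2}$, by shrinking $r^{(1)}$ further so that $\Omg^{2} \log(1/\Omg)$ is arbitrarily small on $\set{r \leq r^{(1)}}$; while the $(\rd_{u}\phi)^{2}$ part is dominated by $C_{\alp} \iint \Omg^{2} (\rd_{u}\phi)^{2}$ and hence controlled by Proposition~\ref{prop:intr:ILED}. In the transition region $\set{r^{(1)} \leq r \leq 2r^{(1)} - r_{-}}$, $\Omg^{2}$ is bounded above \emph{and} below, so the full error there is dominated by $C \iint \Omg^{2}\bigl( (\rd_{u}\phi)^{2} + (\rd_{v}\phi)^{2} \bigr)$, again handled by Proposition~\ref{prop:intr:ILED}. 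The case $\alp = 0$ is easier: the red-shift bulk term vanishes, and a direct AM-GM argument combined with Proposition~\ref{prop:intr:ILED} suffices. The main subtlety is precisely the source-term error $\chi_{1} \log^{\alp}(1/\Omg) \Omg^{2} (\rd_{v}\phi)^{2}$, which is \emph{not} smaller than the red-shift integrand by a full $\Omg^{2}$ factor; extracting a single logarithmic power of smallness from the decomposition $\Omg^{2} \log^{\alp}(1/\Omg) = \bigl( \Omg^{2} \log(1/\Omg) \bigr) \log^{\alp-1}(1/\Omg)$ and calibrating $r^{(1)}$ (depending on $\alp$) so that this factor is small enough for absorption is the key technical step.
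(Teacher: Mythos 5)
Your argument is correct and follows essentially the same route as the paper: the multiplier $\chi_{1}(r)\log^{\alp}(1/\Omg)\,\rd_{v}\phi$, the sign $-\rd_{u}\Omg^{2}\geq c\,\Omg^{2}$ near $r=r_{-}$ (equivalently $\rd_{u}\log(1/\Omg)\geq c/2$ on $\mathrm{supp}\,\chi_{1}$) as the red-shift, and Cauchy--Schwarz combined with $\Omg^{4}\log^{\alp+1}(1/\Omg)\lesssim\Omg^{2}$ and Proposition~\ref{prop:intr:ILED} to dispose of the source term. The only cosmetic differences are that the paper absorbs the $(\rd_{v}\phi)^{2}$ part of the error into the bulk red-shift term by taking the Cauchy--Schwarz parameter $\varepsilon\sim c\alp$ small with $r^{(1)}$ fixed once and for all (so that $\chi_{1}$ is independent of $\alp$, which is slightly cleaner for the later induction in Proposition~\ref{prop:intr:decay}, though your $\alp$-dependent choice of $r^{(1)}$ is also harmless since only finitely many values $\alp\leq\alp_{0}$ are ever used), and that it bounds the $\chi_{1}'$ term via the integrated local energy decay estimate rather than exploiting its favourable sign.
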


\begin{proof} 
When $\alp = 0$, \eqref{eq:intr:red-shift} follows from \eqref{eq:intr:energy}. Hence it suffices to consider the case $\alp > 0$. We begin with
 \begin{align}
	0 =& \iint_{\calD(\tau_{1}, \tau_{2})} \chi_{1}(r) (\log^{\alp} \frac{1}{\Omg}) \rd_{v} \phi 
				\bb( \rd_{u} \rd_{v} \phi - \frac{\Omg^{2}}{r} \rd_{u} \phi - \frac{\Omg^{2}}{r} \rd_{v} \phi \bb) \notag \\
	=& \frac{1}{2} \int_{\Gmm^{(1)}_{\tau_{1}}} \chi_{1}(r) (\log^{\alp} \frac{1}{\Omg}) (\rd_{v} \phi)^{2}
		- \frac{1}{2} \int_{\Gmm^{(1)}_{\tau_{2}}} \chi_{1}(r) (\log^{\alp} \frac{1}{\Omg}) (\rd_{v} \phi)^{2} \label{eq:intr:red-shift:1} \\
	& - \frac{\alp}{4} \iint \chi_{1}(r) (\log^{\alp-1} \frac{1}{\Omg}) \frac{(- \rd_{u} \Omg^{2})}{\Omg^{2}} (\rd_{v} \phi)^{2} \label{eq:intr:red-shift:2}\\
	&  + \frac{1}{2} \iint \Omg^{2} \chi'_{1}(r) (\log^{\alp} \frac{1}{\Omg}) (\rd_{v }\phi)^{2} \label{eq:intr:red-shift:3}\\
	 &  - \iint \chi_{1}(r) \frac{\Omg^{2}}{r} (\log^{\alp} \frac{1}{\Omg}) (\rd_{v} \phi) (\rd_{u} \phi + \rd_{v} \phi). \label{eq:intr:red-shift:4}
\end{align}

Note that
\begin{equation*}
	- \rd_{u} \Omg^{2} = \rd_{u} (1- \frac{2M}{r} + \frac{e^{2}}{r^{2}} ) = - 2 \frac{\Omg^{2}}{r^{2}} (M - \frac{e^{2}}{r}).
\end{equation*}
The crucial observation here is that
\begin{equation*}
	M - \frac{e^{2}}{r_{-}} < 0
\end{equation*}
and hence by choosing $r^{(1)} > r_{-}$ to be sufficiently close to $r_{-}$, so that the support of $\chi_{1}$ is close enough to $\CH$, we have $- \rd_{u} \Omg^{2} \geq c \Omg^{2}$ for some $c > 0$ on the support of $\chi_{1}$ and the space-time integral in \eqref{eq:intr:red-shift:2} has the same sign as the boundary integral on $\Gmm^{(1)}_{\tau_{2}}$ in \eqref{eq:intr:red-shift:1}. 

The term \eqref{eq:intr:red-shift:3} can be controlled using Proposition~\ref{prop:intr:ILED}, as it is safely localized away from $\CH$. Finally, for \eqref{eq:intr:red-shift:4}, we first use the Cauchy-Schwarz inequality to write
\begin{equation*}
	\abs{\eqref{eq:intr:red-shift:4}}
	\leq \frac{\varepsilon}{2} \iint \chi_{1}(r) (\log^{\alp-1} (\frac{1}{\Omg})) (\rd_{v} \phi)^{2} + \frac{1}{2 \varepsilon} \iint \chi_{1}(r) \, \frac{\Omg^{4}}{r^{2} \log (\frac{1}{\Omg})} (\rd_{u} \phi + \rd_{v} \phi)^{2} .
\end{equation*}
Choosing $\varepsilon > 0$ sufficiently small (say $\varepsilon = c\alp/4$), the first term can be bounded by \eqref{eq:intr:red-shift:2}, and the second term can be bounded using Proposition~\ref{prop:intr:ILED}. Combining all these estimates leads to the desired conclusion. \qedhere
\end{proof}

Our next proposition is an analogue of Proposition~\ref{prop:intr:red-shift}, capturing the red-shift effect along the event horizon $\calH^{+}$ as we approach $i^{+}$. As in the previous case, we begin by defining
\begin{equation*}
	\chi_{2}(r) = \left\{
\begin{array}{cl}	
	1 & \hbox{ for } r^{(2)} \leq r \leq r_{+} \\
	0 & \hbox{ for } r \leq r^{(2)} - (r_{+} - r^{(2)}) 
\end{array} \right.
\end{equation*}
where $r^{(2)} < r_{+} $ is to be specified below.

\begin{proposition} \label{prop:intr:red-shift:EH}
For $1 \leq \tau_{1} \leq \tau_{2}$, we have
\begin{equation} \label{eq:intr:red-shift:EH}
	\int_{\Gmm^{(2)}_{\tau_{2}}} \chi_{2}(r) \Omg^{-2}(\rd_{u} \phi)^{2}
	+ \int_{\calD(\tau_{1}, \tau_{2})} \chi_{2}(r) \Omg^{-2} (\rd_{v} \phi)^{2} 
	\leq  C \bb( \int_{\Gmm^{(1)}_{\tau_{1}}} (\rd_{v} \phi)^{2}
	+ \int_{\Gmm^{(2)}_{\tau_{1}}} \Omg^{-2} (\rd_{u} \phi)^{2} \bb)
\end{equation}
for some $C>0$ depending only on the parameters $M$ and $e$ of the spacetime.
\end{proposition}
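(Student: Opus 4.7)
The approach is to mirror the proof of Proposition~\ref{prop:intr:red-shift}, now adapting the red-shift multiplier to capture the red-shift effect at the event horizon $\EH$ rather than at the Cauchy horizon $\CH$. Exchanging the roles of $u$ and $v$, of $\chi_{1}$ and $\chi_{2}$, and of $\CH$ and $\EH$, the natural multiplier to apply is $\chi_{2}(r)\Omg^{-2}\rd_{u}\phi$. The weight $\Omg^{-2}$ is chosen in place of the logarithmic weight $\log^{\alp}(1/\Omg)$ used at $\CH$ in order to match the \emph{exponential} surface-gravity scaling $\Omg^{2} \sim e^{\kpp_{+} u}$ at $\EH$ in the interior (as $u \to -\infty$ for fixed $v$). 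Concretely, I would compute
\begin{equation*}
	0 = \iint_{\calD(\tau_{1}, \tau_{2})} \chi_{2}(r) \Omg^{-2} \rd_{u} \phi \bb( \rd_{u} \rd_{v} \phi - \frac{\Omg^{2}}{r}(\rd_{u} \phi + \rd_{v} \phi) \bb) \, \ud u \, \ud v,
\end{equation*}
rewrite the leading piece as $\frac{1}{2} \chi_{2} \Omg^{-2} \rd_{v}((\rd_{u} \phi)^{2})$, and integrate by parts in $v$.

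First I would collect the boundary contributions on the $v$-boundaries of $\calD(\tau_{1}, \tau_{2})$, namely $\Gmm^{(2)}_{\tau_{1}}$, $\Gmm^{(2)}_{\tau_{2}}$, and $\CH(\tau_{1}, \tau_{2})$. Exactly as in Proposition~\ref{prop:intr:red-shift}, the boundary term on $\CH$ vanishes by fixing $r^{(2)}$ so close to $r_{+}$ that $\chi_{2}(r_{-}) = 0$. The boundary term on $\Gmm^{(2)}_{\tau_{2}}$ furnishes the first term on the left-hand side of \eqref{eq:intr:red-shift:EH}, while the boundary term on $\Gmm^{(2)}_{\tau_{1}}$ is controlled pointwise by $\int_{\Gmm^{(2)}_{\tau_{1}}}\Omg^{-2}(\rd_{u}\phi)^{2}$ since $0 \leq \chi_{2} \leq 1$.

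The heart of the argument is the red-shift identity
\begin{equation*}
- \rd_{v}(\chi_{2} \Omg^{-2}) = \chi_{2}'(r) + \chi_{2}(r) \Omg^{-2} \bb( \frac{2M}{r^{2}} - \frac{2 e^{2}}{r^{3}} \bb),
\end{equation*}
together with the observation, fully parallel to the key observation driving Proposition~\ref{prop:intr:red-shift}, that $M - \frac{e^{2}}{r_{+}} = M - r_{-} > 0$ (using $r_{+} r_{-} = e^{2}$), and hence $\frac{2M}{r_{+}^{2}} - \frac{2e^{2}}{r_{+}^{3}} > 0$. This is precisely the red-shift effect along $\EH$ read off from inside the interior. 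By choosing $r^{(2)}$ sufficiently close to $r_{+}$, this positive coefficient dominates on $\mathrm{supp}\, \chi_{2}$ and produces a bulk integrand of the desired weight $\chi_{2} \Omg^{-2}$ and favorable sign. The $\chi_{2}'$-error, supported away from $\EH$, is controlled by the integrated local energy decay estimate of Proposition~\ref{prop:intr:ILED}.

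Finally, the cross terms generated by the source $-\frac{\Omg^{2}}{r}(\rd_{u}\phi + \rd_{v}\phi)$ in the wave equation are managed as follows. The diagonal contribution $-\iint \chi_{2} r^{-1} (\rd_{u} \phi)^{2}$ is absorbed into the favorable bulk term after choosing $r^{(2)}$ even closer to $r_{+}$, since $\chi_{2} r^{-1}$ is pointwise dominated by a small multiple of $\chi_{2} \Omg^{-2}$ on $\mathrm{supp}\, \chi_{2}$. The mixed term is split by Cauchy-Schwarz as
\begin{equation*}
\abs{\iint \chi_{2} r^{-1} \rd_{u} \phi \, \rd_{v} \phi} \leq \veps \iint \chi_{2} \Omg^{-2} (\rd_{u} \phi)^{2} + C_{\veps} \iint \Omg^{2} (\rd_{v} \phi)^{2},
\end{equation*}
with the first term absorbed for $\veps > 0$ small and the second bounded by Propositions~\ref{prop:intr:energy} and \ref{prop:intr:ILED} in terms of the claimed right-hand side. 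The main obstacle I anticipate is the simultaneous calibration of constants: the diagonal term, the $\chi_{2}'$ error, and the $\veps$-small part of the Cauchy-Schwarz must all fit comfortably inside the red-shift bulk term, which is achieved by a single sufficiently small choice of $\veps$ together with a sufficiently close choice of $r^{(2)}$ to $r_{+}$, in direct analogy with the treatment of \eqref{eq:intr:red-shift:3} and \eqref{eq:intr:red-shift:4} in the proof of Proposition~\ref{prop:intr:red-shift}.
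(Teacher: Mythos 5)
Your proposal is correct and follows essentially the same route as the paper: the identical multiplier $\chi_{2}(r)\Omg^{-2}\rd_{u}\phi$, the same integration by parts in $v$ with the $\CH$ boundary term annihilated by the support of $\chi_{2}$, the same red-shift sign observation $M - e^{2}/r_{+}>0$, and the same treatment of the error terms via Proposition~\ref{prop:intr:ILED} and Cauchy--Schwarz with absorption into the red-shift bulk term. One remark: as your computation (and the paper's own proof) shows, the favorable bulk term obtained is $\chi_{2}(r)\Omg^{-2}(\rd_{u}\phi)^{2}$ rather than the $(\rd_{v}\phi)^{2}$ appearing in the displayed statement of \eqref{eq:intr:red-shift:EH}, which is consistent with how the estimate is invoked in Proposition~\ref{prop:intr:red-shift:final}.
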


\begin{proof} 
 We begin with
 \begin{align}
	0 =& \iint_{\calD(\tau_{1}, \tau_{2})} \chi_{2}(r) \Omg^{-2} \rd_{u} \phi 
				\bb( \rd_{u} \rd_{v} \phi - \frac{\Omg^{2}}{r} \rd_{u} \phi - \frac{\Omg^{2}}{r} \rd_{v} \phi \bb) \notag \\
	=& 
		-\frac{1}{2} \int_{\Gmm^{(2)}_{\tau_{1}}} \chi_{2}(r) \Omg^{-2} (\rd_{u} \phi)^{2}
		+ \frac{1}{2} \int_{\Gmm^{(2)}_{\tau_{2}}} \chi_{2}(r) \Omg^{-2} (\rd_{u} \phi)^{2} 
				\label{eq:intr:red-shift:EH:1} \\
	& + \frac{1}{2} \iint \chi_{2}(r) \frac{\rd_{v} \Omg^{2}}{\Omg^{4}} (\rd_{u} \phi)^{2} \label{eq:intr:red-shift:EH:2}\\
	&  + \frac{1}{2} \iint \chi'_{2}(r) (\rd_{u}\phi)^{2} \label{eq:intr:red-shift:EH:3}\\
	 &  - \iint \frac{1}{r} \chi_{2}(r) (\rd_{u} \phi) (\rd_{u} \phi + \rd_{v} \phi). \label{eq:intr:red-shift:EH:4}
\end{align}

In this case, we see that if $r^{(2)}$ is chosen sufficiently close to $r_{+}$, then we have
\begin{equation*}
	\rd_{v} \Omg^{2} = - \rd_{v} (1 - \frac{2M}{r} + \frac{e^{2}}{r^{2}} ) = 2 \frac{\Omg^{2}}{r^{2}} (M - \frac{e^{2}}{r^{2}}) \geq c \Omg^{2}
\end{equation*}
 on the support of $\chi_{2}$. This inequality follows from the observation that $M - \frac{e^{2}}{r_{+}} > 0$.

The rest of the proof proceeds similarly to that of Proposition~\ref{prop:intr:red-shift}. Indeed, \eqref{eq:intr:red-shift:EH:3} can be bounded by Proposition~\ref{prop:intr:ILED}, and \eqref{eq:intr:red-shift:EH:4} may be estimated by the Cauchy-Schwarz inequality, \eqref{eq:intr:red-shift:EH:2} and Proposition~\ref{prop:intr:ILED}. \qedhere 
\end{proof}

We now state a proposition which combines all the bounds that have been proved so far.
\begin{proposition} \label{prop:intr:red-shift:final}
For every $\alp \geq 0$, there exists a constant $C = C(\alp) > 0$ such that for $1 \leq \tau_{1} \leq \tau_{2}$, we have
\begin{equation} \label{eq:intr:red-shift:final}
\begin{aligned}
& \hskip-2em
	\int_{\Gmm_{\tau_{2}}^{(1)}} \bb( 1 + \chi_{1}(r) \log^{\alp}(\frac{1}{\Omg}) \bb) (\rd_{v} \phi)^{2} + \int_{\Gmm_{\tau_{2}}^{(2)}} \Omg^{-2} (\rd_{u} \phi)^{2} 
	 + \int_{\CH(\tau_{1}, \tau_{2})}  (\rd_{u} \phi)^{2} + \int_{\EH(\tau_{1}, \tau_{2})} (\rd_{v} \phi)^{2}  \\
	&	+ \iint_{\calD(\tau_{1}, \tau_{2})}  \bb( \Omg^{2} + \alp \chi_{1}(r) \log^{\alp-1}(\frac{1}{\Omg}) \bb) (\rd_{v} \phi)^{2}  
		+	\bb( \Omg^{2} + \chi_{2}(r) \Omg^{-2} \bb) (\rd_{u} \phi)^{2}  \\
	& \leq C \bb( \int_{\Gmm^{(1)}_{\tau_{1}}} \bb( 1 + \chi_{1}(r) \log^{\alp}(\frac{1}{\Omg}) \bb) (\rd_{v} \phi)^{2}
				+ \int_{\Gmm^{(2)}_{\tau_{1}}} \Omg^{-2} (\rd_{u} \phi)^{2} \bb).
\end{aligned}
\end{equation}
\end{proposition}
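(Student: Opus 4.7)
The plan is simply to add the four preceding estimates, verifying that the right-hand sides of each one are controlled by the right-hand side of \eqref{eq:intr:red-shift:final}. The logic is that Proposition~\ref{prop:intr:energy} supplies the unweighted boundary terms on $\Gamma^{(1)}_{\tau_2}$, $\Gamma^{(2)}_{\tau_2}$ and the fluxes on $\CH(\tau_1,\tau_2)$, $\EH(\tau_1,\tau_2)$; Proposition~\ref{prop:intr:ILED} supplies the spacetime integral with the $\Omega^2$ weight for both $(\rd_u\phi)^2$ and $(\rd_v\phi)^2$; Proposition~\ref{prop:intr:red-shift} supplies the $\chi_1(r)\log^\alpha(1/\Omega)$-weighted term on $\Gamma^{(1)}_{\tau_2}$ and the $\alpha\chi_1(r)\log^{\alpha-1}(1/\Omega)$-weighted spacetime term; and Proposition~\ref{prop:intr:red-shift:EH} supplies the $\Omega^{-2}$-weighted term on $\Gamma^{(2)}_{\tau_2}$ and the $\chi_2(r)\Omega^{-2}$-weighted spacetime term. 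Since $r$ is bounded above and below on the interior region $\calD(\tau_1,\tau_2)$, the weights $r^2$ in Proposition~\ref{prop:intr:energy} are equivalent to constant weights.

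I would first check that each of the four RHS appearing in \eqref{eq:intr:energy}, \eqref{eq:intr:ILED}, \eqref{eq:intr:red-shift}, \eqref{eq:intr:red-shift:EH} is dominated (up to a multiplicative constant depending only on $\alpha$) by the RHS of \eqref{eq:intr:red-shift:final}. The only two that are not immediate from the trivial pointwise inequalities $1 \leq 1 + \chi_1(r)\log^\alpha(1/\Omega)$ and $1 \leq \Omega^{-2}$ on $\Gamma^{(1)}_{\tau_1}$ and $\Gamma^{(2)}_{\tau_1}$ respectively are the RHS of Proposition~\ref{prop:intr:red-shift:EH}, which contains $\int_{\Gamma^{(1)}_{\tau_1}} (\rd_v\phi)^2$, and the RHS of Proposition~\ref{prop:intr:red-shift}, which contains $\int_{\Gamma^{(2)}_{\tau_1}}(\rd_u\phi)^2$; both of these are controlled by the weighted versions.

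Adding the four estimates with appropriate constants then yields \eqref{eq:intr:red-shift:final}. There is essentially no obstacle to carrying out this step: it is a purely algebraic combination of already-proved inequalities, and the weights have been arranged precisely so that the boundary and spacetime terms assemble cleanly.
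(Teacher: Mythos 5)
Your proposal is correct and is precisely the argument the paper intends: Proposition~\ref{prop:intr:red-shift:final} is stated there without proof, introduced only as ``a proposition which combines all the bounds that have been proved so far,'' i.e.\ one sums Propositions~\ref{prop:intr:energy}, \ref{prop:intr:ILED}, \ref{prop:intr:red-shift} and \ref{prop:intr:red-shift:EH}, using that $r$ (and hence the weight $r^{2}$) is bounded above and below in the interior and that each of the four right-hand sides is dominated, up to a constant depending only on $M$, $e$ and $\alpha$, by the right-hand side of \eqref{eq:intr:red-shift:final}. The only step worth making explicit beyond what you wrote is that the \emph{uncut} $\Omega^{-2}$-weighted term on $\Gamma^{(2)}_{\tau_{2}}$ is obtained by splitting $\Omega^{-2}=\chi_{2}\Omega^{-2}+(1-\chi_{2})\Omega^{-2}$, the first piece coming from Proposition~\ref{prop:intr:red-shift:EH} and the second from Proposition~\ref{prop:intr:energy} since $\Omega^{-2}$ is bounded on $\Gamma^{(2)}_{\tau_{2}}\cap\{\chi_{2}<1\}$, where $r$ stays away from $r_{+}$.
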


Iterating Proposition~\ref{prop:intr:red-shift:final}, we obtain a decay statement for $\rd_{v} \phi$ on $\EH$.
\begin{proposition} \label{prop:intr:decay}
If $\phi$ is a solution to the wave equation with spherically symmetric data satisfying \eqref{data.1}-\eqref{data.2} and moreover \eqref{contra.2} holds for some integer $\alp_0> 7$, then for $\tau \geq 1$ we have
\begin{equation} \label{eq:intr:decay}
\begin{aligned}
& \hskip-2em
	\int_{\EH(\tau, \infty)} (\rd_{v} \phi)^{2} 
	& \leq C \tau^{- \alp_{0}}
\end{aligned}
\end{equation}
for some $C=C(A,D,\alp_0)>0$.
\end{proposition}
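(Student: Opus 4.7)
The plan is to establish by induction on $j \in \set{0, 1, \dots, \alp_0}$ the decay estimate
\begin{equation*}
	F_{\alp_0 - j}(\tau) := \int_{\Gmm^{(1)}_\tau} \bigl(1 + \chi_1(r) \log^{\alp_0 - j}(\f{1}{\Omg})\bigr)(\rd_v \phi)^2 + \int_{\Gmm^{(2)}_\tau} \Omg^{-2}(\rd_u \phi)^2 \lesssim \tau^{-j}, \qquad \tau \geq 1.
\end{equation*}
Once this is established at $j = \alp_0$, Proposition~\ref{prop:intr:red-shift:final} applied with $\alp = 0$ on $\calD(\tau, \tau')$ and passed to the limit $\tau' \to \infty$ yields $\int_{\EH(\tau, \infty)}(\rd_v \phi)^2 \leq C F_0(\tau) \lesssim \tau^{-\alp_0}$, which is \eqref{eq:intr:decay}.

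The base case $j = 0$ reduces to checking $F_{\alp_0}(1) < \infty$: the $\Gmm^{(1)}_1$ piece is bounded by $A$ via the hypothesis \eqref{contra.2}, and the $\Gmm^{(2)}_1$ piece is controlled using \eqref{data.2} together with $\rd_u r = -\Omg^2$ in the interior, which gives $\int_{\Gmm^{(2)}_1} \Omg^{-2}(\rd_u \phi)^2\, du \leq D^2 \int_{\Gmm^{(2)}_1} \Omg^2\, du = D^2 (r(-1,1) - r_+) < \infty$. Boundedness of $F_{\alp_0}(\tau)$ for all $\tau \geq 1$ then follows from Proposition~\ref{prop:intr:red-shift:final} with $\tau_1 = 1$.

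For the inductive step, suppose $F_{\alp_0 - j + 1}(\tau) \lesssim \tau^{-(j-1)}$ for every $\tau \geq 1$. Given $\tau \geq 2$, apply Proposition~\ref{prop:intr:red-shift:final} with $\alp = \alp_0 - j + 1$ on $\calD(\tau/2, \tau)$: the right-hand side is $\lesssim F_{\alp_0 - j + 1}(\tau/2) \lesssim \tau^{-(j-1)}$, so each of the space-time integrals $\iint_{\calD} \chi_1 \log^{\alp_0 - j}(\f{1}{\Omg})(\rd_v\phi)^2$, $\iint_{\calD} \chi_2 \Omg^{-2}(\rd_u\phi)^2$, and $\iint_{\calD} \Omg^2 \bigl((\rd_v \phi)^2 + (\rd_u \phi)^2\bigr)$ is $\lesssim \tau^{-(j-1)}$. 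One verifies that the support of $\chi_1$ inside $\calD(\tau/2, \tau)$ is foliated by the hypersurfaces $\Gmm^{(1)}_{\tau'}$, $\tau' \in [\tau/2, \tau]$ (because if $(u,v) \in \calD(\tau/2, \tau)$ with $u < -\tau$, then by definition $v < \tau$, so $u + v < 0$ and $\chi_1(r) = 0$), and likewise the support of $\chi_2$ is foliated by $\Gmm^{(2)}_{\tau'}$. A Markov-type pigeonhole argument over $\tau' \in [\tau/2, \tau]$ then produces a common $\tau^* \in [\tau/2, \tau]$ at which all four associated boundary slice integrals (two on $\Gmm^{(1)}_{\tau^*}$, with weights $\chi_1 \log^{\alp_0 - j}(\f{1}{\Omg})$ and $\Omg^2$, and two on $\Gmm^{(2)}_{\tau^*}$, with weights $\chi_2 \Omg^{-2}$ and $\Omg^2$) are simultaneously $\lesssim \tau^{-j}$.

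It remains to upgrade these four partial slice bounds at $\tau^*$ to the full estimate $F_{\alp_0 - j}(\tau^*) \lesssim \tau^{-j}$. The $\chi_1 \log^{\alp_0 - j}$- and $\chi_2 \Omg^{-2}$-weighted pieces of $F_{\alp_0 - j}(\tau^*)$ are controlled directly; for the unweighted $\int_{\Gmm^{(1)}_{\tau^*}}(\rd_v \phi)^2$ and the full $\int_{\Gmm^{(2)}_{\tau^*}}\Omg^{-2}(\rd_u \phi)^2$, decompose each hypersurface into a corner piece (where $|u + v|$ stays bounded, so $r$ is bounded away from both $r_-$ and $r_+$ and $\Omg^2 \geq c > 0$; here the pigeonhole $\Omg^2$-weighted bound controls the unweighted integrand) and a far piece (where $\chi_i \equiv 1$ and the companion weight $\log^{\alp_0 - j}(\f{1}{\Omg})$ or $\Omg^{-2}$ is bounded below by a positive constant, so the weighted pigeonhole bound dominates). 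This yields $F_{\alp_0 - j}(\tau^*) \lesssim \tau^{-j}$, and monotonicity via Proposition~\ref{prop:intr:red-shift:final} applied from $\tau^*$ propagates the estimate to every $\tau \geq \tau^*$ with the same rate, which, combined with a trivial bound on $[1,2]$, closes the induction. The main technical point is precisely this last decomposition, which exploits the complementary roles of the three weight families $\chi_1 \log^{\alp}(\f{1}{\Omg})$, $\chi_2 \Omg^{-2}$, and $\Omg^2$ appearing in Proposition~\ref{prop:intr:red-shift:final}.
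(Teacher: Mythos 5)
Your proof is correct and follows essentially the same route as the paper's: an induction on the decay exponent, a Markov/pigeonhole argument on the bulk terms of Proposition~\ref{prop:intr:red-shift:final} to gain one power of $\tau$ on a good slice $\tau^{*} \in [\tau/2, \tau]$, and a re-application of Proposition~\ref{prop:intr:red-shift:final} from $\tau^{*}$ to propagate the bound; the differences (dyadic intervals $[2^{k}, 2^{k+1}]$ versus $[\tau/2,\tau]$, and your more explicit corner/far decomposition for passing between the degenerate pigeonholed weights and the nondegenerate flux $F_{\alp_{0}-j}$, which the paper compresses into a single ``observe that'' sentence) are cosmetic. The only slip is the sign in the base case, where $r(-1,1) - r_{+}$ should read $r_{+} - r(-1,1)$ since $\rd_{u} r = -\Omg^{2} < 0$ in the interior.
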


\begin{proof} 
We will prove the following statement for $n = \alp_{0}$ by an induction on $n$:
\begin{equation} \label{eq:intr:decay:indHyp}
\begin{aligned}
& \hskip-2em
	\tau^{n} \int_{\EH(\tau, \infty)} (\rd_{v} \phi)^{2} 
	+ \sum_{j=0}^{n} (\alp_{0} - j) \tau^{j} \iint_{\calD(\tau, \infty)} \chi_{1}(r) \log^{\alp_{0}-j-1} (\frac{1}{\Omg}) (\rd_{v} \phi)^{2} \\
	& + \tau^{n} \iint_{\calD(\tau, \infty)} \chi_{2}(r) \Omg^{-2} (\rd_{u} \phi)^{2} 
	+ \tau^{n} \iint_{\calD(\tau, \infty)} \Omg^{2} \bb( (\rd_{v} \phi)^{2} + (\rd_{u} \phi)^{2} \bb)
	\leq \calI_{n}
\end{aligned}
\end{equation}
where $\calI_{n}$ is a positive constant depending only on $A$, $D$ and $n$. Observe that the case $n = 0$ is an immediate consequence of Proposition~\ref{prop:intr:red-shift:final}. 

Assume, for the purpose of induction, that \eqref{eq:intr:decay:indHyp} holds for $n = 0, 1, \ldots, n_{0}-1$, where $n_{0}$ is an integer such that $1 \leq n_{0} \leq \alp_{0}$. Then for every $k \in \bbN$, by the pigeonhole principle, there exists $\tau_{k} \in [2^{k}, 2^{k+1}]$ such that
\begin{equation} \label{eq:intr:decay:pigeonhole}
\begin{aligned}
& \hskip-2em
	\int_{\Gmm^{(1)}_{\tau_{k}}} \bb( \big( \Omg^{2} + (\alp_{0} - n_{0} + 1) \chi_{1}(r) \log^{\alp_{0} - n_{0}} (\frac{1}{\Omg}) \big) (\rd_{v} \phi)^{2} 
	+ \Omg^{2} (\rd_{u} \phi)^{2} \bb) \\
	& + \int_{\Gmm^{(2)}_{\tau_{k}}} \bb( (\Omg^{2} + \chi_{2}(r) \Omg^{-2}) (\rd_{u} \phi)^{2} + \Omg^{2} (\rd_{v} \phi)^{2} \bb)
	\leq C \calI_{n_{0}-1} \tau_{k}^{-n_{0}},
\end{aligned}
\end{equation}
for some absolute constant $C > 0$. Observe that the right-hand side of \eqref{eq:intr:red-shift:final} for $\alp = \alp_{0} - n_{0}$ and $\tau_{1} = \tau_{k}$ is bounded by a constant multiple of the left-hand side of \eqref{eq:intr:decay:pigeonhole}, where the constant may depend on $n_{0}$ but is independent of $\tau_{k}$.
By appealing to Proposition~\ref{prop:intr:red-shift:final}, we conclude conclude that \eqref{eq:intr:decay:indHyp} holds with $n = n_{0}$, $\tau = 2^{k}$ and $(\tau, \infty)$ replaced by the interval $[\tau_{k}, 4 \tau_{k}]$. Note, in particular, that $[2^{k+1}, 2^{k+2}] \subseteq [\tau_{k}, 4 \tau_{k}]$. Summing up these bounds for $k \geq 0$, and using the trivial bound from Proposition~\ref{prop:intr:red-shift:final} for the interval $\tau \in [1, 2]$, we obtain \eqref{eq:intr:decay:indHyp} for $n = n_{0}$, as desired. \qedhere
\end{proof}

We now conclude the proof of Theorem \ref{final.blow.up.step}:
\begin{proof}[Proof of Theorem \ref{final.blow.up.step}]
Using Proposition \ref{prop:intr:decay}, we in particular have 
$$\int_{\mathcal H^+(\tau,2\tau)} \tau^{\alp_0} (\rd_v\phi)^2 \leq C.$$
We apply this estimate for a sequence $\tau_k=2^k$ to get
\begin{equation*}
\begin{split}
\int_{\EH\cap\{v\geq 1\}} \f{v^{\alp_0}}{\log^2(1+v)}(\rd_v\phi)^2\leq &C\sum_{k=0}^\infty \int_{\mathcal H^+(\tau_k,\tau_{k+1})} \f{\tau^{\alp_0}}{(k+1)^2}(\rd_v\phi)^2\leq C\sum_{k=0}^{\infty}\f{1}{(k+1)^2}<\infty.
\end{split}
\end{equation*}
We have thus achieved \eqref{contra.2.goal} and conclude the proof of Theorem \ref{final.blow.up.step}.
\qedhere
\end{proof}

\section{Construction of blow up solution}\label{sec.construction}
In this section, we work in the exterior of the black hole.
The purpose of this section is to demonstrate the existence of a linear wave with smooth and compactly supported data on $\uC_{1} \cup C_{-\infty}$ such that $\mathfrak{L} \neq 0$, i.e., we prove Theorem \ref{construction.existence.thm}.

Consider a smooth function $\chi$ that satisfies 
\begin{equation} \label{eq:pert:chi}
	\mathrm{supp} \, \chi \subseteq [0, 1], \quad \sup \abs{\chi} \leq 1, \quad \int_0^1 \chi(u) \ud u = 0,
\end{equation}
as well as
\begin{equation} \label{eq:pert:chi:lb}
	\int_{0}^{1} \int_{0}^{u} \chi(u') \ud u' \ud u \geq \frac{1}{8}.
\end{equation}
Indeed, such a function $\chi$ can be easily constructed by mollifying a step function which takes the values $1$ and $-1$ in $(\eps, 1/2)$ and $(1/2, 1-\eps)$, respectively, and equals zero everywhere else.

Let $\phi$ be the solution to the linear wave equation with data on $\uC_{1} \cup C_{-\Ufree}$ given by
\begin{equation} \label{eq:pert:id}
	\rd_{u}(r \phi)(u, 1) = \chi(u + \Ufree), \quad
	\phi(-\Ufree, v) = 0.
\end{equation}
where $\Ufree > 0$ is a large constant to be chosen\footnote{Unlike the other parts of the paper, where $U_{0}$ is a fixed number, in this section $U_{0}$ is a large parameter used in the construction.}.

Observe that, thanks to the condition $\int \chi = 0$, we have
\begin{equation*}
	\phi(u, 1) = 0 \quad \hbox{ for } u \not \in [-\Ufree, -\Ufree + 1],
\end{equation*}
and hence $\phi$ has compact support on the initial curve $\uC_{1} \cup C_{-\Ufree}$.

The main result of this section reads as follows.
\begin{theorem} \label{thm:pert}
For $\Ufree > 0$ sufficiently large, we have $\mathfrak{L} \neq 0$ for the solution $\phi$ to the linear wave equation \eqref{eq:pert:wave} with initial data \eqref{eq:pert:id}.
\end{theorem}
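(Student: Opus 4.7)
The strategy is to show that $\mathfrak{L}$ converges to a nonzero limit as $\Ufree \to \infty$, by isolating a dominant ``Minkowski-like'' contribution from a correction that vanishes in this limit. Since $\phi \equiv 0$ on $C_{-\Ufree}$, we have $\rd_v(r\phi)(-\Ufree, v) = 0$ for all $v \geq 1$, so the first term in the definition of $\mathfrak{L}$ vanishes identically:
\begin{equation*}
\mathfrak{L} = -\int_{-\Ufree}^{\infty} 2M \Phi(u) \, \ud u.
\end{equation*}
It thus suffices to prove that this integral is bounded away from zero for $\Ufree$ sufficiently large.

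Integrating the initial data $\rd_u(r\phi)(u, 1) = \chi(u+\Ufree)$ with boundary condition $r\phi(-\Ufree, 1) = 0$ gives $r\phi(u, 1) = \int_0^{u+\Ufree} \chi(s) \, \ud s$, which is supported in $u \in [-\Ufree, -\Ufree+1]$ by the condition $\int_0^1 \chi = 0$ in \eqref{eq:pert:chi}. A direct computation combined with \eqref{eq:pert:chi:lb} yields
\begin{equation*}
\int_{-\Ufree}^{\infty} 2M \, r\phi(u, 1) \, \ud u = 2M \int_0^1 \int_0^{\tilde u} \chi(s) \, \ud s \, \ud \tilde u \geq \frac{M}{4}.
\end{equation*}
This is the $\Ufree$-independent ``main term'' in the expansion of $\mathfrak{L}$.

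Setting $\xi(u) := \Phi(u) - r\phi(u, 1) = \int_1^\infty \rd_v(r\phi)(u, v) \, \ud v$, the task reduces to bounding $|\int_{-\Ufree}^\infty 2M \xi \, \ud u|$ by a quantity tending to zero as $\Ufree \to \infty$. Using $\rd_v(r\phi)(-\Ufree, v) = 0$ and integrating \eqref{re.wave.eqn.ext} in $u$ gives
\begin{equation*}
\rd_v(r\phi)(u, v) = - \int_{-\Ufree}^{u} \frac{2(M - e^2/r)\Omg^2}{r^3} \, r\phi(u', v) \, \ud u',
\end{equation*}
so that $\xi(u) = -\int_1^\infty \int_{-\Ufree}^u \frac{2(M-e^2/r)\Omg^2}{r^3} \, r\phi \, \ud u' \, \ud v$. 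Since $\xi(-\Ufree) = 0$ and $\xi(u) = O(u^{-2})$ as $u \to \infty$ (by Theorem~\ref{thm:DRPL}), integration by parts in $u$ yields
\begin{equation*}
\int_{-\Ufree}^{\infty} \xi(u) \, \ud u = \int_{-\Ufree}^{\infty} (u + \Ufree) \int_1^\infty \frac{2(M - e^2/r)\Omg^2}{r^3} \, r\phi(u, v) \, \ud v \, \ud u.
\end{equation*}
Changing variables $v \mapsto r$ at fixed $u$ via $\ud v = \ud r/\Omg^2$, the innermost integral becomes $\int_{r(u,1)}^\infty \frac{2(M - e^2/r)}{r^3} r\phi \, \ud r$, which is controlled in absolute value by $C \sup_{v \geq 1}|r\phi(u, v)| / r(u, 1)^2$.

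The main obstacle is then to control $\sup_v |r\phi(u, v)|$ uniformly in $\Ufree$ with enough $u$-decay to make the remaining integral small. For $u \geq 0$, where $r(u,1)$ is bounded above by a universal constant, Theorem~\ref{thm:DRPL} provides $|r\phi(u, v)| \leq C(1+u)^{-2}$ with a data-size constant bounded uniformly in $\Ufree$ (as can be checked directly from $\rd_u(r\phi)(u, 1) = \chi(u+\Ufree)$ and the triviality of data on $C_{-\Ufree}$), making the contribution to the $u$-integral $O(\Ufree^{-1})$. For $u \in [-\Ufree, 0]$, standard decay estimates yield no $u$-decay, but smallness comes from the geometry: the initial data is localized where $r \sim \Ufree$, so the Duhamel source $\frac{1}{r^3}r\phi$ gains a factor $O(\Ufree^{-3})$, and a bootstrap/Gronwall argument using the wave equation propagates a bound of the form $|r\phi(u, v)| \leq C/\Ufree$ throughout the region $u \in [-\Ufree, 0]$, yielding an $O(1)\cdot \Ufree^{-1}$ contribution to the $u$-integral after combining with the $(u+\Ufree)/r(u,1)^2$ weight. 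Assembling these estimates produces $|\int 2M \xi \, \ud u| \leq C\Ufree^{-\alpha}$ for some $\alpha > 0$, so that $|\mathfrak{L}| \geq M/4 - C\Ufree^{-\alpha} > 0$ for $\Ufree$ sufficiently large.
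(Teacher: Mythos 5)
Your reduction of $\mathfrak{L}$ to $-\int 2M\Phi$ and your computation of the main term $\int 2M\, r\phi(u,1)\,\ud u \geq M/4$ are correct and parallel the paper's treatment of its leading integral $\mathrm{I}_1$. The proof breaks down, however, in the error estimate for $\xi(u) = \int_1^\infty \rd_v(r\phi)(u,v)\,\ud v$. After integrating by parts you must bound $\int_{-\Ufree}^\infty (u+\Ufree)\,\abs{\xi'(u)}\,\ud u$, and your own estimate gives $\abs{\xi'(u)} \lesssim \sup_v\abs{r\phi(u,\cdot)}/r(u,1)^2$. In the region $u \gtrsim 1$ one has $r(u,1) \to r_+$, so the denominator contributes nothing, and the best available decay for $r\phi$ — namely $(1+u_+)^{-2}$ from Theorem~\ref{thm:DRPL} with $\omg = 3$ — yields an integrand comparable to $(u+\Ufree)(1+u)^{-2} \sim u^{-1}$. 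The integral therefore diverges logarithmically, and no smallness of the constant can rescue it: the claimed $O(\Ufree^{-1})$ bound for the $u \geq 0$ contribution is false. To close this step you would need pointwise decay strictly faster than $u^{-2}$, which is not provided by Price's law. (The original integral $\int \xi\,\ud u$ does converge, but only because $\xi$ itself is $O(u^{-2})$; the integration by parts converts it into a conditionally convergent expression that cannot be estimated in absolute value.)

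A second, related gap is your assertion that Theorem~\ref{thm:DRPL} applies ``with a data-size constant bounded uniformly in $\Ufree$.'' The constant in \eqref{eq:DRPL:conc} depends on $U_0$, the $u$-extent of the initial outgoing hypersurface, and in your application this is $\Ufree \to \infty$; uniformity is exactly what is \emph{not} guaranteed. This is why the paper introduces the fixed threshold $\Ufixed$ (depending only on $M$, $e$), proves by direct bootstrap (Propositions~\ref{prop:pert:-U} and~\ref{prop:pert:-U0}) that the data induced on $C_{-\Ufixed}$ has size $O(\Ufree^{-1})$, and only then invokes Theorem~\ref{thm:DRPL} from $C_{-\Ufixed}$ with $\omg = 5/2$. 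The paper also avoids your divergence problem entirely by never introducing the weight $(u+\Ufree)$: it estimates $\lim_{v\to\infty}\int r\phi(u,v)\,\ud u$ directly, splitting into $[-\Ufree,-\Ufree+1]$, $[-\Ufree+1,-\Ufixed]$ and $[-\Ufixed, u_{R_0^\ast}(v)]$, where the last piece is $O(\int (1+u_+)^{-3/2}\,\ud u)$ times a constant tending to zero. I would recommend abandoning the integration by parts and adopting a decomposition of this type.
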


We begin by fixing a number $\Ufixed > 0$, which is chosen so that
\begin{equation} \label{eq:pert:cond4U0}
\begin{gathered}
	\frac{1}{\sqrt{2}} r \leq 1 - u \leq \sqrt{2} \, r \quad \hbox{ on } \uC_{1} \cap \set{u \leq - \Ufixed}, \\
	\Ufixed \geq 100 M.
\end{gathered}
\end{equation}
We emphasize that $\Ufixed$ is chosen depending only on the parameters $M$, $e$ of the Reissner-Nordstr\"om background. In what follows, $\Ufree > \Ufixed$ will be chosen sufficiently large so that the data for $\phi$ on $C_{-\Ufixed}$ is small enough in an appropriate sense. This procedure effectively reduces the problem of computing $\mathfrak{L}$ on the whole null infinity $\calI^{+}$ to a finite $u$ region $\calI^{+} \cap \set{-\Ufree \leq u \leq -\Ufixed}$. In fact, we will show that $r \phi$, hence the integrand in the definition of  $\mathfrak{L}$, is largest in the region $\set{-\Ufree \leq u \leq -\Ufree+1}$, thanks to our choice of initial data. A graphical summary of our strategy is given in Figure~\ref{fig:pert}.

\begin{figure}[h]
\begin{center}
\def\svgwidth{220px}
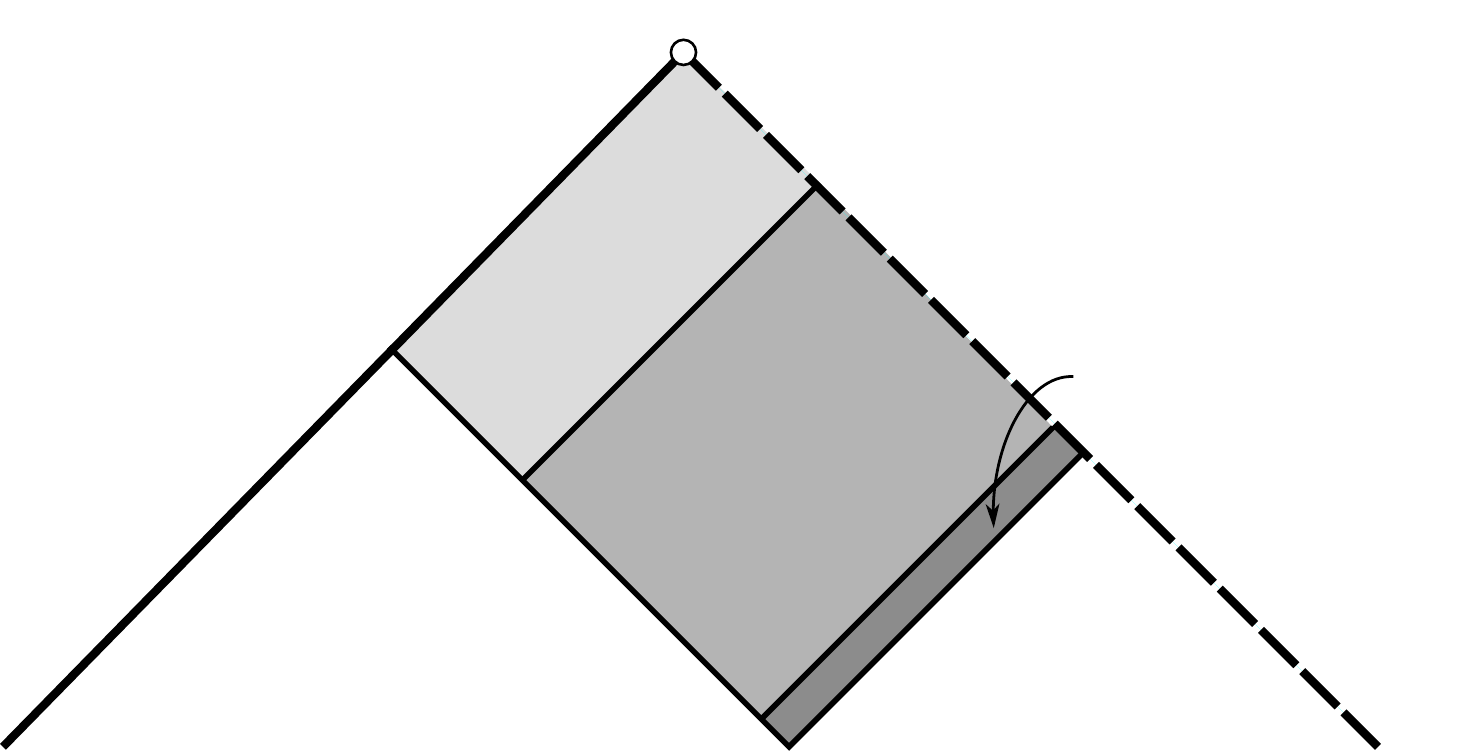 
\caption{} \label{fig:pert}
\end{center}
\end{figure}

Before we begin the proof, a word on the form of the wave equation is in order. In the region where $r$ is large, it is convenient to use the following formulation of the linear wave equation (c.f. Section \ref{sec.lower.bd.R}):
\begin{equation} \label{eq:pert:wave}
	\rd_{u} \rd_{v} (r \phi) = - \frac{2 (M - \frac{e^{2}}{r}) \Omg^2}{r^{2}} \phi.
\end{equation}
If we work in the exterior of the black hole, i.e., where $r > r_{+}$, then $M - \frac{e^{2}}{r} > 0$, and hence $\abs{M - \frac{e^{2}}{r}} \leq M$. In the rest of this section, this observation will be used frequently.

We begin by analyzing the behavior of $\phi$ in the region $\set{-\Ufree \leq u \leq -\Ufree+1}$.
\begin{proposition} \label{prop:pert:-U}
Let $\Ufree > \Ufixed$ be sufficiently large, depending on $M$ and $e$. Then for $u \in [-\Ufree, -\Ufree+1]$, we have
\begin{align}
	\abs{\rd_{u} (r \phi)(u, v) - \chi(u + \Ufree)} \leq& \frac{4M}{\Ufree^{2}}, \label{eq:pert:-U:durphi}\\
	\abs{r \phi}\leq & \frac{3}{2}. \label{eq:pert:-U:rphi}
\end{align}
Moreover, on the null curve $C_{-\Ufree+1}$, we have
\begin{align} 
	\abs{r \phi(-\Ufree+1, v)} \leq & \frac{4M}{\Ufree^{2}}, \label{eq:pert:-U+1:rphi} \\
	\abs{\rd_{v} (r \phi)(-\Ufree+1, v)} \leq & \frac{3M}{r^{3}(-\Ufree+1, v)}. \label{eq:pert:-U+1:dvrphi}
\end{align}
\end{proposition}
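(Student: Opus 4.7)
The key structural observation is that on the strip $\mathcal{S} := \{(u, v) : -\Ufree \leq u \leq -\Ufree + 1, \, v \geq 1\}$, the area-radius $r$ is at least $\Ufree/\sqrt{2}$. Indeed, condition \eqref{eq:pert:cond4U0} forces $r(u,1) \geq (1-u)/\sqrt{2} \geq \Ufree/\sqrt{2}$ for $u \in [-\Ufree, -\Ufree+1]$, and since $\rd_v r = \Omg^2 > 0$ in the exterior, this lower bound propagates to all of $\mathcal{S}$. Hence every negative power of $r$ is small on $\mathcal{S}$, and this is the sole source of smallness in the estimates below.

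The plan is to run a bootstrap argument on $|r\phi|$. The initial data give $r\phi(u, 1) = \int_0^{u+\Ufree} \chi(s) \, ds$, so $|r\phi| \leq 1$ on $\uC_{1} \cap \mathcal{S}$; combined with $r\phi \equiv 0$ on $C_{-\Ufree}$, this lets me take $|r\phi| \leq 2$ as the bootstrap assumption on $\mathcal{S}$ (to be improved to $|r\phi| \leq 3/2$). Using \eqref{eq:pert:wave} together with $|M - e^2/r| \leq M$ (which holds for $r > r_+$), the bootstrap turns into the pointwise bound
\begin{equation*}
	|\rd_u \rd_v (r\phi)| \leq \frac{4 M \Omg^2}{r^3}.
\end{equation*}

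I would then prove \eqref{eq:pert:-U:durphi} by integrating this bound along constant-$u$ curves from $v=1$ and changing variables via $\Omg^2 \, dv = dr$, which eliminates the $\Omg^2$ factor and reduces the integral to $\int_{r(u,1)}^{\infty} r^{-3} \, dr \leq 2/\Ufree^{2}$; the numerical factor $4M$ then lands exactly. Next, \eqref{eq:pert:-U:rphi} follows by integrating $\rd_u(r\phi)$ in $u$ from $-\Ufree$, where $r\phi$ vanishes, using $|\chi| \leq 1$ and \eqref{eq:pert:-U:durphi}; this gives $|r\phi| \leq 1 + 4M/\Ufree^2 \leq 3/2$ for $\Ufree$ large, closing the bootstrap. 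For \eqref{eq:pert:-U+1:rphi} the crucial input is the cancellation $\int_0^1 \chi = 0$ from \eqref{eq:pert:chi}; writing
\begin{equation*}
  r\phi(-\Ufree+1, v) = \int_0^1 \chi(s) \, ds + \int_{-\Ufree}^{-\Ufree+1} \bigl[\rd_u(r\phi)(u', v) - \chi(u' + \Ufree)\bigr] \, du',
\end{equation*}
the first term vanishes and the second is controlled by \eqref{eq:pert:-U:durphi}. Finally, for \eqref{eq:pert:-U+1:dvrphi}, since $r\phi$ vanishes identically on $C_{-\Ufree}$ we have $\rd_v(r\phi)(-\Ufree, v) = 0$; integrating \eqref{eq:pert:wave} in $u$ along a constant-$v$ line from $-\Ufree$ to $-\Ufree+1$, using $\Omg^2 \leq 1$, the improved bound $|\phi| \leq 3/(2r)$ from \eqref{eq:pert:-U:rphi}, and the monotonicity $r(u', v) \geq r(-\Ufree+1, v)$ (coming from $\rd_u r = -\Omg^2 < 0$), lets me pull $r^{-3}(-\Ufree+1, v)$ out of the integral and obtain the claimed bound.

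The main subtlety will be bookkeeping the numerical constants so that the bootstrap actually closes ($2 \rightsquigarrow 3/2$) and the factors in \eqref{eq:pert:-U:durphi}--\eqref{eq:pert:-U+1:dvrphi} come out exactly as stated; there is no dynamical obstacle beyond this, as all four estimates are powered by the single geometric input $r \geq \Ufree/\sqrt{2}$ and the transversal cancellation $\int_0^1 \chi = 0$.
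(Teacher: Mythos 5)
Your proposal is correct and follows essentially the same route as the paper: a bootstrap on $\abs{r\phi}$ over the strip $\set{-\Ufree \leq u \leq -\Ufree+1}$, integration of \eqref{eq:pert:wave} in $v$ with the change of variables $\Omg^{2}\,\ud v = \ud r$ and the geometric input $r(u,1) \geq \Ufree/\sqrt{2}$ from \eqref{eq:pert:cond4U0}, the cancellation $\int_{0}^{1}\chi = 0$ for \eqref{eq:pert:-U+1:rphi}, and integration in $u$ with $\rd_{u} r = -\Omg^{2} < 0$ for \eqref{eq:pert:-U+1:dvrphi} (the paper bootstraps both \eqref{eq:pert:-U:durphi} and \eqref{eq:pert:-U:rphi} directly rather than only $\abs{r\phi}\leq 2$, but both versions close). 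One arithmetic nit: $\int_{r(u,1)}^{\infty} r^{-3}\,\ud r = \tfrac{1}{2}r(u,1)^{-2} \leq \Ufree^{-2}$, not $2\Ufree^{-2}$ as you wrote --- it is this sharper value that makes the prefactor come out to exactly $4M/\Ufree^{2}$.
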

\begin{proof} 
We begin by closing the bounds \eqref{eq:pert:-U:durphi} and \eqref{eq:pert:-U:rphi}. Assume, for the purpose of a bootstrap argument, that we have \eqref{eq:pert:-U:durphi} and \eqref{eq:pert:-U:rphi} when $v \in [1, V]$ for some $V > 1$ and $u \in [-\Ufree, -\Ufree+1]$, where $\Ufree > 0$ is to be determined. 

Integrating \eqref{eq:pert:-U:durphi} in $u$ from the initial condition $r \phi = 0$ on $C_{-\Ufree}$ and using \eqref{eq:pert:chi}, we obtain
\begin{equation} \label{eq:pert:-U:rphi:imp}
	\abs{r \phi} \leq 1 + \frac{4M}{\Ufree^{2}},
\end{equation}
which improves upon \eqref{eq:pert:-U:rphi} for, say, $\Ufree^{2} \geq 10 M$. Hence it remains to establish an improvement over \eqref{eq:pert:-U:durphi}. Integrating \eqref{eq:pert:wave} in $v$, we have
\begin{equation*}
	\abs{\rd_{u}(r \phi)(u, v) - \chi(u+\Ufree)} \leq  \int_{1}^{v} \frac{2M \Omg^2}{r^{3}}\,  \abs{r \phi} (u, v') \, \ud v'.
\end{equation*}
Recalling that $\rd_{v} r = \Omg^2$, we can estimate the right-hand side uniformly in $v \in [1, \infty)$ as follows:
\begin{align*}
	\sup_{v \in [1, \infty)} \int_{1}^{v} \frac{2M \Omg^2}{r^{3}}\, \abs{ r \phi }(u, v') \, \ud v'
	\leq & \frac{3}{2} \int_{r(u, 1)}^{\infty} \frac{2M}{r^{3}} \, \ud r 
	= \frac{3}{2} \frac{M}{r(u, 1)^{2}}
\end{align*}
As $\Ufree > \Ufixed$, we have $r(u, 1)^{2} \geq \frac{1}{2} (1-u)^{2} \geq \frac{1}{2} \Ufree^{2}$ for $u \in [-\Ufree, -\Ufree+1]$ by \eqref{eq:pert:cond4U0}. Hence we obtain the following improvement over \eqref{eq:pert:-U:durphi}:
\begin{equation} \label{eq:pert:-U:durphi:imp}
	\abs{\rd_{u}(r \phi)(u, v) - \chi(u+\Ufree)} \leq \frac{3M}{\Ufree^{2}}.
\end{equation}
for every $u \in [-\Ufree, -\Ufree+1]$ and $v \in [1, V]$. By a standard bootstrap argument, \eqref{eq:pert:-U:durphi} and \eqref{eq:pert:-U:rphi} follow for all $u \in [-\Ufree, -\Ufree+1]$ and $v \in [1, \infty)$. 

We are ready to conclude the proof. The bound \eqref{eq:pert:-U+1:rphi} follows from integrating \eqref{eq:pert:-U:durphi} on $u \in [-\Ufree, -\Ufree+1]$, exploiting the fact that $\int_{-\Ufree}^{-\Ufree+1} \chi(u+\Ufree) \, \ud u = 0$. Finally, to prove \eqref{eq:pert:-U+1:dvrphi}, we integrate \eqref{eq:pert:wave} in $u$ to write
\begin{align*}
	\rd_{v}(r \phi)(-\Ufree+1, v) = - \int_{-\Ufree}^{-\Ufree+1} \frac{2(M-\frac{e^{2}}{r}) \Omg^2}{r^{3}} \, r \phi(u, v) \, \ud u
\end{align*}
Using \eqref{eq:pert:-U:rphi} and the fact that $\Omg^2 \leq 1$, the desired estimate follows. \qedhere
\end{proof}

Next, we study the behavior of $\phi$ in the region $\set{-\Ufree +1 \leq u \leq -\Ufixed}$. In particular, we show that the data for $\phi$ on the curve $C_{-\Ufixed}$ can be made small by taking $\Ufree$ to be sufficiently large.
\begin{proposition} \label{prop:pert:-U0}
Let $\Ufree > \Ufixed$ be sufficiently large, depending on $M$ and $e$. Then for $u \in [-\Ufree+1, - \Ufixed]$ and $v \in [1, \infty)$, we have 
\begin{equation} \label{eq:pert:-U0:rphi}
	\abs{r \phi(u,v)} \leq  \frac{5 M}{\Ufree^{2}}. 
\end{equation}
Moreover, for $(u, v) \in \set{- \Ufixed} \times [1, \infty)$, we have
\begin{equation} \label{eq:pert:-U0:dvrphi}
	\abs{\rd_{v} (r \phi)(u, v)} \leq \frac{3M}{r^{3}(-\Ufree+1, v)} + \frac{10 M^{2}}{\Ufree r^{3}(u, v)}.
\end{equation}
\end{proposition}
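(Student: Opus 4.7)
Write $\psi := r\phi$, so that \eqref{eq:pert:wave} reads $\rd_u\rd_v\psi = -2(M-e^2/r)\Omg^2 r^{-3}\psi$. The plan is to establish the pointwise bound \eqref{eq:pert:-U0:rphi} by a bootstrap argument on the rectangle $R := \set{-\Ufree+1\leq u\leq -\Ufixed,\ v\geq 1}$, and then to deduce \eqref{eq:pert:-U0:dvrphi} by integrating the wave equation in $u'$ at fixed $v$.

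The crucial input for the bootstrap is that $\chi$ is supported in $[0,1]$, so $\rd_{u'}\psi(u',1) = \chi(u'+\Ufree) = 0$ for $u'\in(-\Ufree+1,-\Ufixed]$; combined with \eqref{eq:pert:-U+1:rphi} this gives $|\psi|\leq 4M/\Ufree^2$ on both boundary segments $\uC_1\cap R$ (where $\psi$ is $u$-independent) and $C_{-\Ufree+1}\cap R$. Assume now that $|\psi|\leq 5M/\Ufree^2$ on $R$ as a bootstrap hypothesis. Integrating the wave equation in $v'$ from $v'=1$ and using $\Omg^2=\rd_{v'}r$ together with $|M-e^2/r|\leq M$ produces, after a change of variables to $r$,
\begin{align*}
|\rd_{u'}\psi(u',v)| \leq \frac{5M^2}{\Ufree^2\, r^2(u',1)} \leq \frac{10M^2}{\Ufree^2(1-u')^2},
\end{align*}
where the second inequality uses the geometric hypothesis \eqref{eq:pert:cond4U0}. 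Integrating in $u'$ from $-\Ufree+1$ to $u\leq -\Ufixed$ and combining with $|\psi(-\Ufree+1,v)|\leq 4M/\Ufree^2$ from \eqref{eq:pert:-U+1:rphi} yields
\begin{align*}
|\psi(u,v)| \leq \frac{4M}{\Ufree^2} + \frac{10M^2}{\Ufree^2(1+\Ufixed)} < \frac{5M}{\Ufree^2},
\end{align*}
the last inequality following from $\Ufixed\geq 100M$; this strictly improves the bootstrap bound and establishes \eqref{eq:pert:-U0:rphi}.

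For \eqref{eq:pert:-U0:dvrphi}, integrate the wave equation in $u'$ from $-\Ufree+1$ to $-\Ufixed$ at fixed $v$:
\begin{align*}
\rd_v\psi(-\Ufixed, v) = \rd_v\psi(-\Ufree+1, v) - \int_{-\Ufree+1}^{-\Ufixed}\frac{2(M-e^2/r)\Omg^2}{r^3}\psi(u',v)\, du'.
\end{align*}
The boundary term is bounded by $3M/r^3(-\Ufree+1,v)$ via \eqref{eq:pert:-U+1:dvrphi}, matching the first summand. For the bulk integral I use $|M-e^2/r|\leq M$, $\Omg^2\leq 1$, the just-proved \eqref{eq:pert:-U0:rphi}, and the monotonicity $r(u',v)\geq r(-\Ufixed,v)$ for $u'\leq -\Ufixed$ (since $\rd_{u'}r=-\Omg^2<0$ in the exterior), which reduces the integral to a length-of-interval factor $\Ufree-\Ufixed-1\leq\Ufree$ times $10M^2/(\Ufree^2 r^3(-\Ufixed,v))$, producing exactly the second summand $10M^2/(\Ufree\, r^3(-\Ufixed,v))$.

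The only real obstacle is the accounting of constants in the bootstrap: the target constant $5$ must strictly dominate the boundary contribution ($4$) plus a bulk error of size $O(M/\Ufixed)$, and this is made possible precisely by the numerical choice $\Ufixed\geq 100M$ in \eqref{eq:pert:cond4U0}. Once the $\psi$ bound is in hand on $R$, the bound on $\rd_v\psi$ at $u=-\Ufixed$ is a direct computation.
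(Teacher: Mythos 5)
Your proof is correct and follows essentially the same route as the paper: both integrate the wave equation in $v$ to control $\rd_u(r\phi)$ in terms of the sup of $r\phi$ over the region, then integrate in $u$ from the data on $C_{-\Ufree+1}$ and close via the smallness of $\int_{-\Ufree+1}^{-\Ufixed} 2M(1-u')^{-2}\,\ud u'$ guaranteed by $\Ufixed \geq 100M$; the only cosmetic difference is that you run an explicit bootstrap with the target constant $5M/\Ufree^2$ whereas the paper phrases the same estimate as an absorption inequality $\sup\abs{r\phi} \leq \frac{1}{50}\sup\abs{r\phi} + \frac{4M}{\Ufree^2}$. The derivation of \eqref{eq:pert:-U0:dvrphi}, including the monotonicity of $r$ in $u$ and the length-of-interval bound $\Ufree - \Ufixed - 1 \leq \Ufree$, matches the paper's argument exactly.
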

\begin{proof} 
Fix $V > 1$ and consider $(u, v) \in [-\Ufree+1, -\Ufixed] \times [1, V]$. Integrating \eqref{eq:pert:wave} on $v' \in [1, v]$ and using \eqref{eq:pert:cond4U0}, we have
\begin{equation*}
	\abs{\rd_{u} (r \phi)(u, v)} \leq \int_{1}^{v} \frac{2M\Omg^2}{r^{3}} (r \phi)(u, v') \, \ud v'
	\leq \sup \abs{r \phi} \frac{M}{r^{2}(u, 1)}
	\leq  \sup \abs{r \phi} \frac{2 M}{(1-u)^{2}},
\end{equation*}
where the supremum is taken over $u \in [-\Ufree+1, -\Ufixed]$ and $v \in [1, V]$. Integrating from \eqref{eq:pert:-U+1:rphi}, we then obtain
\begin{align*}
	\abs{r \phi(u,v)} 
	\leq  \bb( \int_{-\Ufree+1}^{-\Ufixed} \frac{2M}{(1-u')^{2}} \, \ud u' \bb) \sup \abs{r \phi} + \frac{4M}{\Ufree^{2}} 
	\leq \frac{1}{50} \sup\abs{r \phi} + \frac{4M}{\Ufree^{2}}.
\end{align*}
where we used $\Ufixed \geq 100 M$ in \eqref{eq:pert:cond4U0} for the last inequality. Hence, by a standard bootstrap argument, we obtain \eqref{eq:pert:-U0:rphi} for $u \in [-\Ufree+1, -\Ufixed]$ and $v \in [1, \infty)$.

To prove \eqref{eq:pert:-U0:dvrphi}, we begin by integrating \eqref{eq:pert:wave} on $u \in [-\Ufree+1, -\Ufixed]$ to obtain
\begin{align*}
	\rd_{v}(r \phi)(-\Ufixed, v) = \rd_{v}(r \phi)(-\Ufree+1, v) - \int_{-\Ufree+1}^{-\Ufixed} \frac{2(M-\frac{e^{2}}{r}) \Omg^2}{r^{3}} \, r \phi(u, v) \, \ud u.
\end{align*}
Using Proposition~\ref{prop:pert:-U} for the first term and \eqref{eq:pert:-U0:rphi} for the second, \eqref{eq:pert:-U0:dvrphi} follows. \qedhere
\end{proof}

Finally, in order to obtain Theorem~\ref{thm:pert}, we put together the previous estimates and the decay theorem of Dafermos-Rodnianski (Theorem \ref{thm:DRPL}):

\begin{proof} [Proof of Theorem~\ref{thm:pert}]
By our choice of initial data, we have $\lim_{v \to \infty} r^{3} \rd_{v}(r \phi)(-\Ufree, v) = 0$. 
In order to show that $\mathfrak{L} \neq 0$, it suffices by the dominated convergence theorem to show that
 \begin{equation} \label{eq:pert:goal}
	\lim_{v \to \infty} \int_{-\Ufree}^{\infty} r \phi(u, v) \, \ud u \neq 0.
\end{equation}
We control the integral on the left hand side for large but finite $v$, restricted to the region $r\geq R_0$, where $R_0$ is as given in Theorem \ref{thm:DRPL}.  We split this integral into three integrals:
\begin{align*}
	\int_{-\Ufree}^{u_{R_0^*}(v)} r \phi(u, v) \, \ud u  
	&=  \int_{-\Ufree}^{-\Ufree+1} r \phi(u, v) \, \ud u + \int_{-\Ufree+1}^{-\Ufixed} r \phi(u, v) \, \ud u + \int_{-\Ufixed}^{u_{R_0^*}(v)} r \phi(u, v) \, \ud u \\
	&=: \mathrm{I}_{1}(v) + \mathrm{I}_{2}(v) + \mathrm{I}_{3}(v) \, .
\end{align*}
Here, we use the convention as before the $u_{R_0^*}(v)$ denotes the unique $u$ value such that $r(u,v)=R_0$. The integral $\mathrm{I}_{1}$ constitutes the main contribution, whereas $\mathrm{I}_{2}$ and $\mathrm{I}_{3}$ are errors. As emphasized at the beginning of this section, we have the freedom to choose $\Ufree$ as large as we need to reduce the size of the error terms, but $\Ufixed$ \emph{needs to remain fixed} in order to apply Theorem~\ref{thm:DRPL}.
 
We begin with the contribution of $\mathrm{I}_{1}(v)$. Since $r \phi(-\Ufree, 1) = 0$, we may write
\begin{equation*}
	\mathrm{I}_{1}(v) = \int_{-\Ufree}^{-\Ufree+1} \int_{-\Ufree}^{u} \rd_{u} (r \phi) (u', v) \, \ud u' \, \ud u.
\end{equation*} 
Then by \eqref{eq:pert:-U:durphi} in Proposition~\ref{prop:pert:-U}, we have
\begin{equation*} 
	\abs{\mathrm{I_{1}}(v) - \int_{-\Ufree}^{-\Ufree+1}\int_{-\Ufree}^{u}\chi(u'+\Ufree) \, \ud u' \, \ud u} \leq \int_{-\Ufree}^{-\Ufree+1} \int_{-\Ufree}^{u} \frac{4M}{\Ufree^{2}} \, \ud u' \, \ud u
	\leq \frac{2M}{\Ufree^{2}}.
\end{equation*}
By \eqref{eq:pert:chi:lb}, it therefore follows that we have the lower bound
\begin{equation} \label{eq:pert:I1}
	 \mathrm{I_{1}}(v) \geq \frac{1}{8} - \frac{2M}{\Ufree^{2}} 
\end{equation}
for every $v \in [1, \infty)$. Observe that the error $\f{2M}{\Ufree^2}$ can be made as small as we wish by taking $\Ufree$ sufficiently large.

Next, by \eqref{eq:pert:-U0:rphi} in Proposition~\ref{prop:pert:-U0}, for $v \in [1, \infty)$ we immediately have
\begin{equation} \label{eq:pert:I2}
	\abs{\mathrm{I_{2}}(v)} \leq \int_{-\Ufree+1}^{-\Ufixed} \frac{5M}{\Ufree^{2}} \, \ud u \leq \frac{5M}{\Ufree},
\end{equation}
which again can be made arbitrarily small by taking $\Ufree \to \infty$.

Finally, by \eqref{eq:pert:-U0:rphi}, \eqref{eq:pert:-U0:dvrphi} and the fact that $\rd_{v} r = \Omg^2$ is bounded from below by a positive constant on $C_{-\Ufixed} \cap \set{v \geq 1}$ (depending only on $\Ufixed$), for any $\omg < 3$ we have
\begin{equation*}
	\sup_{v \in [1, \infty)} r^{2} \abs{\frac{\rd_{v} \phi}{\rd_{v} r}}(-\Ufixed, v) + r^{\omg} \abs{\frac{\rd_{v}(r \phi)}{\rd_{v} r}} (-\Ufixed, v)  \to 0 \quad \hbox{ as } \Ufree \to \infty.
\end{equation*}
In other words, $\phi$ satisfies \eqref{eq:DRPL:hyp:C} for any $1 < \omg < 3$ with $A = A(\Ufree) \to 0$ as $\Ufree \to \infty$. Moreover, \eqref{eq:DRPL:hyp:uC} is satisfied for any $A > 0$, as $\rd_{u}(r \phi) = \phi = 0$ on $\uC_{1} \cap \set{u \geq -\Ufixed}$. Using Theorem~\ref{thm:DRPL} with $\omg = 5/2$, it follows that we have
\begin{equation} \label{eq:pert:I3}
	\lim_{v \to \infty} \abs{\mathrm{I_{3}}(v)} \to 0 \quad \hbox{ as } \Ufree \to \infty.
\end{equation}

Putting together \eqref{eq:pert:I1}, \eqref{eq:pert:I2} and \eqref{eq:pert:I3}, it follows that $\mathfrak{L} = \lim_{v \to \infty} \sum_{j=1}^{3} \mathrm{I}_{j} > \frac{1}{16}$ if $\Ufree$ is chosen sufficiently large, which concludes the proof. \qedhere
\end{proof}

\bibliographystyle{hplain}
\bibliography{SCClinear}
\end{document}